\theoremstyle{plain}
\newtheorem{theorem}{Theorem}[section]
\newtheorem{lemma}[theorem]{Lemma}
\newtheorem{proposition}[theorem]{Proposition}
\theoremstyle{definition}
\newtheorem{assumpA}{Assumption}
\newtheorem{example}{\bf Example\/}
\theoremstyle{remark}
\newtheorem{remark}[theorem]{\bf Remark}
\newtheorem{notation}[theorem]{\bf Notation}
\def\beq{\begin{eqnarray*}}\def\eeq{\end{eqnarray*}}
\def\bq{\begin{equation}}\def\eq{\end{equation}}
\newcommand{\N}{\mathbb{N}}
\newcommand{\Z}{\mathbb{Z}}
\newcommand{\R}{\mathbb{R}}
\newcommand{\C}{\mathbb{C}}
\newcommand{\eps}{\varepsilon}
\newcommand{\dd}{\mathrm{d}}
\newcommand{\Dom}{\mathcal{D}}
\newcommand{\arcsinh}{\mathop{\mathrm{arcsinh}}\nolimits}
\newcommand{\arccosh}{\mathop{\mathrm{arccosh}}\nolimits}
\newenvironment{psmallmatrix}
  {\left(\begin{smallmatrix}}
  {\end{smallmatrix}\right)}
\definecolor{DarkGreen}{rgb}{0,0.5,0.1} % David
\newcommand\soutD{\bgroup\markoverwith
{\textcolor{DarkGreen}{\rule[.5ex]{2pt}{1pt}}}\ULon}
\newcommand{\Hm}[1]{\leavevmode{\marginpar{\tiny%
$\hbox to 0mm{\hspace*{-0.5mm}$\leftarrow$\hss}%
\vcenter{\vrule depth 0.1mm height 0.1mm width \the\marginparwidth}%
\hbox to
0mm{\hss$\rightarrow$\hspace*{-0.5mm}}$\\\relax\raggedright #1}}}
\begin{document}
%----------------------------------------------------------------------------------------------------------------------
%											TITLE
%----------------------------------------------------------------------------------------------------------------------
\title[]{Pseudomodes for non-self-adjoint Dirac operators}
\author{David Krej\v{c}i\v{r}\'{i}k}
\address[David Krej\v{c}i\v{r}\'{i}k]{Department of Mathematics, Faculty of Nuclear Sciences and Physical Engineering,
Czech Technical University in Prague, Trojanova 13, 12000 Prague 2, Czech Republic}
\email{david.krejcirik@fjfi.cvut.cz}

\author{Tho Nguyen Duc}
\address[Tho Nguyen Duc]{Department of Mathematics, Faculty of Nuclear Sciences and Physical Engineering,
Czech Technical University in Prague, Trojanova 13, 12000 Prague 2, Czech Republic}
\email{nguyed16@fjfi.cvut.cz}

%----------------------------------------------------------------------------------------------------------------------
%											ABSTRACT
%----------------------------------------------------------------------------------------------------------------------

\begin{abstract}
Depending on the behaviour of 
the complex-valued electromagnetic potential 
in the neighbourhood of infinity, 
pseudomodes of one-dimensional Dirac operators
corresponding  to large pseudoeigenvalues are constructed.
This is a first systematic approach which goes beyond the standard semi-classical setting. Furthermore, this approach results in substantial
progress in achieving optimal conditions and conclusions as
well as in covering a wide class of previously inaccessible
potentials, including superexponential ones.
\end{abstract}
\maketitle
%\tableofcontents
%----------------------------------------------------------------------------------------------------------------------
%											INTRODUCTION
%----------------------------------------------------------------------------------------------------------------------

%----------------------%
\section{Introduction}
%----------------------%
%
\subsection{Motivations}
 Spectral theory of self-adjoint operators  
has exhibited an enormous development 
since the discovery of quantum mechanics 
at the beginning of the last century
and it can be regarded as well understood in many respects by now.
Recent years have brought new motivations
for considering non-self-adjoint operators, too,
notably due to the unconventional concept 
of representing physical observables
by operators which are merely similar to
self-adjoint ones \cite{Bagarello-book}. 
This brand-new construct, 
nicknamed \emph{quasi-self-adjoint} quantum mechanics,
remained overlooked for almost hundred years
and has led to challenging mathematical problems 
which cannot be handled by standard tools
and the theory is by far not complete.

It has been accepted by mathematicians as well as physicists
that an appropriate characteristic 
which conveniently describes the pathological 
properties of non-self-adjoint operators is the notion 
of pseudospectra
\cite{LM05,Davies_2007,Helffer}.
Given a positive number~$\eps$,
the $\eps$-\emph{pseudospectrum} $\sigma_\eps(H)$ 
of any operator~$H$ in a complex Hilbert space
is defined as its spectrum $\sigma(H)$ enriched by 
those complex points~$\lambda$ 
(called \emph{pseudoeigenvalues})
for which there exists a vector $\Psi \in \Dom(H)$
(called \emph{pseudoeigenvector} or \emph{pseudomode}
or \emph{quasimode})
such that 
\begin{equation}\label{Mo Eq Pseudo Eigen}
\Vert (H-\lambda)\Psi \Vert 
< \varepsilon \, \Vert \Psi\Vert\,.
\end{equation}
This notion is trivial for self-adjoint
(or, more generally, normal) operators,
because then $\sigma_\eps(H)$ merely coincides  
with the $\eps$-tubular neighbourhood of the spectrum.
If~$H$ is non-normal, however, 
the pseudospectrum $\sigma_\eps(H)$ 
can contain points which lie outside
(in fact, possibly ``very far'' from)
the spectrum $\sigma(H)$.
It turns out that it is the pseudospectrum 
which determines the decay of the semigroup 
generated by~$H$ as well as  
the behaviour of the spectrum of~$H$ under small perturbations.

The usefulness of pseudospectra in 
quasi-self-adjoint quantum mechanics
was pointed out by Siegl and one of the present authors in~\cite{SK}.
Based on the semiclassical construction of pseudomodes
in Davies' pioneering work \cite{Da99}
(see also~\cite{Zworski_2001} and~\cite{NJM04}),
we proved an abrupt lack of quasi-self-adjointness 
for the prominent \emph{imaginary cubic oscillator} of
\cite{Bender-Boettcher_1998},
which stayed at the advent of the so-called
$\mathcal{PT}$-\emph{symmetric} quantum mechanics.
Many other Schr\"odinger operators 
with complex-valued potentials
were included in the subsequent works 
\cite{Hitrik-Sjostrand-Viola_2013,
Henry,Henry_2014,Novak_2015,KSTV,HK,KS19,Arnaiz}. 

In this series of works,
the paper~\cite{KS19} of Siegl and one of the present authors
is exceptional in that it develops 
a \emph{direct construction of large-energy pseudomodes}
(\emph{i.e.}\ those corresponding 
in~\eqref{Mo Eq Pseudo Eigen}
to $|\lambda|\to+\infty$ with $\eps_\lambda \to 0$), 
which does not require
the passage through semiclassical Schr\"odinger operators.
In fact, the semiclassical setting follows 
as a special consequence of~\cite{KS19}.
Moreover, the newly developed, general approach of~\cite{KS19} 
enables one to cover previously inaccessible potentials
such as the exponential and discontinuous ones.
What is more, the technique is applicable to other models 
such as the damped wave equation~\cite{AS20}.
 
The objective of the present paper is to extend 
the method developed in~\cite{KS19} 
to \emph{relativistic quantum mechanics} 
by considering Dirac instead of Schr\"odinger operators. 
This extension is both 
mathematically challenging and physically interesting, 
because the Dirac equation is not scalar 
and the external electromagnetic perturbations 
are allowed to be fundamentally matrix-valued.
We also remark that the present model is additionally relevant 
in the context of \emph{graphene} materials.
What is more, we substantially generalise the method of~\cite{KS19}  
on a technical level, which enables us to cover
previously inaccessible potentials,
including superexponential ones.
Non-self-adjoint Dirac operators 
have attracted a lot of attention recently
\cite{CLT14,C14,Cuenin_2017,Enblom_2018,CP18,
FK9,CFK,D'Ancona-Fanelli-Schiavone},
however, we are not aware of any result related
to the construction of pseudomodes in $\lambda$-dependent WKB form.

\subsection{The model and main results}\label{Section Dirac op}
Following the classical reference~\cite{Th92},
we introduce the one-dimensional 
\emph{free Dirac operator} by 
\[ H_0 := -i\sigma_{1}\frac{\dd }{\dd x}  + m \sigma_{3} \,,\qquad \sigma_1:= \begin{pmatrix}
0 & 1\\
1 & 0
\end{pmatrix}, \qquad \sigma_3:= \begin{pmatrix}
1 & 0\\
0 & -1
\end{pmatrix},\]
with mass $m\geq 0$.
We think of $H_0$ as an operator acting in the Hilbert space $\mathcal{H}$ which is a direct sum of two $L^2(\R)$ spaces,
\[ \mathcal{H} := L^2(\R) \oplus L^2(\R)= \left\{ \begin{pmatrix}
u_1\\u_2
\end{pmatrix}: \ u_1, u_2 \in  L^2(\R)\right\}, \]
equipped with the inner product
\begin{equation}\label{inner}
  \langle u,v\rangle 
  := \int_{\R} \sum_{j=1}^{2} u_{j}(x) \overline{v_{j}(x)}\, \dd x
  \,.
\end{equation}
Alternatively, we identify~$\mathcal{H}$
with $L^2(\R)^2$ or $L^2(\R) \otimes \C^2$.
It is well known that~$H_0$ is self-adjoint 
if $\Dom(H_0) := H^1(\R)^2$ and one has
\begin{equation}\label{spectrum}
 \sigma(H_0)=(-\infty,-m]\cup [m,+\infty)
 \,.
\end{equation}

The free operator~$H_0$ is perturbed 
by a complex matrix-valued potential
$V:\R \to \C^{2\times 2}$. We write 
\begin{equation*}
V(x):= \begin{pmatrix}
V_{11}(x) && V_{12}(x)\\
V_{21}(x) && V_{22}(x)
\end{pmatrix},
\end{equation*}
where $x \in \R$, and assume that 
$V\in L_{\mathrm{loc}}^2(\R)\otimes \C^{2 \times 2}$, 
meaning that all the components $V_{ij}$ with $i,j \in \{1,2\}$
are complex-valued (scalar) functions
belonging to $ L_{\mathrm{loc}}^2(\R)$. 
In the case of real-valued potentials, the special scenario $V_{11}=V_{22}$ and $V_{12}=V_{21}=0$
(respectively, $V_{11}=V_{22}=0$ and $V_{12}=V_{21}$)
corresponds to purely electric
(respectively, purely magnetic) fields.
The special case $V_{11}=-V_{22}$ and $V_{12}=V_{21}=0$
is known as the scalar potential. 
We keep the same terminology in the general, complex-valued case.
 
The perturbed operator~$H_V$ is introduced 
as the maximal extension of the operator sum $H_0 + V$,
where we denote by the same symbol~$V$    
the maximal operator of multiplication 
by the generated function~$V$.
More specifically, 
\begin{equation}
\begin{aligned}\label{Eq Def H_V}
H_{V}f &:=\begin{pmatrix}
(-i\partial_{x}+V_{12})f_{2}+ (V_{11}+m)f_{1}\\
(-i\partial_{x}+V_{21})f_{1}+ (V_{22}-m)f_{2}
\end{pmatrix},\\
\mathcal{D}(H_{V}) &:=\left\{f:=(f_1,f_2)\in \mathcal{H} : H_{V} f \in \mathcal{H} \right\}.
\end{aligned}
\end{equation}
The local integrability conditions imposed 
on the coefficients of~$V$ ensure that
all the actions of~$H_{V}$ in~\eqref{Eq Def H_V} 
are well defined in the sense of distributions.
By straightforward arguments, 
it follows that~$H_{V}$ is a closed operator. 
However, the closedness of $H_{V}$ is inessential 
for our construction of pseudomodes. 
In fact, many of the constructed pseudomodes 
belong to $C_0^\infty(\R)^2$,
so the majority of our results apply also to any 
(possibly non-closed or trivial) extension 
of the sum $H_0 + V$ initially defined on $C_0^\infty(\R)^2$.

The main purpose of this work is to build 
a $\lambda$-dependent family $\Psi_{\lambda}$ such that
\begin{equation}\label{Mo Eq Pseudo Eigen 2}
\Vert \left( H_{V}-\lambda \right) \Psi_{\lambda} \Vert \leq o(1) \Vert \Psi_{\lambda} \Vert \qquad \text{as } \lambda \to \infty 
\ \text{ in } \ \Omega \subset \C
\,,
\end{equation}
where~$\|\cdot\|$ is the norm associated with~\eqref{inner}.
The principal tool for this construction is the (J)WKB analysis
(also known as the Liouville--Green approximation). 
In addition to investigating the rate of the decay 
in~\eqref{Mo Eq Pseudo Eigen 2}, 
we also address the question of describing the shape 
of the complex region~$\Omega$ depending on~$V$.
The method used in Section~\ref{Section complex} gives us a way to sketch the region~$\Omega$ which seems rather optimal, 
even for low regular potential. It is interesting to observe that the domain $\Omega$ does not depend on the regularity of the potentials when the imaginary parts of $V_{11}$ and $V_{22}$ grow slowly at $+\infty$ such as logarithmic one (see Example \ref{Example Log}) or root-type $x^{\gamma}$ with $\gamma\in(0,1)$ (see Example \ref{Example Pol}).

As a foretaste of our main theorems 
without going into technical details, 
we present here a very special example 
of~\eqref{Mo Eq Pseudo Eigen 2}
for real~$\lambda$ and purely electric perturbations. 
The following theorem is a particular consequence 
of Theorem~\ref{Theorem 1} and Remark~\ref{Remark sign} below. From now on, $\mathrm{Re}\, u$ and $\mathrm{Im}\, u$ denote, respectively, the real part and the imaginary part of a function $u:\R \to \C$.
\begin{theorem}
Let $v\in W_{\mathrm{loc}}^{2,\infty}(\R)$ satisfy
\begin{equation}\label{asymptotes}
\limsup_{x\to -\infty} \mathrm{Im} \, v(x)<0,\qquad \liminf_{x\to -\infty} \mathrm{Im}\, v(x)>0,
\end{equation}
and set $V_{11}:= v =: V_{22}$ 
and $V_{12}:= 0 =: V_{21}$. 
Assume further that there exist continuous function $f_{\pm}:I^{\pm} \to (0,+\infty)$ 
\[ \vert f_{\pm}(x) \vert =\mathcal{O}\left( \int_{0}^{x} \mathrm{Im}\,v(t)\, \dd t \right) \qquad \text{ as }x\to+\infty,\]
and, for all $n\in \{1,2\}$,
\[  \vert v^{(n)}(x) \vert = \mathcal{O}\left(f_{\pm}(x)^{n} \vert v(x)\vert\right)\qquad \text{ as }x\to+\infty.\]
Then, there exists a $\lambda$-dependent family $\left(\psi_{\lambda}\right)\subset \mathcal{D}(H_{V})\setminus \{ 0\}$ such that
\begin{equation}\label{both}
\frac{\Vert \left(H_{V} -\lambda\right) \Psi_{\lambda}\Vert }{\Vert \Psi_{\lambda} \Vert} =o(1)\qquad \text{as } \lambda  \to \pm\infty.
\end{equation}
\end{theorem}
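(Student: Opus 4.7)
The strategy is a Liouville--Green (WKB) construction, analogous to the one developed for non-self-adjoint Schr\"odinger operators in~\cite{KS18}. With $V_{12}=V_{21}=0$ and $V_{11}=V_{22}=v$, the eigenvalue equation $(H_V-\lambda)\Psi=0$ reads
\begin{equation*}
\Psi_1' = -i(v-m-\lambda)\Psi_2, \qquad \Psi_2' = -i(v+m-\lambda)\Psi_1,
\end{equation*}
and an ansatz $\Psi = e^{\phi}\vec a$ with $\vec a\in\C^2$ yields the eikonal $(\phi')^2 + (v-\lambda)^2 - m^2 = 0$. I would take
\begin{equation*}
\phi'(x,\lambda) := i\sqrt{(v(x)-\lambda)^2 - m^2} = i(v(x)-\lambda) + \mathcal{O}(\lambda^{-1})
\end{equation*}
for $|\lambda|$ large, determine the corresponding leading null-vector $\vec a^{(0)}$ of the $2\times2$ principal symbol (explicitly of the form $\approx(1,-1)^{\top}$, hence bounded away from~$0$), and add sub-leading corrections $\vec a^{(1)},\ldots$ obtained from the scalar transport equation.

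For real $\lambda$ one has $\mathrm{Re}\,\phi(x,\lambda) \simeq -\int_0^x \mathrm{Im}\,v(t)\,\dd t$. The sign hypotheses $\limsup_{x\to -\infty}\mathrm{Im}\,v < 0$ and $\liminf_{x\to+\infty}\mathrm{Im}\,v > 0$ imply that $\int_0^x \mathrm{Im}\,v \to +\infty$ as $x\to\pm\infty$, so the \emph{same} WKB branch produces a function that is exponentially small at both ends. I would therefore set
\begin{equation*}
\Psi_\lambda(x) := \chi\!\left(\tfrac{x}{R(\lambda)}\right) e^{\phi(x,\lambda)}\,\vec a(x,\lambda),
\end{equation*}
where $\chi\in C_c^\infty(\R)$ equals $1$ near $0$ and $R(\lambda)\to\infty$ is a radius to be chosen. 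The $\lambda\to-\infty$ half of~\eqref{both} should follow by a standard symmetry (conjugation by $\sigma_3$ together with $v\mapsto -v$, swapping the two components), which is presumably the content of Remark~\ref{Remark sign}.

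The residual $(H_V-\lambda)\Psi_\lambda$ splits into two contributions: the WKB truncation error on $\mathrm{supp}\,\chi(\cdot/R)$, bounded by means of the derivative hypothesis $|v^{(n)}|=\mathcal{O}(|f|^{n\nu_\pm}|v|)$; and the cut-off commutator $-i\sigma_1\,R^{-1}\chi'(\cdot/R)\,e^{\phi}\vec a$, supported where $|x|\sim R(\lambda)$ and therefore of size $\mathcal{O}\bigl(\exp(-c\,|f(R(\lambda))|)\bigr)$ by the lower bound $|f(x)|\le C\int_0^x \mathrm{Im}\,v$. A lower bound $\|\Psi_\lambda\|\ge c_0>0$ is obtained by restricting the integral to a fixed neighbourhood of the origin, where $\phi$ and $\vec a$ remain uniformly bounded.

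The crux, and the main obstacle, is the three-way balance between the WKB error (which grows with $x$ at a rate controlled by positive powers of $|f(x)|$), the cut-off error (exponentially small in $|f(R(\lambda))|$), and the norm $\|\Psi_\lambda\|$. The exponents $\nu_\pm$ and the scale function $f$ are tuned precisely so that one may pick $R(\lambda)\to\infty$ slowly enough that the WKB error stays subordinate, yet fast enough that the cut-off error is $o(1)$; verifying this single balance condition—equivalently, checking that the stated hypotheses match the quantitative assumptions of Theorem~\ref{Theorem 1}—is essentially the whole content of the proof. The matrix structure of $H_V$ enters only through the explicit form of $\vec a$, which is harmless because $V$ is diagonal and the two components are tightly coupled by the first-order system above.
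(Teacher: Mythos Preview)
Your plan is essentially the paper's strategy: a WKB phase solving the eikonal $(\phi')^2=-(v-\lambda)^2+m^2$, exponential localisation driven by $\int_0^x\mathrm{Im}\,v$, a $\lambda$-dependent cut-off, and the same three-way balance you describe. The paper organises the calculation slightly differently: instead of a vector ansatz $e^{\phi}\vec a$, it forces the second component of $(H_V-\lambda)\Psi$ to vanish \emph{identically} by setting $\Psi_2=(-i\partial_x\Psi_1)/(\lambda+m-v)$, thereby reducing everything to a scalar Sturm--Liouville-type operator and a single expansion $P_{\lambda,n}=\sum_{k=-1}^{n-1}\lambda^{-k}\psi_k$ with $\lambda\psi_{-1}'=\pm iV_\lambda^{1/2}$, $V_\lambda=(\lambda-m-v)(\lambda+m-v)$. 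This is equivalent to your matrix ansatz but makes the transport hierarchy and the remainder bookkeeping cleaner; your approach would work too, with both components to control rather than one.

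Two points need fixing. First, be careful with the branch: with the principal square root one has $\sqrt{(v-\lambda)^2-m^2}\simeq\lambda-v$ for $\lambda\to+\infty$ (positive real part), not $v-\lambda$; your displayed asymptotic and the subsequent sign of $\mathrm{Re}\,\phi$ are consistent with each other only if you are tacitly using the other branch, so state the choice explicitly. Second, your proposed symmetry for $\lambda\to-\infty$ does not do what you want: $\sigma_3$-conjugation flips the sign of $-i\sigma_1\partial_x$ but leaves $m\sigma_3+vI$ intact, and then sending $v\mapsto-v$ reverses hypothesis~\eqref{asymptotes}, so you would still need the construction under the opposite sign condition. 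Remark~\ref{Remark sign} is \emph{not} a symmetry argument; it simply observes that for $\lambda<0$ (with the same hypothesis on $v$) one takes the \emph{other} sign in $\psi_{-1}'=\pm i\lambda^{-1}V_\lambda^{1/2}$ and re-runs the identical estimates. Do the same in your formulation: choose the sign of $\phi'$ according to the sign of $\lambda$.
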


Although the matrix structure of the potential is rather simple, 
the assumptions of the above theorem allow us to touch a very large class of potentials. For example, the following functions for $\vert x \vert\geq 1$ (the middle part of the functions for $\vert x \vert\leq 1$ can be adjusted such that we have $v\in  W_{\mathrm{loc}}^{2,\infty}(\R)$) are covered:  
\begin{enumerate}[i)]
\item 
Polynomial-like functions
$v(x):= \vert x\vert^{\alpha}+ i \,\mathrm{sgn}(x) \vert x \vert^{\gamma}$ with $\alpha\in \R, \gamma\geq 0$. 
\\
Here we choose $f_{\pm}(x):=\vert x \vert^{-1}$.
\item 
Exponential functions $v(x):=e^{\vert x \vert^{\alpha}}+i \,\mathrm{sgn}(x) e^{\vert x \vert^{\gamma}}$ with $\alpha\in \R, \gamma\geq 0$. 
\\
Here we choose $f_{\pm}(x):=\vert x \vert^{\max \{\alpha,\gamma\}-1}$.
\item \label{Exam superexponential}
Superexponential functions $v(x):=i \,\mathrm{sgn}(x) e^{e^x}$. 
\\
Here we choose $f_{\pm}(x):=e^{x}$.
\end{enumerate}

\begin{remark}\label{Rem.normal}
Condition~\eqref{asymptotes} ensures
that~$H_V$ is ``significantly non-normal''.
Indeed, if $(H_{V})^{*}$ is the formal adjoint of~$H_V$, 
\emph{i.e.} $(H_{V})^{*}=H_{V^{*}}$,  
then it is straightforward to verify (at least algebraically)
that the normality $H_{V}(H_{V})^{*}=(H_{V})^{*} H_{V}$ 
holds if and only if the following identities
\begin{equation*}
\left\{ 
\begin{aligned}
&\mathrm{Im} \, V_{11}=\mathrm{Im} \, V_{22} = \textup{constant},\\
&\mathrm{Re} \, V_{12}=\mathrm{Re} \, V_{21},\\
&\vert \mathrm{Im} \, V_{12} \vert = \vert \mathrm{Im} \, V_{21} \vert,\\
& \mathrm{Im} \, V_{12}+ \mathrm{Im} \, V_{21}=\textup{constant},\\
&\left( \mathrm{Im} \, V_{12}+ \mathrm{Im} \, V_{21}\right)\left(\mathrm{Re}\, V_{22}-\mathrm{Re}\, V_{11}-2m \right)=0,
\end{aligned}
\right.
\end{equation*}
hold simultaneously on $\R$.
For matrix-valued potentials~$V$ of more general structures,
the asymptotics of the sum of the imaginary parts
of the diagonal components~$V_{11}$ and~$V_{22}$   
should be considered instead of the imaginary part of~$v$
in~\eqref{asymptotes},
see condition~\eqref{Assump G Diagonal 1} below.
\end{remark}

\subsection{Comparison between Schr\"{o}dinger and Dirac pseudomodes}
The Dirac setting is richer in that the perturbation~$V$
is a matrix-valued function, while it is just a scalar
potential in the Schr\"{o}dinger case.
Let us make a brief comparison of the present results 
with the Schr\"{o}dinger situation considered in~\cite{KS19}
(see Assumption~\ref{Assump General} below in this article 
and \cite[Ass.~I]{KS19}):
\begin{enumerate}[a)]
\item
Because of the unboundedness of the spectrum of~$H_0$
both from below and from above, see~\eqref{spectrum},
it is not surprising that we are able to construct 
pseudomodes for $\lambda \to \pm\infty$, while
%see~\eqref{both}, 
%On the other hand, 
just the limit $\lambda \to +\infty$
is relevant in the Schr\"{o}dinger case.
\item
The regularity of the potentials has a direct influence on the decay rates of the problem \eqref{Mo Eq Pseudo Eigen 2} for both the Schr\"{o}dinger and Dirac cases. The more regular the potential is, the stronger the rate of decay in \eqref{Mo Eq Pseudo Eigen 2} is obtained. 
\item
A version of the ``non-normality condition'' 
is imposed in the Schr\"{o}dinger case as well, 
see~\cite[Cond.~(3.1)]{KS19}. 
However, as explained in Remark~\ref{Rem.normal},
an alternative condition needs to be imposed 
for more general structures of the matrix-valued potentials~$V$ 
in the Dirac case. 
\item
The assumption that the real part of the potential is controlled 
by its imaginary part in \cite[Cond.~(3.3)]{KS19} 
can be completely ignored in the Dirac case. 
This salient feature due to the Dirac structure
is explained in Remark \ref{Remark Re Psi} below. 
\item
Moreover, the class of functions 
whose derivatives are controlled by the functions 
is significantly extended in our assumptions.
For example, the superexponential function $e^{e^x}$ 
of example~\ref{Exam superexponential} above does satisfy our assumption,
while it is not covered by \cite[Cond.~(3.2)]{KS19}.
This more general result of the present paper
is technically due to the freedom in the choice
of the function~$f_{\pm}$ above, 
while it is fixed to the canonical choice $f_{\pm}(x):=\vert x \vert^{\nu}$ with some $\nu\in \R$ in~\cite{KS19}.
\end{enumerate}

\subsection{Handy notations}
Here we summarise some special notations 
which will appear regularly in the paper: 
\begin{enumerate}[1)]
\item $\N_{k}$, with a non-negative integer $k$,  
is the set of integers starting from $k$;
\item $\R_{-}:=(-\infty,0)$ and $\R_{+}:=(0,+\infty)$;
\item $f^n$ and $f^{(n)}$ denotes respectively
the power $n$ and the $n$-th derivative of a function $f:\R \to \C$ with $n\in \N_{0}$;
\item 
We use the same symbol $\Vert \cdot \Vert$
for $L^2$-norms of both scalar- and vector-valued functions;
\item For two real-valued functions $a$ and $b$, we write
$a\lesssim b$ (respectively, $a \gtrsim b$)  
if there exists a constant $C>0$, independent of $\lambda$ and $x$ (or any other relevant parameter), such that
$a\leq C b$  (respectively, $a\geq Cb$);
%\[a\leq C b \quad ( \text{respectively, } a\geq Cb);\]
\item $a \approx b$ if $a\lesssim b$ and $a \gtrsim b$;
\item $[[m,n]]:=[m,n]\cap \Z$ for all $m,n\in \R$;
\item $\vert j \vert := j_1 + j_2$ for all $j=(j_1,j_2)\in \N_{0}^2$.
\end{enumerate}

\subsection{Structure of the paper}
The paper is organised as follows. 
In Section~\ref{Section WKB}, 
we present a general scheme of constructing 
a pseudomode satisfying~\eqref{Mo Eq Pseudo Eigen 2}
for the Dirac operator by the WKB method. 
This scheme is applied to real~$\lambda$'s 
in Section~\ref{Section Real},
while more general complex curves are allowed 
in Section~\ref{Section complex}.
Many illustrative examples are considered 
at the end of each of the two sections.  

%---------------------------%
\section{WKB construction}\label{Section WKB}
%---------------------------%

\subsection{Warming-up}
Let us start the scheme of constructing the pseudomode 
of the Dirac operator~\eqref{Eq Def H_V}
satisfying~\eqref{Mo Eq Pseudo Eigen 2}
by searching it in the form
\begin{equation}\label{Eq k1k2}
\Psi_{\lambda} := 
\begin{pmatrix}
k_1 u_{\lambda}\\
k_2 v_{\lambda}
\end{pmatrix},
\qquad
\begin{aligned}
k_1(x)&:=\exp{\left(-i\int_{0}^{x} V_{21}(\tau) \, \dd \tau\right)},
\\
\qquad k_2(x) &:= \exp{\left(-i\int_{0}^{x} V_{12}(\tau) \, \dd \tau\right)},
\end{aligned}
\end{equation}
where $u_{\lambda}, v_{\lambda}$ depending on $\lambda$ will be determined later.  
This structure of pseudomode allows us to pull out the off-diagonal terms of potential $V$ by the following step
\begin{align*}
(H_{V}-\lambda) \Psi_{\lambda} 
&=\begin{pmatrix}
\left(-ik_2\partial_{x} -ik_2^{(1)}+V_{12}k_2 \right)v_{\lambda} + \left(V_{11}+m-\lambda \right) k_1 u_{\lambda} \\
\left(-ik_1\partial_{x} -ik_1^{(1)}+V_{21}k_1 \right)u_{\lambda} + \left(V_{22}-m-\lambda \right)k_2 v_{\lambda}
\end{pmatrix}\\
&=\begin{pmatrix}
k_2\left(-i\partial_{x}\right)v_{\lambda} + k_1\left(V_{11}+m-\lambda \right)  u_{\lambda} \\
k_1 \left(-i\partial_{x} \right)u_{\lambda} + k_2\left(V_{22}-m-\lambda \right) v_{\lambda}
\end{pmatrix}.
\end{align*}

By letting one of the two components of $(H_{V}-\lambda) \Psi_{\lambda}$ be zero, for example the second one, we can compute $v_{\lambda}$ by $u_{\lambda}$:
\[ v_{\lambda} = \frac{k_1}{k_2}\frac{(-i\partial_{x})u_{\lambda}}{\lambda+m-V_{22}}.\]
Here we assume that $\lambda+m-V_{22}\neq 0$ on the support of the pseudomode; this will be ensured by the condition \eqref{Eq Cond Re} later. 
In this way, we have relaxed the problem to finding $u_{\lambda}$ such that 
\begin{align*}
\frac{\Vert (H_{V}-\lambda)\Psi_{\lambda} \Vert}{\Vert \Psi_{\lambda} \Vert}=\frac{\Vert \mathscr{L}_{\lambda,V} u_{\lambda}\Vert_{L^2}}{\sqrt{\Vert k_1 u_{\lambda} \Vert_{L^2}^2 + \left\Vert k_1\frac{\left(-i\partial_{x}\right)u_{\lambda} }{\lambda+m-V_{22}} \right\Vert_{L^2}^2}} = o(1) \qquad \text{as } \lambda\to \infty.
\end{align*}
Here the differential expression $\mathscr{L}_{\lambda,V}$ is defined by
\begin{equation*}
\begin{aligned}
\mathscr{L}_{\lambda,V} := k_2\left(-i\partial_{x} \right)K_{\lambda} \left(-i\partial_{x}  \right)-k_1\left(\lambda-m-V_{11} \right)
\end{aligned},
\qquad
K_{\lambda}:= \frac{k_1}{k_2} \frac{1}{\lambda+m-V_{22}}.
\end{equation*} 

\begin{remark}
In \cite{AS20}, in order to construct the pseudomode for a damped wave system, the authors transferred the problem into finding the pseudomode of a quadratic operator having the Schr\"{o}dinger form. We, meanwhile, will establish the WKB construction directly for the Sturm--Liouville-like operator $\mathscr{L}_{\lambda,V}$ without converting it to the Schr\"{o}dinger operator.
\end{remark}

Let us consider a sufficiently regular and complex-valued function $P:\R \to \C$ which will be determined later in the WKB process. We consider the formal conjugated operator 
\begin{align*}
\mathscr{L}_{\lambda,V}^{P} := \ & e^{P}\mathscr{L}_{\lambda,V} e^{-P}
\\
= \ & \frac{k_1}{\lambda+m-V_{22}}\left[ -\partial_{x}^2 +\left( 2 P^{(1)}-\frac{K_{\lambda}^{(1)}}{K_{\lambda}}\right)\partial_{x} + P^{(2)}+\frac{K_{\lambda}^{(1)}}{K_{\lambda}} P^{(1)}-(P^{(1)})^2-V_{\lambda}\right]\,,
\end{align*}
where
\begin{equation}\label{Eq V lambda}
V_{\lambda}:= (\lambda-m-V_{11})(\lambda+m-V_{22}).
\end{equation}
Let us denote
\begin{equation}\label{Eq remainder}
R_{\lambda} :=  P^{(2)}+\frac{K_{\lambda}^{(1)}}{K_{\lambda}} P^{(1)}-(P^{(1)})^2-V_{\lambda},
\end{equation}
which will play the role of the remainder in the WKB analysis.

We consider $u_{\lambda}$ in the form
$u_{\lambda} := \xi e^{-P},$
where $\xi$ is a cut-off function whose support 
is allowed to depend on~$\lambda$; it will be determined later in Section \ref{Section: cutoff}. Hence, the action of $\mathscr{L}_{\lambda,V}$ on $u_{\lambda}$ can be expressed as 
\begin{equation}\label{Eq L u}
\begin{aligned}
\mathscr{L}_{\lambda,V} u_{\lambda} 
=& \ e^{-P} \mathscr{L}_{\lambda,V}^{P}\xi\\
=& \ -\frac{k_1e^{-P}}{\lambda +m-V_{22}}\xi^{(2)} + \frac{k_1e^{-P}}{\lambda +m-V_{22}} \left( 2P^{(1)}-\frac{K_{\lambda}^{(1)}}{K_{\lambda}}\right) \xi^{(1)} + \frac{k_1e^{-P}R_{\lambda}}{\lambda+m-V_{22}} \xi \,.
\end{aligned}
\end{equation}
The WKB strategy is as follows. 
For each $n\in \N_{0}$, we look for the phase $P$ in the form
\begin{equation}\label{Eq Pn}
P_{\lambda,n} (x)= \sum_{k=-1}^{n-1} \lambda^{-k} \psi_{k}(x),
\end{equation}
where functions $\left(\psi_{k}\right)_{k\in[[-1,n-1]]}$ are to be determined by solving ordinary differential equations (ODEs); the number $n$ will be chosen later depending on the maximal possible order derivative of $V$. After that, we show that the exponential decay of $\psi_{-1}$ allows the norm of the first two terms in \eqref{Eq L u} to decay exponentially according to $\lambda$ (Proposition \ref{Lemma L2 norm}) and the norm of the final term to decrease with the rate power of $\lambda^{-1}$ (Theorem \ref{Theorem 1}).

Starting with $n=0$ and putting $P_0 = \lambda \psi_{-1}$ into \eqref{Eq remainder}, we obtain
\begin{equation}\label{Eq remainder 0}
R_{\lambda,0}
:=\lambda\left(\psi_{-1}^{(2)}+\frac{K_{\lambda}^{(1)}}{K_{\lambda}}  \psi_{-1}^{(1)}\right)-\lambda^2(\psi_{-1}^{(1)})^2-V_{\lambda}
=\lambda\left(\psi_{-1}^{(2)}+\frac{K_{\lambda}^{(1)}}{K_{\lambda}}  \psi_{-1}^{(1)}\right).
\end{equation}
Here the second equality follows by solving the eikonal equation
$ -\lambda^2(\psi_{-1}^{(1)})^2-V_{\lambda} =0, $;
in this way, the second order of $\lambda$ in $R_{\lambda,0}$ is removed 
and the final order of~$\lambda$  has been reduced. 

We can do the same trick for any $n\in \N_{1}$.
Replacing~$P$ by~$P_{\lambda,n}$ in \eqref{Eq remainder}, 
we obtain
$$
\left( \sum_{k=-1}^{n-1} \lambda^{-k} \psi_{k}^{(2)}\right)+\frac{K_{\lambda}^{(1)}}{K_{\lambda}} \left( \sum_{k=-1}^{n-1} \lambda^{-k} \psi_{k}^{(1)}\right)-\left( \sum_{k=-1}^{n-1} \lambda^{-k} \psi_{k}^{(1)}\right)^2 - V_{\lambda}
= \sum_{\ell=-2}^{n-2} \lambda^{-\ell} \phi_{\ell} + \sum_{\ell=-1}^{n-2} \lambda^{-(n+\ell)} \phi_{n+\ell}.
$$
Here the functions $\phi_{\ell}$ with $\ell\in [[-2,2(n-1)]]$ are naturally defined by grouping together the terms attached with the same order of $\lambda$, with the exception of $V_{\lambda}$ which we include in the leading order term. In detail, the first $n+1$ functions $\phi_{\ell}$ are expressed by
\begin{equation*}
\begin{aligned}
&\lambda^2:  &-(\psi_{-1}^{(1)})^2-\frac{V_{\lambda}}{\lambda^2}  &= :\phi_{-2},\\
&\lambda^{1}:  &\psi_{-1}^{(2)}+\frac{K_{\lambda}^{(1)}}{K_{\lambda}}\psi_{-1}^{(1)}-2\psi_{-1}^{(1)}\psi_{0}^{(1)}&=:\phi_{-1},\\
&\vdots &\\
&\lambda^{-\ell}: &\psi_{\ell}^{(2)}+\frac{K_{\lambda}^{(1)}}{K_{\lambda}} \psi_{\ell}^{(1)} - \sum_{j=-1}^{\ell+1}\psi_{j}^{(1)} \psi_{\ell-j}^{(1)}&=:\phi_{\ell},\\
&\vdots &\\
&\lambda^{-(n-2)}: &\psi_{n-2}^{(2)}+\frac{K_{\lambda}^{(1)}}{K_{\lambda}} \psi_{n-2}^{(1)} - \sum_{j=-1}^{n-1}\psi_{j}^{(1)} \psi_{n-2-j}^{(1)}&=:\phi_{n-2},
\end{aligned}
\end{equation*}
and the last $n$ functions $\phi_{\ell}$ are
\begin{equation}\label{Eq phi n}
\begin{aligned}
&\lambda^{-(n-1)}: &\psi_{n-1}^{(2)}+\frac{K_{\lambda}^{(1)}}{K_{\lambda}} \psi_{n-1}^{(1)} - \sum_{j=0}^{n-1}\psi_{j}^{(1)} \psi_{n-1-j}^{(1)}&=:\phi_{n-1},\\
&\lambda^{-n}: &- \sum_{j=1}^{n-1}\psi_{j}^{(1)} \psi_{n-j}^{(1)}&=:\phi_{n},\\
&\vdots &\\
&\lambda^{-(n+\ell)}: &- \sum_{j=\ell+1}^{n-1}\psi_{j}^{(1)} \psi_{n+\ell-j}^{(1)}&=:\phi_{n+\ell},\qquad \text{for } \ell\in [[0,n-2]] \text{ if } n\geq 2,\\
&\vdots &\\
&\lambda^{-2(n-1)}: & -\left(\psi_{n-1}^{(1)}\right)^2&=:\phi_{2(n-1)}.
\end{aligned}
\end{equation}
Requiring $\phi_{\ell}=0$ for all $\ell\in [[-2,n-2]]$, we obtain $(n+1)$ ODEs which can be solved explicitly to find all $\{\psi_{k}\}_{k\in [[-1,n-1]]}$ by a recursion formula
\begin{equation}\label{Eq Recursion}
\begin{aligned}
\psi_{-1}^{(1)} &= \pm i\lambda^{-1} V_{\lambda}^{1/2},\\
\psi_{\ell+1}^{(1)} &= \frac{1}{2\psi_{-1}^{(1)}}\left(\psi_{\ell}^{(2)} + \frac{K_{\lambda}^{(1)}}{K_{\lambda}} \psi_{\ell}^{(1)} -  \sum_{j=0}^{\ell}\psi_{j}^{(1)} \psi_{\ell-j}^{(1)}\right).
\end{aligned}
\end{equation}
After solving these ODEs, the WKB remainder is
\begin{equation}\label{Eq remainder n}
R_{\lambda,n} := \sum_{\ell=-1}^{n-2} \lambda^{-(n+\ell)} \phi_{n+\ell},\qquad n\in \N_{1}\,.
\end{equation}
\begin{remark}\label{Remark WKB}
Let us make some comments at this stage.
\begin{enumerate}[i)]
\item The choice of the sign in the definition of $\psi_{-1}^{(1)}$ will be determined by the sign of the sum $\mathrm{Im}\, V_{11} +\mathrm{Im}\, V_{22}$ at infinity and the sign of $\lambda$ (see Remark \ref{Remark sign}). 
\item Since $V_{\lambda}$ is a complex-valued function, the square root appearing in \eqref{Eq Recursion} is considered as the principal branch of the square root which is defined as, for $z\in \C\setminus (-\infty,0]$,
\begin{equation*}
\sqrt{z} = \frac{1}{\sqrt{2}} \left(\vert z \vert+ \mathrm{Re} \, z \right)^{1/2} + i\frac{1}{\sqrt{2}} \frac{\mathrm{Im}\, z}{\left(\vert z \vert+ \mathrm{Re} \, z \right)^{1/2}}.
\end{equation*}
This principal branch of square root is holomorphic on $\C\setminus (-\infty,0]$. For that reason, the range of $V_{\lambda}$ needs to stay away from $(-\infty,0]$ such that the continuity (and/or the differentiability) of $V_{\lambda}^{1/2}$ is deduced by the continuity (and/or the differentiability) of $V_{11}$ and $V_{22}$. By writing $\lambda:= \alpha+i\beta$ ($\alpha,\beta\in\R$), this will be ensured if, by simple argument of product of two complex numbers,
\begin{equation}\label{Eq Cond Re}
 (\alpha-m- \mathrm{Re}\,V_{11}(x)) (\alpha+m-\mathrm{Re}\,V_{22}(x))>0,
\end{equation}
for all $x\in \R$. If $\mathrm{Re}\, V_{11}$ and $\mathrm{Re}\, V_{22}$ are bounded, by considering $\alpha$ very large, \eqref{Eq Cond Re} is always satisfied. Otherwise, we need to employ the support of $\xi$ so as to make \eqref{Eq Cond Re} happen. 
\end{enumerate}
\end{remark}
\begin{remark}[Beyond semiclassical]
It is a common knowledge that the limit of ``large energies''
in quantum mechanics is related to the ``semiclassical limit''.
While we indeed consider the spectral parameter~$\lambda$ 
in~\eqref{Mo Eq Pseudo Eigen 2} diverging in the complex plane 
and employ WKB analysis standardly used for semiclassical regimes too
(see, \emph{e.g.}, \cite{MJ99} for the Schr\"{o}dinger operator 
or \cite{BNRVN20} for the magnetic Laplacian),
there are some important novelties in our approach
 that we list here:
\begin{enumerate}[i)]
\item Our spectral parameter $\lambda\in \C$ not only plays the scaling role as semi-classical parameter $h\in \R_{+}$, but also indicates the direction in which the large complex number will belong to the pseudospectrum. 
\item  In the original WKB method applied to the spectral problems in the semi-classical regime, the solutions of the eikonal equation and the transports equations are independent of the semi-classical parameter. On the other hand, our solutions depend on the parameter~$\lambda$ and all estimates established for $(\psi_{k})_{k\in[[-1,n]]}$ are necessary to be uniform in $\lambda$.
This makes the present analysis considerably more demanding.
However, both the WKB strategies share the same scheme that the eikonal solution plays a dominant part in deciding the decay of the main problem (the problem \eqref{Mo Eq Pseudo Eigen 2} in this case).
\item While the semi-classical quasimodes always localise, the supports of our pseudomodes can be extended in some cases. Furthermore, the cut-off functions are occasionally needless in our WKB construction.
\item While semi-classical works deal with smooth potentials, our framework can cover the potentials with low regularity (possibly discontinuous).
\end{enumerate}
In summary, the present work goes beyond 
standard semiclassical settings.
What is more, our approach is more robust in the sense that
semiclassical results can be deduced as a consequence of it
(\emph{cf.}~\cite[Ex.~5.4]{KS19}), but not vice versa
(without the important developments mentioned above).
\end{remark}
\subsection{Structure of solutions of the transport equations and the WKB remainder}
From now on, we assume that we are dealing with the plus sign in the formula of $\psi_{-1}$ in \eqref{Eq Recursion}, unless otherwise stated. Let us list some first solutions of the first transport equations to see which structure they are equipped with:
\begin{align*}
\psi_{-1}^{(1)} &= \frac{iV_{\lambda}^{1/2}}{\lambda},\\
\psi_{0}^{(1)} &=  \frac{1}{4}\frac{V_{\lambda}^{(1)}}{V_{\lambda}}+ \frac{1}{2} \frac{K_{\lambda}^{(1)}}{K_{\lambda}},\\
\psi_{1}^{(1)}&=\frac{-i\lambda}{8V_{\lambda}^{1/2}} \left( \frac{V_{\lambda}^{(2)}}{V_{\lambda}}-\frac{5}{4} \frac{(V_{\lambda}^{(1)})^2}{V_{\lambda}^2}+2\frac{K_{\lambda}^{(2)}}{K_{\lambda}}-\frac{(K_{\lambda}^{(1)})^2}{K_{\lambda}^2} \right),\\
\psi_{2}^{(1)}&= \frac{-\lambda^2}{16V_{\lambda}} \left[ \frac{V_{\lambda}^{(3)}}{V_{\lambda}}-\frac{9}{2}\frac{V_{\lambda}^{(1)}V_{\lambda}^{(2)}}{V_{\lambda}^2}+\frac{15}{4}\frac{(V_{\lambda}^{(1)})^3}{V_{\lambda}^3} -\frac{V_{\lambda}^{(1)}}{V_{\lambda}}\left(2\frac{K_{\lambda}^{(2)}}{K_{\lambda}}- \frac{(K_{\lambda}^{(1)})^2}{K_{\lambda}^2}\right)\right.\\
&\hspace{2 cm} \left.+2\frac{K_{\lambda}^{(3)}}{K_{\lambda}} - 4 \frac{K_{\lambda}^{(1)}K_{\lambda}^{(2)}}{K_{\lambda}^2}-2\frac{(K_{\lambda}^{(1)})^{3}}{K_{\lambda}^3}\right].
\end{align*}
\textbf{The remainder when we solve up to $\psi_{-1}$:}
\begin{align*}
R_{\lambda,0}=  \frac{iV_{\lambda}^{(1)}}{2V_{\lambda}^{1/2}} + \frac{iK_{\lambda}^{(1)}}{K_{\lambda}} V_{\lambda}^{1/2}.
\end{align*}
\textbf{The remainder when we solve up to $\psi_0$:}
\begin{align*}
R_{\lambda,1}= \frac{V_{\lambda}^{(2)}}{4V_{\lambda}} - \frac{5}{16} \frac{(V_{\lambda}^{(1)})^{2}}{V_{\lambda}^{2}} +\frac{K_{\lambda}^{(2)}}{2K_{\lambda}}-\frac{(K_{\lambda}^{(1)})^2}{4K_{\lambda}^2}.
\end{align*}
\textbf{The remainder when we solve up to $\psi_1$:}
\begin{align*}
R_{\lambda,2} = & \frac{-i}{8V_{\lambda}^{1/2}}\left[ \frac{V_{\lambda}^{(3)}}{V_{\lambda}}-\frac{9}{2}\frac{V_{\lambda}^{(1)}V_{\lambda}^{(2)}}{V_{\lambda}^2}+\frac{15}{4}\frac{(V_{\lambda}^{(1)})^3}{V_{\lambda}^3} -\frac{V_{\lambda}^{(1)}}{V_{\lambda}}\left(2\frac{K_{\lambda}^{(2)}}{K_{\lambda}}-\frac{(K_{\lambda}^{(1)})^2}{K_{\lambda}^2} \right) +2\frac{K_{\lambda}^{(3)}}{K_{\lambda}}\right.\\
&\hspace{1.2cm} \left.- 4 \frac{K_{\lambda}^{(1)}K_{\lambda}^{(2)}}{K_{\lambda}^2}-2\frac{(K_{\lambda}^{(1)})^{3}}{K_{\lambda}^3}\right]+\frac{1}{64V_{\lambda}}\left(\frac{V_{\lambda}^{(2)}}{V_{\lambda}} - \frac{5}{4}\frac{(V_{\lambda}^{(1)})^2}{V_{\lambda}^2} +2\frac{K_{\lambda}^{(2)}}{K_{\lambda}}-\frac{(K_{\lambda}^{(1)})^2}{K_{\lambda}^2}\right)^2.
\end{align*}

For $n\in \N_{1}$, since the formulae of solutions $(\psi_{k})_{k\in[[0,n-1]]}$ are obtained from the recursion steps \eqref{Eq Recursion}, if we want to write the formula of $\psi_{k}$, the formulae of all $\psi_{\ell}$ with $\ell\leq k-1$ need to be explicitly given. It could be a challenging effort to find out the exact formulae for the transport solutions. However, the good news is that these solutions can be estimated without knowing their exact expression, instead a common structure of them is required. This is the content of the following lemma, but first, some notations should be introduced.

\begin{notation}
Let $f,g$ be two functions which are assumed to be sufficiently regular so that all appearing derivatives of them exist. For $j=(j_1,j_2)\in \N_{0}^{2}$, $r=(r_1,r_2)\in \N_{0}^{2}$ and $s=(s_1,s_2)\in \N_{1}^{2}$, we employ the following notations
\begin{equation}\label{Nota Derivative}
D_{j}^{r,s}(f,g):=\left\{\displaystyle \sum_{\alpha \in \mathcal{I}_{j}^{r,s}} c_{\alpha} (f^{(1)})^{\alpha_1} \ldots(f^{(s_1)})^{\alpha_{s_1}} (g^{(1)})^{\alpha_{s_1+1}}\ldots(g^{(s_2)})^{\alpha_{s_1+s_2}} : c_{\alpha}\in \C\right\},
\end{equation}
where
\begin{align}\label{Notation Ijr}
\mathcal{I}_{j}^{r,s} &:= \left\{ \alpha \in \N_{0}^{s_1+s_2}: \sum_{p=1}^{s_1} \alpha_{p} = j_1,\, \sum_{p=1}^{s_2} \alpha_{s_1+p} = j_2 ;\sum_{p=1}^{s_1} p \alpha_{p} = r_1,\,\sum_{p=1}^{s_2} p \alpha_{s_1+p} = r_2\right\}\,.
\end{align}
When $\mathcal{I}_{j}^{r,s} =\emptyset$, we make a convention that $D_{j}^{r,s}(f,g)=\{0\}$. Thus, if $j_{i}=0$ and $r_{i}\geq 1$ for some $i\in \{1,2\}$, then $D_{j}^{r,s}(f,g)=\{0\}$.
\end{notation}

\begin{lemma}\label{Lemma psi k'}
Let $n\in \N_{0}$, assume that $V_{11},V_{22}\in W_{\mathrm{loc}}^{n+1,2}(\R)$ and $V_{12}, V_{21}\in W_{\mathrm{loc}}^{n,2}(\R)$ and $\lambda \in \C$ are such that \eqref{Eq Cond Re} is satisfied. Let $\{\psi_{k}^{(1)}\}_{k\in [[-1,n-1]]}$ be a family determined by the formula \eqref{Eq Recursion}. Then their first order derivatives are of the form
\begin{equation*}
\psi_{k}^{(1)} = \frac{\lambda^{k}}{V_{\lambda}^{k/2}} \sum_{\vert j \vert=0}^{k+1}\sum_{\vert r\vert =k+1} \frac{d^{r,r-j+(1,1)}_{j}(V_{\lambda},K_{\lambda})}{V_{\lambda}^{j_1}K_{\lambda}^{j_2}},
\end{equation*}
where $\{r,j\}\subset \N_{0}^2$ and $d^{r,r-j+(1,1)}_{j}(V_{\lambda},K_{\lambda})\in D_{j}^{r,r-j+(1,1)}(V_{\lambda},K_{\lambda})$.
%Moreover, if $\vert r \vert \geq 1$, then $d^{r,r+(1,1)}_{0}(V_{\lambda},K_{\lambda})=0$.
\end{lemma}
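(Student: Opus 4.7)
The natural approach is strong induction on $k \in [[-1, n-1]]$. The base case $k = -1$ is immediate: $\psi_{-1}' = i\lambda^{-1}V_\lambda^{1/2}$ matches the claimed form with the unique choice $j = r = (0,0)$ and $d = i \in D_{(0,0)}^{(0,0),(1,1)}(V_\lambda, K_\lambda)$, which by direct inspection of the definition reduces to the scalars (both $\alpha_1$ and $\alpha_2$ are forced to vanish, so the ``monomial'' is an empty product).

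For the inductive step, I plan to work with the normalised quantities $\tilde{\psi}_k := V_\lambda^{k/2}\lambda^{-k}\psi_k'$, so that the lemma asserts $\tilde{\psi}_k$ lies in the set
\[
\mathcal{A}_{k+1} := \sum_{|j|=0}^{k+1}\sum_{|r|=k+1}\frac{D_j^{r,r-j+(1,1)}(V_\lambda, K_\lambda)}{V_\lambda^{j_1}K_\lambda^{j_2}}.
\]
Substituting $\psi_{-1}' = i\lambda^{-1}V_\lambda^{1/2}$ into \eqref{Eq Recursion} and clearing prefactors transforms the recursion into
\[
\tilde{\psi}_{k+1} = -\frac{i}{2}\left(\tilde{\psi}_k' - \frac{k}{2}\frac{V_\lambda^{(1)}}{V_\lambda}\tilde{\psi}_k + \frac{K_\lambda^{(1)}}{K_\lambda}\tilde{\psi}_k - \sum_{j=0}^{k}\tilde{\psi}_j\tilde{\psi}_{k-j}\right).
\]
It thus suffices to prove three closure properties for the family $\mathcal{A}_\bullet$: (i) termwise differentiation maps $\mathcal{A}_m$ into $\mathcal{A}_{m+1}$; (ii) multiplication by $V_\lambda^{(1)}/V_\lambda$ or $K_\lambda^{(1)}/K_\lambda$ maps $\mathcal{A}_m$ into $\mathcal{A}_{m+1}$; (iii) pointwise products satisfy $\mathcal{A}_m\cdot\mathcal{A}_{m'}\subset\mathcal{A}_{m+m'}$. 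Each of the four summands on the right-hand side above then sits in $\mathcal{A}_{k+2}$, which is the induction hypothesis at level $k+1$.

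The three closure rules reduce to elementary bookkeeping on the sets $D_j^{r,s}$. For a product, two monomials combine to give an element of $D_{j+j'}^{r+r',\,\max(s,s')}$, and the required componentwise inequality $\max(s,s') \leq (r+r') - (j+j') + (1,1)$ follows from the two identities $s = r-j+(1,1)$, $s' = r'-j'+(1,1)$ together with $\max(a,b) \leq a+b-1$ whenever $a,b \geq 1$. Differentiating a single $V_\lambda$-monomial $(V_\lambda^{(p)})^{\alpha_p}\cdots$ preserves the total count $j_1$ and raises both $r_1$ and the maximal order by one, so the relation $s_1 = r_1 - j_1 + 1$ survives; differentiating the denominator factor $V_\lambda^{-j_1}$ generates a compensating $V_\lambda^{(1)}$ in the numerator that raises $j_1$ and $r_1$ by one apiece, leaving $s_1$ unchanged. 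The $K_\lambda$ case is structurally identical. Combined through the Leibniz rule these give (i) and (ii), while (iii) is immediate from the product computation.

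The main obstacle I anticipate is the precise tracking of the multi-index constraints defining $\mathcal{I}_j^{r,s}$: the lemma asserts the tight identity $s = r - j + (1,1)$, so the four operations must be shown to preserve this exact relation rather than a mere inequality, and in particular the ``crossed'' case (where the Leibniz rule creates one extra derivative order at the same time as the denominator increases) needs a careful pointwise check. The stated convention $D_j^{r,s} = \{0\}$ whenever $\mathcal{I}_j^{r,s}$ is empty (e.g.\ when $j_i = 0$ with $r_i \geq 1$, or when some component of $s$ drops below one) is what lets the stray summands generated by these operations disappear harmlessly. Once this combinatorial bookkeeping is in place, the induction closes with no further analytic input.
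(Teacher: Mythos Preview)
Your proposal is correct and follows essentially the same route as the paper: both arguments are strong inductions resting on the three closure properties of the sets $D_j^{r,s}$ under (a) multiplication by $V_\lambda'$ or $K_\lambda'$, (b) differentiation, and (c) products, which is precisely the content of the paper's auxiliary Lemma~\ref{Lemma Property Djr}. Your normalisation $\tilde\psi_k=V_\lambda^{k/2}\lambda^{-k}\psi_k'$ is a cosmetic simplification that removes the prefactors before checking closure, whereas the paper carries them along; otherwise the bookkeeping is identical, and your worry about the ``tight identity'' $s=r-j+(1,1)$ is already resolved by the monotonicity rule $D_j^{r,s}\subset D_j^{r,\tilde s}$ for $s\le\tilde s$, so only the inequality is needed.
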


For each $k\geq 0$, the maximal possible order  derivative of $V_{11},V_{22}$ in $\psi_{k}'$ is $k+1$ and the maximal possible order derivative of $V_{12}$ and $V_{21}$ in $\psi_{k}^{(1)}$ is $k$. Indeed, notice that, from the definition of $V_{\lambda}$ and $K_{\lambda}$, the levels of the derivatives of $V_{\lambda}$ and $K_{\lambda}$ are equal to the levels of the derivatives of $V_{11}$ and $V_{22}$ while larger than the levels of the derivatives of $V_{12}$ and $V_{21}$ by one order. On the other hand, for all $i\in\{1,2\}$ with $j_i \geq 1$ and $r_{i}\leq k+1$,  we get
\[ \max \{ r_{i}-j_{i}+1\} \leq k+1.\]

The remainders are controlled by the next lemma.

\begin{lemma}\label{Lemma Rn}
Let $n\in \N_{0}$, assume that $V_{11},V_{22}\in W_{\mathrm{loc}}^{n+1,2}(\R)$ and $V_{12}, V_{21}\in W_{\mathrm{loc}}^{n,2}(\R)$ and $\lambda \in \C$ are such that \eqref{Eq Cond Re} is satisfied. For $n=0$, let $R_{\lambda,0}$ as in \eqref{Eq remainder 0}. For $n\geq 1$, let $\{\psi_{k}'\}_{k\in [[-1,n-1]]}$ be a family determined by the formula \eqref{Eq Recursion}, $\{ \phi_{k}\}_{k\in [[n-1,2(n-1)]]}$ as in \eqref{Eq phi n} and and $R_{\lambda,n}$ as in \eqref{Eq remainder n}. Then the maximal possible order  derivative of $V_{11},V_{22}$ is $n+1$ and of $V_{12}, V_{21}$ is $n$ in $R_{\lambda,n}$ and
\begin{equation}\label{Eq Estimate Rn}
\begin{aligned}
&\text{for } n=0:&&\vert R_{\lambda, 0} \vert \lesssim \frac{\vert V_{\lambda}^{(1)} \vert}{\vert V_{\lambda} \vert^{1/2}} + \frac{\vert K_{\lambda}^{(1)} \vert}{\vert K_{\lambda} \vert} \vert V_{\lambda} \vert^{1/2},\\
&\text{for } n\geq 1: &&\vert R_{\lambda, n} \vert \lesssim \sum_{\ell=-1}^{n-2} \frac{1}{\vert V_{\lambda} \vert^{(n+\ell)/2}} \sum_{\vert j \vert=1}^{n+\ell+2} \sum_{\vert r \vert=n+\ell+2}\frac{\vert d_{j}^{r,(n+1,n+1)}(V_{\lambda},K_{\lambda})\vert}{\vert V_{\lambda}\vert^{j_1} \vert K_{\lambda}\vert^{j_2}},
\end{aligned}
\end{equation}
where $\{r,j\}\subset \N_{0}^2$ and $d^{r,(n+1,n+1)}_{j}(V_{\lambda},K_{\lambda})\in D_{j}^{r,(n+1,n+1)}(V_{\lambda},K_{\lambda})$.
\end{lemma}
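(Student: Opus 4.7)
The proof splits along the case distinction $n=0$ versus $n\geq 1$, both cases being handled by direct computation from the explicit formulas in Section~2.1 and the structural form of $\psi_k'$ supplied by Lemma~\ref{Lemma psi k'}. For $n=0$ we start from the closed expression $R_{\lambda,0}=\lambda(\psi_{-1}''+(K_\lambda'/K_\lambda)\psi_{-1}')$ recorded in \eqref{Eq Reminder 0}; differentiating $\psi_{-1}'=iV_\lambda^{1/2}/\lambda$ once yields $\psi_{-1}''=iV_\lambda^{(1)}/(2\lambda V_\lambda^{1/2})$, and substituting back gives $R_{\lambda,0}=iV_\lambda^{(1)}/(2V_\lambda^{1/2})+i(K_\lambda^{(1)}/K_\lambda)V_\lambda^{1/2}$, from which the triangle inequality produces the first estimate in \eqref{Eq Estimate Rn}.

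For $n\geq 1$ the plan is to write $R_{\lambda,n}=\sum_{\ell=-1}^{n-2}\lambda^{-(n+\ell)}\phi_{n+\ell}$ with the $\phi_{n+\ell}$ of \eqref{Eq phi n} and to show that each summand fits into the class on the right of \eqref{Eq Estimate Rn}. Substituting the form $\psi_k'=(\lambda^k/V_\lambda^{k/2})\sum d_j^{r,r-j+(1,1)}(V_\lambda,K_\lambda)/(V_\lambda^{j_1}K_\lambda^{j_2})$ with $|r|=k+1$ from Lemma~\ref{Lemma psi k'}, the argument reduces to two algebraic stability properties of the symbol classes $D_j^{r,s}$: a \emph{product rule}, $D_{j'}^{r',s}\cdot D_{j''}^{r'',s}\subset D_{j'+j''}^{r'+r'',s}$, obtained by concatenating the exponent vectors from \eqref{Notation Ijr}; and a \emph{differentiation rule}, stating that $\partial_x$ applied to $d_j^{r,s}(V_\lambda,K_\lambda)/(V_\lambda^{j_1}K_\lambda^{j_2})$ produces a finite sum of terms of the same generic shape with $|r|$ raised by one, the pair $s$ enlarged by $(1,1)$, and $|j|$ raised by at most one (the extra unit arises when the derivative strikes the denominator).

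Applying the product rule to each quadratic term $\psi_j'\psi_{n+\ell-j}'$ in $\phi_{n+\ell}$ consolidates the $\lambda$-prefactors into $\lambda^{n+\ell}$ (which cancels against $\lambda^{-(n+\ell)}$) and the $V_\lambda^{1/2}$-prefactors into $V_\lambda^{(n+\ell)/2}$ in the denominator, while the $|r|$-indices add to $(j+1)+(n+\ell-j+1)=n+\ell+2$; the single-factor terms $\psi_{n-1}''$ and $(K_\lambda'/K_\lambda)\psi_{n-1}'$ in $\phi_{n-1}$ are treated by one use of the differentiation and product rules respectively, producing again $|r|=n+1=n+\ell+2$ for $\ell=-1$. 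The maximal derivative order of $V_\lambda$ reached in these manipulations is $n+1$, attained only by $\psi_{n-1}''$, which fixes the second multi-index $s=(n+1,n+1)$; translating $V_\lambda$- and $K_\lambda$-derivatives into derivatives of the original components of $V$ then gives $n+1$ for $V_{11},V_{22}$ and $n$ for $V_{12},V_{21}$, since each differentiation of $K_\lambda$ costs one order of $V_{22}$ and at most one lower order of $V_{12},V_{21}$. Finally, the triangle inequality applied term by term delivers the announced bound.

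The main technical obstacle is the precise bookkeeping of the pair of multi-index invariants $\sum_p\alpha_p$ and $\sum_p p\alpha_p$ under multiplication and differentiation, together with the verification that $|j|\geq 1$ in every contributing term. The latter is automatic: every $\psi_k'$ with $k\geq 0$ already satisfies $|j|\geq 1$ in Lemma~\ref{Lemma psi k'} (the $|j|=0$ slot is vacuous since $|r|=k+1>0$ forces $\mathcal{I}_{(0,0)}^{r,\cdot}=\emptyset$), so any product or single differentiation contributing to $\phi_{n+\ell}$ carries $|j|\geq 1$. Once the two stability rules are isolated, the remaining argument is a finite, purely algebraic check within the family $D_j^{r,s}$.
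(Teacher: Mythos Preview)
Your proposal is correct and follows essentially the same approach as the paper's proof in Appendix~\ref{App.transport}: the $n=0$ case is dispatched by direct computation, and for $n\geq 1$ the remainder is split into the second-derivative piece $\psi_{n-1}''$, the $(K_\lambda'/K_\lambda)\psi_{n-1}'$ piece, and the quadratic products $\psi_k'\psi_{n+\ell-k}'$, each pushed into the $D_j^{r,s}$ classes via the product and differentiation rules (which the paper isolates as Lemma~\ref{Lemma Property Djr}) together with the structural form of $\psi_k'$ from Lemma~\ref{Lemma psi k'}. Your remarks that the $|j|=0$ contributions are vacuous when $|r|\geq 1$ and that the maximal derivative order $n+1$ is reached only through $\psi_{n-1}''$ coincide with the paper's reasoning.
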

These lemmata can be proved, by induction, in the same manner as in Appendix of \cite{KS19}. Therefore we omit the proofs here.
\begin{remark}
At the end of this section, we want to show that choosing the shape of the pseudomode for the Dirac operator also plays important role technically. From the beginning, if we choose the basic form 
$\Psi_{\lambda}=\begin{psmallmatrix}
u_{\lambda}\\ v_{\lambda}
\end{psmallmatrix}$ and insert it to the eigenvalue equation $(H_{V}-\lambda)\Psi_{\lambda}=0$, we will have to deal WKB with the electromagnetic-like Schr\"{o}dinger operator
\[ \widetilde{\mathscr{L}}_{\lambda,V} =\left(-i\partial_{x}+V_{12}\right)\frac{1}{\lambda+m-V_{22}}\left(-i\partial_{x}+V_{21}\right)-\left(\lambda-m-V_{11}\right).\]
Then its formal conjugated operator is described as follows
\begin{align*}
\widetilde{\mathscr{L}}_{\lambda,V}^{P} &:= e^{P}\mathscr{L}_{\lambda,V} e^{-P}\\
&= \frac{k_1}{\lambda+m-V_{22}}\left[ -\partial_{x}^2 +\left( 2 P^{(1)}-\frac{V_{22}^{(1)}}{\lambda+m-V_{22}}-i(V_{12}+V_{21})\right)\partial_{x} \right.\\
&\hspace{3cm}\left. + P^{(2)}+\left(\frac{V_{22}^{(1)}}{\lambda+m-V_{22}}+i(V_{12}+V_{21})\right) P^{(1)}-(P^{(1)})^2-\widetilde{V}_{\lambda}\right]\,,
\end{align*}
where 
\[\widetilde{V}_{\lambda}= (\lambda-m-V_{11})(\lambda+m-V_{22})-V_{12}V_{21}+iV_{21}^{(1)}+iV_{21}\frac{V_{22}^{(1)}}{\lambda+m-V_{22}}.\]
We see that this form of $\widetilde{V}_{\lambda}$ is very complicated to consider its square-root. Furthermore, by solving some firsts transport equations, we recognize that the sum $(V_{12}+V_{21})$ attached with $P^{(1)}$ will destroy the structure of solutions of transport equations. These difficulties will make our WKB analysis unusable. Therefore, multiplying $u_{\lambda}$ and $v_{\lambda}$  with respectively, $k_{1}$ and $k_{2}$ not only gauges out $V_{12}$ and $V_{21}$, but also allows this method to be workable.
\end{remark} 
\section{Pseudomodes for $\lambda \to \pm \infty$}\label{Section Real}
Let us recall here the picture of the Schr\"{o}dinger operators to compare and outline the direction for the Dirac operators simultaneously. It is well known that the spectrum of 
the free Schr\"{o}dinger operator 
(\emph{i.e.}\ the Laplacian in $L^2(\R)$ 
with domain being the Sobolev space $H^2(\R)$) 
is given by the set $[0,+\infty)$. In \cite{KS19}, when the pseudoeigenvalue $\lambda$ is real, the pseudomode of the Schr\"{o}dinger operator with the complex-valued potential are constructed successfully when $\lambda$ is positive and very large. 
Now, as mentioned in Section \ref{Section Dirac op}, 
the spectrum of the free Dirac operator is a set which is symmetric through the origin, see~\eqref{spectrum}. 
Therefore, this evokes that the construction of the pseudomode for the Dirac operator~$H_{V}$ for the positive and negative~$\lambda$'s can be established.  

This expectation is also supported by looking at the structure of the WKB construction in both cases, especially the solution of eikonal equation which depends on the square root of $V_{\lambda}$:
\begin{equation*}
\begin{aligned}
&\text{\textbf{In the Schr\"{o}dinger case:}} &&V_{\lambda}^{\text{Schr\"{o}dinger}}:=\lambda-V^{\text{Schr\"{o}dinger}}.\\
&\text{\textbf{In the Dirac case:}} &&V_{\lambda}^{\text{Dirac}}:= (\lambda-m-V_{11})(\lambda+m-V_{22}).
\end{aligned}
\end{equation*}
Here, $V^{\text{Schr\"{o}dinger}}$ denotes the scalar potential in the Schr\"{o}dinger operator. Assume that the real part of~$V^{\text{Schr\"{o}dinger}}$ 
(respectively, real parts of $V_{11}$ and $V_{22}$) 
in the Schr\"{o}dinger  
(respectively, Dirac) case is 
(respectively, are) bounded. 
Then the principal branch of the square root is well-defined only when $\lambda\to + \infty$ in the Schr\"{o}dinger case, 
while it is also able to be valid when $\lambda\to - \infty$ in the Dirac case.

However, in this section, unless otherwise stated, we always assume that $\lambda$ is positive. The case of negative~$\lambda$ can be considered analogously (see Remark~\ref{Remark sign} below).

%----------------------------------------------------------------------------------------------------------------------
% 										When lambda -> + \infty
%----------------------------------------------------------------------------------------------------------------------
\subsection{General shapes of the potentials}
Let us denote
\begin{equation*}
F(x) := \int_{0}^{x} \left(
\mathrm{Im}\, V_{11}(t) + \mathrm{Im}\, V_{22}(t) 
\right)
\, \dd t.
\end{equation*}

Our main hypothesis reads as follows.
\begin{assumpA}\label{Assump General}
Let $N\in \N_{0}$, assume that $V_{11},V_{22}\in W^{N+1,\infty}_{\mathrm{loc}}(\R)$, $V_{12},V_{21}\in W^{N,\infty}_{\mathrm{loc}}(\R)$ and there exist $a_{\pm}>0$, by denoting $I^{\pm}:= \{x\in \R_{\pm}: \vert x \vert >a_{\pm}\}$, such that 
\begin{enumerate}[1)]
\item  the sum of diagonal terms of $V$ has a different asymptotic behaviour at $\pm \infty$:
\begin{equation}\label{Assump G Diagonal 1}
\begin{aligned}
&\mathrm{Im}\, V_{11}+\mathrm{Im}\, V_{22}(x)\lesssim -1 , \qquad &&\forall x\in I^{-},\\
&\mathrm{Im}\, V_{11}+\mathrm{Im}\, V_{22}(x) \gtrsim 1,\qquad &&\forall x\in I^{+},
\end{aligned}
\end{equation}
and there exist $\mu_{\pm}\in (0,1]$ such that
\begin{equation}\label{Assump G Diagonal 2}
\vert\mathrm{Im} V_{11}(x)+\mathrm{Im} V_{22}(x)\vert \geq \mu_{\pm} \left( \vert \mathrm{Im} V_{11}(x)\vert+\vert \mathrm{Im} V_{22}(x)\vert\right),\qquad \forall x\in I^{\pm};
\end{equation}
\item the primitive of the sum of off-diagonal terms $\mathcal{U}:= \mathrm{Im}\, V_{12} + \mathrm{Im}\, V_{21}$  is controlled by~$F$ at $\pm \infty$: there exist $\varepsilon_{\pm}\in \left(0, \frac{\mu_{\pm}}{2}\right)$ such that
\begin{equation}\label{Assump G Off-Diagonal 1}
\int_{0}^{x} \mathcal{U}(t) \, \dd t \leq 2\varepsilon_{\pm} F(x), \qquad \forall x\in I^{\pm};
\end{equation}
\item there exist continuous functions $f_{\pm}:I^{\pm}\to\R_{+}$ such that
\begin{equation}\label{Eq function f}
 f_{\pm}(x) \lesssim F(x), \qquad \forall x\in I^{\pm},
\end{equation}
and, for all $i\in \{1,2\}$,
\begin{align}
&\forall n \in [[1,N+1]],\qquad &&\vert V_{ii}^{(n)}(x) \vert\lesssim  f_{\pm}(x)^{n} \left\vert  V_{ii}(x)\right\vert , \qquad &&\forall x \in I^{\pm},\label{Assump G Der Vii}\\
& \forall n \in [[0,N]],\qquad &&\vert (V_{21}-V_{12})^{(n)}(x) \vert\lesssim f_{\pm}(x)^{n+1}, \qquad  &&\forall x \in I^{\pm}. \label{Assump G Der Vij}
\end{align}
\end{enumerate}
\end{assumpA}

Notice that the first condition~\eqref{Assump G Diagonal 1}
implies that
\begin{equation}\label{Eq F}
F(x) \gtrsim \vert x \vert, \qquad \forall \vert x \vert \gtrsim 1.
\end{equation}
Next lines gather some comments on Assumption \ref{Assump General}. Let us recall the expression of $\lambda\psi_{-1}^{(1)}$ when $\lambda\in \R$:
\begin{align}\label{Eq Re psi-1}
\mathrm{Re}\,\left( \lambda \psi_{-1}^{(1)}(x)\right)
&= \frac{1}{\sqrt{2}} \frac{\mathrm{Im}\, V_{11} (\lambda +m-\mathrm{Re} \,V_{22})+\mathrm{Im}\, V_{22} (\lambda -m-\mathrm{Re} \,V_{11})}{\sqrt{\vert V_{\lambda}\vert+ \mathrm{Re} \,V_{\lambda}}}.
\end{align}
We will see later that the shape of the pseudomode depends a lot on $\psi_{-1}$ and the sign of $\mathrm{Im}\, V_{11}+\mathrm{Im}\, V_{22}$ (which is attached with very large $\lambda$) will decide the sign for the decay of the pseudomode. The larger the sum is, the faster the pseudomode decreases at infinity (see the proof of Proposition \ref{Lemma L2 norm}). Furthermore, by looking at Remark~\ref{Rem.normal},  
the assumption \eqref{Assump G Diagonal 1} also ensures that the operator defined in~\eqref{Eq Def H_V} is \enquote{significantly non-self-adjoint}.

From the conditions \eqref{Assump G Diagonal 2}, we deduce the similarity of the sum of absolute values and the absolute value of the sum of $\mathrm{Im}\,V_{11}$ and $\mathrm{Im}\,V_{22}$ in the neighbourhood of infinity:
\begin{equation}\label{Eq Sim of Abso}
\vert\mathrm{Im} V_{11}(x)+\mathrm{Im} V_{22}(x)\vert \approx \vert \mathrm{Im} V_{11}(x)\vert+\vert \mathrm{Im} V_{22}(x)\vert,\qquad \forall x \in I^{\pm}.
\end{equation}
If $\mathrm{Im}\, V_{11}$, $\mathrm{Im} \,V_{22}$ have the same signs at $+ \infty$ (or $-\infty$), the condition \eqref{Assump G Diagonal 2} is obviously satisfied. Thus, these conditions guarantee that the opposite signs of $\mathrm{Im}\, V_{11}$ and $\mathrm{Im}\, V_{22}$ does not spoil the decay of the quasimode.
%\begin{example}
%We list here some smooth potentials whose off-diagonal terms are ignored and imaginary parts of diagonal terms have the opposite signs at $+ \infty$ (and $-\infty$):
%\begin{align*}
%V(x) &:=
%\begin{pmatrix}
%i x && 0\\
%0 && 0
%\end{pmatrix}\text{ with } \mu_{\pm}=1; 
%\\ 
%V(x) &:=
%\begin{pmatrix}
%3i x && 0\\
%0 && -i x
%\end{pmatrix} \text{ with } \mu_{\pm}=\frac{1}{2};
%\\
%V(x) &:=
%\begin{pmatrix}
%i x^3 && 0\\
%0 && -ix
%\end{pmatrix} \text{ with any } \mu_{\pm}\in (0,1).
%\end{align*}
% Clearly, each of the potentials satisfies all the conditions of Assumption~\ref{Assump General}.
%\end{example}
As for the condition~\eqref{Assump G Off-Diagonal 1}, it is easy to find $V_{12}$ and $V_{21}$ that can verify this. Indeed, since $F(x)$ is positive at $\pm \infty$, the class of all functions $V_{12}$ and $V_{21}$ such that 
$\mathcal{U}(x) \geq 0$ for all $x \lesssim -1$
and $\mathcal{U}(x) \leq 0$ for all $x \gtrsim 1$
%\begin{align*}
%&\mathcal{U}(x) \geq 0,\qquad \forall x \lesssim -1,\\
%&\mathcal{U}(x) \leq 0 , \qquad\forall x \gtrsim 1,
%\end{align*}
will fulfil \eqref{Assump G Off-Diagonal 1} completely.

The conditions \eqref{Assump G Diagonal 1}, \eqref{Assump G Diagonal 2} and \eqref{Assump G Off-Diagonal 1} of Assumption \ref{Assump General} combine together to ensure the exponential decay of all the terms attached with $\xi^{(1)}$ and $\xi^{(2)}$ in \eqref{Eq L u}. 

The two conditions~\eqref{Assump G Der Vii} and \eqref{Assump G Der Vij} help us to control any wild behaviour of the derivatives of diagonal and off-diagonal terms which will be appearing in the formula of $\psi_{k}^{(1)}$ for $k\geq 0$. 

\subsection{Shapes of the cut-off functions}\label{Section: cutoff}
The role of the cut-off functions in the construction of pseudomodes is very important. Not all functions which are created from the WKB method would become the pseudomodes for the operator, since most of them do not belong to the domain of the operator. Therefore, the cut-off functions are added to complete this task. Furthermore, as discussed in Remark \ref{Remark WKB}, when $V_{11}$ and $V_{22}$ are differentiable, in order to make the $V_{\lambda}^{1/2}$ well-defined (\emph{i.e.}\ non-multi-valued) and differentiable, the condition \eqref{Eq Cond Re} need to be satisfied. It is obvious that \eqref{Eq Cond Re} will be broken if $\mathrm{Re}\, V_{11}$ or $\mathrm{Re}\, V_{22}$ is not bounded. Thus, it is necessary to employ a suitable cut-off function whose support allows \eqref{Eq Cond Re} to occur. 

Let us denote by $\xi: \R \to [0,1]$ the cut-off function satisfying the following properties
\begin{equation}\label{Eq cutoff}
\left\{
\begin{aligned}
&\xi \in C^{\infty}(\R),\\
&\xi(x)=1, \qquad \forall x \in (-\delta^{-}_{\lambda}+\Delta^{-}_{\lambda}, \delta^{+}_{\lambda}-\Delta^{+}_{\lambda}),\\
&\xi(x)=0, \qquad \forall x \in \R\setminus (-\delta^{-}_{\lambda},\delta^{+}_{\lambda}),
\end{aligned}\right.
\end{equation}
where $\delta^{\pm}_{\lambda}$ and $\Delta^{\pm}_{\lambda} <\delta^{\pm}_{\lambda}$ are $\lambda$-dependent positive numbers which will be determined later. Notice that the cut-off $\xi$ can be selected in such a way that
\begin{equation}\label{Eq cutoff inequa}
\Vert \xi^{(j)} \Vert_{L^{\infty}(\R_{\pm})} \lesssim \left(\Delta^{\pm}_{\lambda}\right)^{-j},\qquad j\in \{1,2\}.
\end{equation}
To simplify the notation, we define the following sets
$$
\begin{aligned}
&J_{\lambda}^{-}:=(-\delta^{-}_{\lambda},0],\qquad &&J_{\lambda}^{+}:=[0,\delta^{+}_{\lambda}), \\
&\widetilde{J}_{\lambda}^{-}:=(-\delta^{-}_{\lambda}+\Delta^{-}_{\lambda},0],\qquad &&\widetilde{J}_{\lambda}^{+}:=[0,\delta^{+}_{\lambda}-\Delta^{+}_{\lambda}),
\end{aligned}
\qquad \text{and } \qquad J_{\lambda}:= J_{\lambda}^{-}\cup J_{\lambda}^{+}.\
$$

The next lemma is set up to define the boundary of the cut-off functions.
\begin{lemma}\label{Lemma delta}
Let $a>0$ and let $g:[a,+\infty) \to [0,+\infty)$ be a continuous function and let $\lambda$ be a positive number, we define 
\begin{equation}\label{Eq G delta}
\delta(\lambda) : = \inf\left\{ x\geq a: g(x)=\lambda\right\} .
\end{equation}
Then $\delta(\lambda)$ can be infinite $(\inf \emptyset = +\infty)$, however, when $g$ is unbounded at $+\infty$ and for all sufficiently large $\lambda>0$, the number $\delta(\lambda)$ is finite and
\begin{equation}\label{Eq G Property delta 1}
\lim_{\lambda\to +\infty} \delta(\lambda)=+\infty.
\end{equation}
Furthermore, if $\lambda>g(a)$ then
\begin{equation}\label{Eq G Property delta 2}
 g(x)\leq \lambda,
 \qquad \forall x\in [a,\delta(\lambda)].
\end{equation}
\end{lemma}
\begin{proof}
When $g$ is unbounded at $+\infty$ and $\lambda>\min_{x\geq a} g(x)$, the number $\delta(\lambda)$ is finite. Given arbitrary $M>a$, we consider $ \lambda\geq \max_{[a,M]} g(x)+1$, then $\delta(\lambda)\geq M$, thus the unboundedness of $\delta(\lambda)$ is checked. In order to prove \eqref{Eq G Property delta 2} under the assumption that $\lambda>g(a)$, we assume opposite that there exists $x_{0}\in [a,\delta(\lambda)]$ such that $g(x_0)>\lambda$, then by the intermediate value theorem, there exists $\tilde{x}_{0}\in (a,x_{0})$ such that $g(\tilde{x}_{0})=\lambda$. This implies that $\tilde{x}_{0}\geq \delta(\lambda)$ which is a contradiction.
\end{proof}
%By employing the definition \eqref{Eq G delta} in the above lemma, we separately define $\delta^{\pm}_{\lambda}$ through defining their associated functions $g_{\pm}$.  Namely, we consider the following cases:
By using Lemma \ref{Lemma delta}, we introduce the boundary of the cut-off functions
\begin{equation}\label{Eq delta Def}
\delta^{\pm}_{\lambda} : = \inf\left\{ x\geq a_{\pm}: g_{\pm}(x)=\lambda\right\} 
\end{equation}
through defining functions $g_{\pm}:[a_{\pm},+\infty) \to [0,+\infty)$ as follows
\begin{equation}\label{Eq G g 1}
g_{\pm}( x):=\max \left\{ \frac{1}{\eta} \vert \mathrm{Re} \,V_{11}+m \vert,\,\frac{1}{\eta} \vert \mathrm{Re} \,V_{22}-m \vert,\,\frac{1}{\eta}\vert \mathrm{Im}\, V_{11}- \mathrm{Im}\, V_{22}\vert,\,f_{\pm}^{\frac{2}{1-\varepsilon_{1}}}\right\}(\pm x).
\end{equation}
Here, $\varepsilon_{1}$, $\eta$ are fixed numbers such that
%\begin{equation*}
$0<\varepsilon_{1}<1$ 
and $0<\eta < \min\{\mu_{-},\mu_{+} \}$,
%\end{equation*}
in which $\eta$ will be chosen small enough later in Lemma \ref{Lemma Decay solution}. 
\begin{remark}
The continuity of $g_{\pm}$ will be given by the continuities of $V_{11}$, $V_{22}$ (since they belong to $W^{1,\infty}_{\mathrm{loc}}(\R)$) and of $f_{\pm}$. Note that, when $g_{+}$ is bounded at $+\infty$, \emph{i.e.}, all the functions $ \mathrm{Re}\, V_{11}, \mathrm{Re}\, V_{22}, \vert \mathrm{Im}\, V_{11} - \mathrm{Im}\, V_{22} \vert$ and $f_{+}$ are bounded at $+\infty$, we have $\delta^{+}_{\lambda}=+\infty$ for all sufficiently large $\lambda>0$. In this case, we want to say that $\xi$ is constant on the positive side, \emph{i.e.} $\xi(x)=1$ for all $x\geq 0$. This remark is also the same for $g_{-}$ for the negative axis. In other words, sometimes we may not need the cut-off functions to localize the pseudomode.
\end{remark}
When $\delta_{\lambda}^{\pm}$ is finite, we define
\begin{equation}\label{Eq G Delta}
\Delta^{\pm}_{\lambda}:=\frac{1}{\delta^{\pm}_{\lambda}}.
\end{equation}
\begin{proposition}
There exists $\lambda_{0}>0$ such that for all $\lambda>\lambda_{0}$ and for all $x\in J_{\lambda}$, we have
\begin{enumerate}[i)]
\item
\begin{equation}\label{Rem Re V1122} 
\begin{aligned}
&(1-\eta)\lambda\leq  \lambda-m-\mathrm{Re}\, V_{11}(x)  \leq (1+\eta)\lambda,\\
&(1-\eta)\lambda\leq  \lambda+m-\mathrm{Re}\, V_{22}(x)  \leq (1+\eta)\lambda,\\
&\vert \mathrm{Im}\, V_{11}(x)- \mathrm{Im}\, V_{22}(x)\vert \leq \eta\lambda;
\end{aligned}
\end{equation}
\item
\begin{equation}\label{Rem V11 approx V22} 
\begin{aligned}
\vert \lambda-m-V_{11}(x) \vert \approx\vert \lambda+m-V_{22}(x) \vert.
\end{aligned}
\end{equation}
\end{enumerate}
\end{proposition} 
\begin{proof}
In case $\mathrm{Re}\, V_{11}$, $\mathrm{Re}\, V_{22}$ and $\mathrm{Im}\, V_{11}- \mathrm{Im}\, V_{22}$ are bounded at infinity, it is easy to check the above estimates. Now we assume that the unboundedness of  $\mathrm{Re}\, V_{11}$ or  $\mathrm{Re}\, V_{22}$ or $\mathrm{Im}\, V_{11}- \mathrm{Im}\, V_{22}$ at $+\infty$ occurs. 
The case of unboundness at the negative infinity is analogous. 
It follows from the estimate \eqref{Eq G Property delta 2} that, 
for all $x\in J_{\lambda}^{+}$,
\begin{equation*}
\left\{ 
\begin{aligned}
&\vert \mathrm{Re}\, V_{11} (x)+m \vert\leq \eta \lambda,\\
&\vert \mathrm{Re}\, V_{22}(x) -m \vert\leq \eta \lambda,\\
&\vert \mathrm{Im}\, V_{11}(x)- \mathrm{Im}\, V_{22}(x)\vert \leq \eta\lambda.
\end{aligned}
\right.
\end{equation*}
Consequently, the three estimates in \eqref{Rem Re V1122} follow. 
From them, we deduce that
\begin{align*}
\frac{\vert \lambda-m-V_{11}(x) \vert}{\vert \lambda+m-V_{22}(x) \vert} &\lesssim \frac{\vert\lambda-m- \mathrm{Re}\,V_{11}(x) \vert+\vert\mathrm{Im}\, V_{11}\vert}{\vert \lambda+m-\mathrm{Re}\,V_{22}(x)\vert+ \vert\mathrm{Im}\, V_{22}\vert}\\
&\leq \frac{\vert\lambda-m- \mathrm{Re}\,V_{11}(x) \vert+\vert\mathrm{Im}\, V_{11}-\mathrm{Im}\, V_{22}\vert+\vert\mathrm{Im}\, V_{22}\vert}{\vert \lambda+m-\mathrm{Re}\,V_{22}(x)\vert+ \vert\mathrm{Im}\, V_{22}\vert}\\
&\leq \frac{(1+2\eta)\lambda+\vert\mathrm{Im}\, V_{22}\vert}{(1-\eta)\lambda+ \vert\mathrm{Im}\, V_{22}\vert}\leq \frac{1+2\eta}{1-\eta}.
\end{align*}
Thus, 
$
\vert \lambda-m-V_{11}(x) \vert \lesssim \vert \lambda+m-V_{22}(x) \vert.
$
The other direction is proved analogously, 
therefore the second estimate \eqref{Rem V11 approx V22} is verified.
\end{proof}
\subsection{Auxiliary steps}\label{Subsection Auxi Steps}
The next lemma shows us that $V_{\lambda}$ and $K_{\lambda}$ 
inherit the properties of~$V$ in \eqref{Assump G Der Vii} and \eqref{Assump G Der Vij}.

\begin{lemma}\label{Lemma VK}
Let $N\in \N_{0}$, assume that $V_{11},V_{22}\in W^{N+1,\infty}_{\mathrm{loc}}(\R)$ and $V_{12},V_{21}\in W^{N,\infty}_{\mathrm{loc}}(\R)$ satisfy the assumptions \eqref{Assump G Der Vii} and \eqref{Assump G Der Vij}. There exists $\lambda_{0}>0$ such that, for all $\lambda>\lambda_{0}$ and for all $\ell\in  [[1,N+1]]$, we have
\begin{align*}
&\text{on } I^{\pm}\cap J_{\lambda}^{\pm},\qquad
&&\begin{aligned}
&\vert V_{\lambda}^{(\ell)}\vert \lesssim \vert V_{\lambda} \vert \frac{ f_{\pm}^{\ell}\max\{\vert V_{11}\vert,\vert V_{22}\vert\}}{\vert \lambda+m-V_{22} \vert},%\\
&\vert K_{\lambda}^{(\ell)} \vert \lesssim \vert K_{\lambda} \vert  f_{\pm}^{\ell},
\end{aligned}\\
&\text{and}\\
&\text{on }[-a_{-},a_{+}],\qquad 
&&\begin{aligned}
&\vert V_{\lambda}^{(\ell)} \vert \lesssim \frac{\vert V_{\lambda} \vert}{\vert \lambda+m-V_{22}\vert},%\\
&\vert K_{\lambda}^{(\ell)} \vert \lesssim \vert K_{\lambda}\vert.
\end{aligned}
\end{align*}
Furthermore, if $V_{12}=V_{21}$, we have
\begin{equation*}
\begin{aligned}
&\text{on } I^{\pm}\cap J_{\lambda}^{\pm},
\qquad &&\vert K_{\lambda}^{(\ell)} \vert \lesssim \vert K_{\lambda} \vert \frac{ f_{\pm}^{\ell}\vert V_{22}\vert}{\vert \lambda+m-V_{22} \vert}, \\
&\text{on } [-a_{-},a_{+}], 
\qquad &&\vert K_{\lambda}^{(\ell)} \vert \lesssim  \frac{\vert K_{\lambda} \vert}{\vert \lambda+m-V_{22}\vert}.
\end{aligned}
\end{equation*}
\end{lemma}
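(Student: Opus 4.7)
The plan is to prove the two families of estimates via explicit differentiation: Leibniz's rule for $V_\lambda$, and Fa\`a di Bruno / Bell-polynomial expansion applied to $K_\lambda$ written multiplicatively. The basic observation to use throughout is that on the cut-off region $J_\lambda$ the individual magnitudes $|V_{11}|$ and $|V_{22}|$ are themselves dominated by $|\lambda+m-V_{22}|$. Indeed, \eqref{Rem Re V1122} yields $|\lambda+m-V_{22}|\geq(1-\eta)\lambda$, whence $|V_{22}|\leq |\lambda+m-V_{22}|+(\lambda+m)\lesssim |\lambda+m-V_{22}|$, and by the comparability \eqref{Rem V11 approx V22} the same bound holds with $V_{11}$ in place of $V_{22}$. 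Together with the assumed derivative bounds \eqref{Assump G Der Vii}--\eqref{Assump G Der Vij}, this single observation drives every estimate on $I^\pm\cap J_\lambda^\pm$.

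For $V_\lambda^{(\ell)}$ I would apply Leibniz's rule directly to $V_\lambda=(\lambda-m-V_{11})(\lambda+m-V_{22})$, noting that for $k\geq 1$ the $k$-th derivative of either bracket is just $-V_{jj}^{(k)}$. The two boundary terms of the Leibniz sum (all derivatives on a single factor) contribute $|V_{jj}^{(\ell)}|\cdot|\lambda\pm m-V_{kk}|\lesssim |f|^{\ell\nu_\pm}|V_{jj}|\cdot|\lambda+m-V_{22}|$ by \eqref{Rem V11 approx V22}, which is already of the desired form once one writes $|V_\lambda|\approx |\lambda+m-V_{22}|^2$. Each interior term $V_{11}^{(k)}V_{22}^{(\ell-k)}$ is bounded by $|f|^{\ell\nu_\pm}|V_{11}||V_{22}|$ via \eqref{Assump G Der Vii}, and I absorb $\min\{|V_{11}|,|V_{22}|\}$ into a factor $|\lambda+m-V_{22}|$ using the basic observation above; the remainder matches the claimed $|V_\lambda||f|^{\ell\nu_\pm}\max\{|V_{11}|,|V_{22}|\}/|\lambda+m-V_{22}|$.

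For $K_\lambda^{(\ell)}$ I would write $K_\lambda=e^{g}(\lambda+m-V_{22})^{-1}$ with $g(x):=-i\int_0^x(V_{21}-V_{12})\,\dd t$, so that $g^{(n)}=-i(V_{21}-V_{12})^{(n-1)}$ for $n\geq 1$. Leibniz on the two factors, followed by Fa\`a di Bruno applied separately to each, gives $|(e^g)^{(j)}|\lesssim |e^g||f|^{j\nu_\pm}$ via \eqref{Assump G Der Vij} (Bell polynomial with uniformly bounded $|g^{(n)}|/|f|^{n\nu_\pm}$), together with a sum of monomials of the form $\prod V_{22}^{(a_i)}/(\lambda+m-V_{22})^{k+1}$ over partitions of $\ell-j$ for the second factor, each bounded by $|\lambda+m-V_{22}|^{-1}|f|^{(\ell-j)\nu_\pm}(|V_{22}|/|\lambda+m-V_{22}|)^{k}$ thanks to \eqref{Assump G Der Vii} and the basic observation (which gives $|V_{22}|/|\lambda+m-V_{22}|\lesssim 1$). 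Recombining yields $|K_\lambda^{(\ell)}|\lesssim |K_\lambda||f|^{\ell\nu_\pm}$; the sharper bound under $V_{12}=V_{21}$ is then immediate, because $g$ becomes constant and only the $(\lambda+m-V_{22})^{-1}$-derivatives remain, so one entire factor $|V_{22}|/|\lambda+m-V_{22}|$ survives. On the compact interval $[-a_-,a_+]$ the argument reduces to a routine check using the $W^{N+1,\infty}_{\mathrm{loc}}$-hypothesis together with $|\lambda+m-V_{22}|\gtrsim\lambda$ for $\lambda>\lambda_0$ sufficiently large. The only mildly delicate point will be bookkeeping the $|f|^{\nu_\pm}$-powers and the factors $|V_{22}|/|\lambda+m-V_{22}|$ through the Fa\`a di Bruno expansion, but since no cancellation is required the estimates follow by direct counting.
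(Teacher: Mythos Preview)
Your proposal is correct and follows essentially the same route as the paper: Leibniz on $V_\lambda=(\lambda-m-V_{11})(\lambda+m-V_{22})$, Leibniz plus Fa\`a di Bruno on $K_\lambda=e^{g}(\lambda+m-V_{22})^{-1}$ with $g'=-i(V_{21}-V_{12})$, and the reduction to $(\lambda+m-V_{22})^{-1}$ alone when $V_{12}=V_{21}$. Your ``basic observation'' $|V_{jj}|\lesssim|\lambda+m-V_{22}|$ on $J_\lambda$ is exactly the paper's input $\frac{|V_{jj}|}{|\lambda\mp m-V_{jj}|}\lesssim 1$ together with \eqref{Rem V11 approx V22}, and the compact-set estimates are handled identically via the local $W^{N+1,\infty}$ regularity.
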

\begin{proof}
We can choose $\lambda_{0}>0$ satisfying $\delta_{\lambda}^{\pm}> a_{\pm}$ for all $\lambda>\lambda_{0}$, thanks to \eqref{Eq G Property delta 1}. Then, $I^{\pm}\cap J_{\lambda}^{\pm} \neq \emptyset$.
From the formula of $V_{\lambda}$, the general Leibniz rule for the $\ell$-th derivative of the product yields that
\begin{align*}
(V_{\lambda})^{(\ell)} &= \sum_{k=0}^{\ell} \begin{pmatrix}
\ell\\
k
\end{pmatrix}(\lambda-m-V_{11})^{(k)} (\lambda+m-V_{22})^{(\ell-k)}\\ 
&=-(\lambda-m-V_{11})V_{22}^{(\ell)}+\sum_{k=1}^{\ell-1} \begin{pmatrix}
\ell\\
k
\end{pmatrix} V_{11}^{(k)}V_{22}^{(\ell-k)} -(\lambda+m-V_{22})V_{11}^{(\ell)}\,.
\end{align*}
From the assumption \eqref{Assump G Der Vii}, we obtain the estimate on $I^{\pm}\cap J_{\lambda}^{\pm}$,
\begin{equation}\label{Eq Estimate V}
\begin{aligned}
\frac{\left\vert V_{\lambda}^{(\ell)}\right\vert}{\left\vert V_{\lambda}\right\vert} &\lesssim \frac{ f_{\pm}^{\ell} \vert V_{22}\vert}{\vert \lambda+m-V_{22}\vert}+  \sum_{k=1}^{\ell-1} \frac{ f_{\pm}^{\ell}\vert V_{11}\vert \vert V_{22} \vert}{\vert \lambda-m-V_{11} \vert \vert \lambda+m-V_{22} \vert}+\frac{f_{\pm}^{\ell}\vert V_{11}\vert}{\vert \lambda-m-V_{11}\vert}\\
&\lesssim \frac{ f_{\pm}^{\ell}\max\{\vert V_{11}\vert,\vert V_{22}\vert\}}{\vert \lambda+m-V_{22} \vert}\,,
\end{aligned}
\end{equation}
where in the last step, we used \eqref{Rem V11 approx V22} and the fact that (with some large $\lambda_{0}$), 
\begin{align*}
\frac{ \vert V_{11}\vert }{\vert \lambda-m-V_{11} \vert}\lesssim \frac{\vert\mathrm{Re}\,V_{11}\vert+\vert\mathrm{Im}\, V_{11}\vert}{\vert\lambda-m-\mathrm{Re}\, V_{11}\vert+\vert\mathrm{Im} V_{11}\vert}\leq \frac{\eta\lambda+m+\vert\mathrm{Im}\, V_{11}\vert}{(1-\eta)\lambda+\vert\mathrm{Im}\, V_{11}\vert} \lesssim 1.
\end{align*}
Next, we prove the estimate for $K_{\lambda}$. Let us recall that
\[ K_{\lambda} =  \frac{1}{\lambda+m-V_{22}}e^{u},\qquad \text{ with } \displaystyle u(x):=-i\int_{0}^{x}(V_{21}-V_{12}) (\tau) \,\dd \tau.\]
The Leibniz rule also leads us to
\begin{align*}
(K_{\lambda})^{(\ell)} = \sum_{k=0}^{\ell} \begin{pmatrix}
\ell\\
k
\end{pmatrix} \left(\frac{1}{\lambda+m-V_{22}} \right)^{(k)}\left(e^{u}\right)^{(\ell-k)}.
\end{align*}
Using Fa\`{a} di Bruno's formula for the derivative of a composition of two functions (see \cite{S80}), we get
\begin{equation*}
\left\{
\begin{aligned}
\left(\frac{1}{\lambda+m-V_{22}} \right)^{(k)} &= \frac{1}{\lambda+m-V_{22}}\sum_{1\alpha_1+2\alpha_2+\cdots+k \alpha_{k}=k} \frac{k!}{\alpha_1!\alpha_2!\ldots\alpha_{k}!} \prod_{j=1}^{k} \left(\frac{V_{22}^{(j)}}{j!(\lambda+m-V_{22})}\right)^{\alpha_{j}},\\
\left(e^{u}\right)^{(\ell-k)} &= e^{u} \sum_{1\beta_1+2\beta_2+\cdots+(\ell-k) \beta_{\ell-k}=\ell-k} \frac{(\ell-k)!}{\beta_1!\beta_2!\ldots\beta_{\ell-k}!}\prod_{j=1}^{\ell-k} \left(\frac{u^{(j)}}{j!}\right)^{\beta_{j}},
\end{aligned}
\right.
\end{equation*}
where  $(\alpha_{j})_{1\leq j \leq k}$ and $(\beta_{j})_{1\leq j\leq \ell-k}$ are non-negative integers.

From the assumption \eqref{Assump G Der Vii} for~$V_{22}$ and \eqref{Assump G Der Vij}, we obtain the estimate on $I^{\pm} \cap J_{\lambda}^{\pm}$,
\begin{equation}\label{Eq Estimate K}
\begin{aligned}
\vert K_{\lambda}^{(\ell)} \vert 
&\lesssim \sum_{k=0}^{\ell} \vert K_{\lambda} \vert\left(\sum_{1\alpha_1+2\alpha_2+\cdots+\ell\alpha_{k}=k} \frac{k!}{\alpha_1!\alpha_2!\ldots\alpha_{k}!} \prod_{j=1}^{k} \left(\frac{\vert V_{22}\vert f_{\pm}^{j}}{j!\vert\lambda+m-V_{22}\vert}\right)^{\alpha_{j}} \right)\\
&\hspace{2 cm} \times \left( \sum_{1\beta_1+2\beta_2+\cdots+(\ell-k) \beta_{\ell-k}=\ell-k} \frac{(\ell-k)!}{\beta_1!\beta_2!\ldots\beta_{\ell-k}!}\prod_{j=1}^{\ell-k} \left(\frac{ f_{\pm}^{j}}{j!}\right)^{\beta_{j}}\right)\\
&\lesssim  \vert K_{\lambda} \vert  f_{\pm}^{\ell}.
\end{aligned}
\end{equation}

If $V_{12}=V_{21}$, then $u=0$ and we have, on $I^{\pm} \cap J_{\lambda}^{\pm}$ and for all $\ell\geq 1$,
\begin{align*}
\vert K_{\lambda}^{(\ell)} \vert &= \left\vert \frac{1}{\lambda+m-V_{22}}\sum_{1\alpha_1+2\alpha_2+\cdots+\ell \alpha_{\ell}=\ell} \frac{\ell!}{\alpha_1!\alpha_2!\ldots\alpha_{\ell}!} \prod_{j=1}^{\ell} \left(\frac{V_{22}^{(j)}}{j!(\lambda+m-V_{22})}\right)^{\alpha_{j}}\right\vert \\
&\lesssim \vert K_{\lambda}\vert \sum_{1\alpha_1+2\alpha_2+\cdots+\ell \alpha_{\ell}=\ell} \prod_{j=1}^{\ell} \left(\frac{\vert V_{22}\vert f_{\pm}^{j}}{\vert\lambda+m-V_{22}\vert}\right)^{\alpha_{j}}\\
&\lesssim \vert K_{\lambda}\vert f_{\pm}^{\ell}  \frac{\vert V_{22}\vert}{\vert \lambda+m-V_{22}\vert}.
\end{align*}
The last step is to the bound $\frac{\vert V_{22}\vert}{\vert \lambda+m-V_{22}\vert} \lesssim 1$, and the fact that $ \sum_{j=1}^{\ell} \alpha_{j} \geq 1$.

All the estimates for $x\in[-a_{-},a_{+}]$ hold thanks to the boundedness of the appearing derivatives of $V$ on a compact set.
\end{proof}
We use the next lemma to gather all the real parts of the diagonal terms to one group and their imaginary parts to the other group. This allows us to estimate the denominator of $ \mathrm{Re} \, (\lambda\psi_{-1}^{(1)})$ in an easier way. Furthermore, it also tells us that the case $\mathrm{Im}\, V_{11} = \mathrm{Im}\, V_{22}$ is very special.
\begin{lemma}
On $J_{\lambda}$, we have the following inequalities
\begin{equation}\label{Eq bound denominator}
\begin{aligned}
&\sqrt{\vert V_{\lambda} \vert + \mathrm{Re}\, V_{\lambda}} \geq \sqrt{2} \sqrt{(\lambda-m-\mathrm{Re}\, V_{11})(\lambda+m-\mathrm{Re}\, V_{22})},\\
&\sqrt{\vert V_{\lambda} \vert + \mathrm{Re}\, V_{\lambda}}\leq \frac{1}{\sqrt{2}}\sqrt{(\mathrm{Im} \, V_{11}-\mathrm{Im} \, V_{22})^2+(2\lambda-\mathrm{Re}\, V_{11}-\mathrm{Re}\, V_{22})^2}.
\end{aligned}
\end{equation}
\end{lemma}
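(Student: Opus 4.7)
The plan is to reduce both inequalities to elementary algebraic identities by introducing the real/imaginary decomposition. Set
\[
a_1 := \lambda - m - \mathrm{Re}\, V_{11}, \qquad a_2 := \mathrm{Im}\, V_{11}, \qquad b_1 := \lambda + m - \mathrm{Re}\, V_{22}, \qquad b_2 := \mathrm{Im}\, V_{22},
\]
so that $\lambda - m - V_{11} = a_1 - ia_2$ and $\lambda + m - V_{22} = b_1 - ib_2$. A direct computation gives
\[
\mathrm{Re}\, V_{\lambda} = a_1 b_1 - a_2 b_2, \qquad \lvert V_{\lambda}\rvert^2 = (a_1^2 + a_2^2)(b_1^2 + b_2^2).
\]
By \eqref{Rem Re V1122} (applied with $\eta<1$), on $J_{\lambda}$ one has $a_1, b_1 > 0$, so both sides of each inequality are non-negative and I can square freely. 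Note also that $(a_1+b_1)^2 = (2\lambda - \mathrm{Re}\, V_{11} - \mathrm{Re}\, V_{22})^2$ and $(a_2-b_2)^2 = (\mathrm{Im}\, V_{11} - \mathrm{Im}\, V_{22})^2$, which matches the quantities in the lemma.

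For the first inequality, after squaring, the claim reduces to $\lvert V_{\lambda}\rvert \geq a_1 b_1 + a_2 b_2$. This is immediate from the Cauchy--Schwarz inequality:
\[
\lvert V_{\lambda}\rvert = \sqrt{(a_1^2 + a_2^2)(b_1^2 + b_2^2)} \geq \lvert a_1 b_1 + a_2 b_2\rvert \geq a_1 b_1 + a_2 b_2.
\]

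For the second inequality, squaring reduces the claim to
\[
\lvert V_{\lambda}\rvert + \mathrm{Re}\, V_{\lambda} \leq \tfrac{1}{2}\bigl[(a_1 + b_1)^2 + (a_2 - b_2)^2\bigr].
\]
I would bound $\lvert V_{\lambda}\rvert$ by the arithmetic--geometric mean inequality,
\[
\lvert V_{\lambda}\rvert = \sqrt{(a_1^2+a_2^2)(b_1^2+b_2^2)} \leq \tfrac{1}{2}\bigl(a_1^2 + a_2^2 + b_1^2 + b_2^2\bigr),
\]
and then add $\mathrm{Re}\, V_{\lambda} = a_1 b_1 - a_2 b_2$; the resulting right-hand side regroups exactly as $\tfrac{1}{2}[(a_1+b_1)^2 + (a_2-b_2)^2]$, yielding the bound.

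There is no serious obstacle here: the whole argument is an application of Cauchy--Schwarz for the lower bound and of AM--GM for the upper bound, combined with the positivity of $a_1$ and $b_1$ provided by \eqref{Rem Re V1122}. The only point that deserves a line of care is ensuring that both sides of each inequality are non-negative so that squaring is legitimate.
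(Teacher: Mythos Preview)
Your proof is correct and follows essentially the same approach as the paper: Cauchy--Schwarz for the lower bound and AM--GM for the upper bound, after expanding $V_\lambda$ in terms of its real and imaginary parts. Your introduction of the shorthand $a_1,a_2,b_1,b_2$ and the explicit squaring step make the argument somewhat cleaner than the paper's version, but the underlying computation is identical.
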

\begin{proof}
Using the Cauchy--Schwarz inequality, we have
\begin{align*}
\vert V_{\lambda} \vert + \mathrm{Re}\, V_{\lambda}
&= \sqrt{[(\lambda-m-\mathrm{Re}\, V_{11})^2+ (\mathrm{Im}\, V_{11})^2][(\lambda+m-\mathrm{Re}\, V_{22})^2+ (\mathrm{Im}\, V_{22})^2]}\\
& \qquad +(\lambda+m-\mathrm{Re}\, V_{22})(\lambda-m-\mathrm{Re}\, V_{11})-(\mathrm{Im}\, V_{11})(\mathrm{Im}\, V_{22})\\
&\geq \left\vert(\lambda+m-\mathrm{Re}\, V_{22})(\lambda-m-\mathrm{Re}\, V_{11})+(\mathrm{Im}\, V_{11})(\mathrm{Im}\, V_{22}) \right\vert\\
& \qquad +(\lambda+m-\mathrm{Re}\, V_{22})(\lambda-m-\mathrm{Re}\, V_{11})-(\mathrm{Im}\, V_{11})(\mathrm{Im}\, V_{22})\\
&\geq 2(\lambda+m-\mathrm{Re}\, V_{22})(\lambda-m-\mathrm{Re}\, V_{11}).
\end{align*}
By an elementary inequality, 
the modulus of $V_{\lambda}$ can be bounded from above as follows:
\begin{align*}
\vert V_{\lambda} \vert &= \sqrt{[(\lambda-m-\mathrm{Re}\, V_{11})^2+ (\mathrm{Im}\, V_{11})^2][(\lambda+m-\mathrm{Re}\, V_{22})^2+ (\mathrm{Im}\, V_{22})^2]}\\
&\leq \frac{1}{2} \left((\lambda-m-\mathrm{Re}\, V_{11})^2+ (\mathrm{Im}\, V_{11})^2+(\lambda+m-\mathrm{Re}\, V_{22})^2+ (\mathrm{Im}\, V_{22})^2 \right).
\end{align*}
From this we can deduce successively that
\begin{equation*}
\vert V_{\lambda}\vert + \mathrm{Re}\, V_{\lambda} \leq \frac{1}{2} \left( \left(\mathrm{Im}\, V_{11}-\mathrm{Im}\, V_{22}\right)^2+\left(2\lambda-\mathrm{Re}\, V_{11}-\mathrm{Re}\, V_{22}\right)^2 \right).
\end{equation*}
%which concludes the proof of the lemma.
\vspace{-6ex} \\
\end{proof}

\begin{lemma}\label{Lemma WKB sol}
Let Assumption \ref{Assump General} hold for some $N\in \N_{0}$. Let $n\in [[0,N]]$ and $\{\psi_{k}^{(1)}\}_{k\in[[-1,n-1]]}$ be determined by \eqref{Eq Recursion} with the plus sign in the formula of $\psi_{-1}^{(1)}$. There exists $\lambda_{0}>0$ such that, for all $\lambda>\lambda_{0}$,
\begin{equation}\label{Eq Re psi-1'}
\begin{aligned}
&\text{on } I^{+}\cap J_{\lambda}^{+}, \qquad &&\mathrm{Re} \,(\lambda \psi_{-1}^{(1)} )\geq  \frac{\mu_{+}-\eta}{\sqrt{\eta^2+(2+2\eta)^2}} \left(\mathrm{Im}\, V_{11} +  \mathrm{Im}\, V_{22} \right),\\
&\text{on } I^{-}\cap J_{\lambda}^{-}, \qquad && \mathrm{Re} \,(\lambda \psi_{-1}^{(1)} )\leq \frac{\mu_{-}-\eta}{\sqrt{\eta^2+(2+2\eta)^2}} \left(\mathrm{Im}\, V_{11} +  \mathrm{Im}\, V_{22} \right),\\
&\text{on } J_{\lambda}, \qquad && \vert \mathrm{Re} \,(\lambda\psi_{-1}^{(1)}) \vert\lesssim  \vert  \mathrm{Im} V_{11}\vert +  \vert\mathrm{Im} V_{22} \vert,
\end{aligned}
\end{equation}
and for all $k\in [[0,n-1]]$,
\begin{equation}\label{Eq psi k'}
\begin{aligned}
&\text{on } I^{\pm}\cap J_{\lambda}^{\pm}, \qquad &&\left\vert \lambda^{-k} \psi_{k}^{(1)} \right\vert \lesssim \frac{f_{\pm}^{k+1}}{\lambda^{k}},\\
&\text{on }[-a_{-},a_{+}], \qquad &&\left\vert \lambda^{-k} \psi_{k}^{(1)} \right\vert \lesssim \frac{1}{\lambda^{k}}.
\end{aligned}
\end{equation}
\end{lemma}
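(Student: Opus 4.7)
I split the statement into two independent pieces: the three bounds on $\lambda\mathrm{Re}\,\psi_{-1}'$ in \eqref{Eq Re psi-1'}, and the estimate on $\lambda^{-k}\psi_k'$ for $k\geq 0$. For the first piece, I will start from the closed formula \eqref{Eq Re psi-1}, whose numerator on $J_\lambda$ reads $A\,\mathrm{Im}\,V_{11}+B\,\mathrm{Im}\,V_{22}$ with $A:=\lambda+m-\mathrm{Re}\,V_{22}$ and $B:=\lambda-m-\mathrm{Re}\,V_{11}$. I will split this as
\[ A\,\mathrm{Im}\,V_{11}+B\,\mathrm{Im}\,V_{22}=\lambda(\mathrm{Im}\,V_{11}+\mathrm{Im}\,V_{22}) + (A-\lambda)\mathrm{Im}\,V_{11} + (B-\lambda)\mathrm{Im}\,V_{22}, \]
and bound the error by $\eta\lambda(|\mathrm{Im}\,V_{11}|+|\mathrm{Im}\,V_{22}|)$ via \eqref{Rem Re V1122}. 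On $I^\pm\cap J_\lambda^\pm$, the asymmetry hypothesis \eqref{Assump G Diagonal 2} will let me absorb this error into $\mu_\pm^{-1}(\mathrm{Im}\,V_{11}+\mathrm{Im}\,V_{22})$. For the denominator, the upper line of \eqref{Eq bound denominator} together with \eqref{Rem Re V1122} will provide $\sqrt{|V_\lambda|+\mathrm{Re}\,V_\lambda}\leq (\lambda/\sqrt{2})\sqrt{\eta^2+(2+2\eta)^2}$. Dividing will yield the first two bounds (the factor $\mu_\pm^{-1}\geq 1$ only strengthening the stated constant). The third, two-sided $\lesssim$ bound will follow by upper-bounding the numerator by $(1+\eta)\lambda(|\mathrm{Im}\,V_{11}|+|\mathrm{Im}\,V_{22}|)$ and applying the lower line of \eqref{Eq bound denominator} together with \eqref{Rem Re V1122} to get $\sqrt{|V_\lambda|+\mathrm{Re}\,V_\lambda}\gtrsim\lambda$.

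For the estimate on $\psi_k'$, I will feed the structural representation from Lemma~\ref{Lemma psi k'} into the pointwise bounds of Lemma~\ref{Lemma VK}. A generic term has the shape
\[ \frac{\lambda^k}{V_\lambda^{k/2}}\cdot\frac{\prod_{p=1}^{s_1}(V_\lambda^{(p)})^{\alpha_p}\prod_{p=1}^{s_2}(K_\lambda^{(p)})^{\alpha_{s_1+p}}}{V_\lambda^{j_1}K_\lambda^{j_2}}, \]
with $\sum_p p\alpha_p=r_1$, $\sum_p p\alpha_{s_1+p}=r_2$, and $r_1+r_2=k+1$. On $I^\pm\cap J_\lambda^\pm$, Lemma~\ref{Lemma VK} controls $|V_\lambda^{(p)}|$ and $|K_\lambda^{(p)}|$ by factors containing $|f|^{p\nu_\pm}$, and the $|V_\lambda|^{j_1}|K_\lambda|^{j_2}$ in the denominator will cancel the matching modulus powers in the numerator. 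The key combinatorial observation is that $\prod_p|f|^{p\alpha_p\nu_\pm}=|f|^{r_1\nu_\pm}$ (and likewise for the $K_\lambda$-side), so the total $|f|$-exponent will collapse to $(r_1+r_2)\nu_\pm=(k+1)\nu_\pm$, independently of the particular partition. The residual factor $(\max(|V_{11}|,|V_{22}|)/|\lambda+m-V_{22}|)^{j_1}$ will be $\lesssim 1$ by the same estimate (combined with \eqref{Rem V11 approx V22}) already used in the proof of Lemma~\ref{Lemma VK}. Since $|V_\lambda|^{k/2}\approx \lambda^k$ on $J_\lambda$ by \eqref{Rem Re V1122}, the prefactor $\lambda^k/|V_\lambda|^{k/2}$ is $\lesssim 1$, and I conclude $|\lambda^{-k}\psi_k'|\lesssim|f|^{(k+1)\nu_\pm}/\lambda^k$. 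On the bounded piece $[-a_-,a_+]$, the $|f|$-factors disappear (Lemma~\ref{Lemma VK} gives $|V_\lambda^{(p)}|\lesssim|V_\lambda|/\lambda$ and $|K_\lambda^{(p)}|\lesssim|K_\lambda|$) and the same counting produces the $\lambda^{-k}$ bound.

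The main obstacle is the combinatorial bookkeeping in the second piece: one must verify that the sum of derivative-products in Lemma~\ref{Lemma psi k'} collapses, term by term, to the clean bound $|f|^{(k+1)\nu_\pm}/\lambda^k$, uniformly in the multi-indices $j$, $r$, and the partition $\alpha$. Once the exponent-matching identity $\prod_p|f|^{p\alpha_p\nu_\pm}=|f|^{r_1\nu_\pm}$ is isolated, the rest is routine substitution of the bounds already in hand. Part (a) is essentially algebraic; the only subtle point is that the standing convention $\eta<\min\{\mu_-,\mu_+\}$ ensures $\mu_\pm-\eta>0$, so that the constants appearing in the first two lines of \eqref{Eq Re psi-1'} are strictly positive.
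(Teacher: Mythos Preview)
Your proposal is correct and follows essentially the same route as the paper: the same splitting of the numerator of \eqref{Eq Re psi-1}, the same use of the two-sided bound \eqref{Eq bound denominator} together with \eqref{Rem Re V1122} for the denominator, and the same reduction of the $\psi_k'$ estimate to Lemma~\ref{Lemma psi k'} fed with the derivative bounds of Lemma~\ref{Lemma VK}, with the key exponent collapse $\sum_p p\alpha_p = r_1$ (and likewise for the $K_\lambda$-block) yielding the clean $|f|^{(k+1)\nu_\pm}$ factor. One small imprecision: your claim $|V_\lambda|^{k/2}\approx\lambda^k$ on $J_\lambda$ is only $\gtrsim$ in general (since $|\mathrm{Im}\,V_{ii}|$ may be unbounded on $I^\pm\cap J_\lambda^\pm$), but only this direction is needed for $\lambda^k/|V_\lambda|^{k/2}\lesssim 1$, so the argument stands.
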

\begin{proof}
Firstly, we prove the lemma for the first two estimates in \eqref{Eq Re psi-1'}. By looking at the formula of $ \mathrm{Re}\, (\lambda\psi_{-1}^{(1)})$ in \eqref{Eq Re psi-1} and using assumptions \eqref{Assump G Diagonal 1} and \eqref{Assump G Diagonal 2}, we see that the numerator of $\mathrm{Re}\, (\lambda\psi_{-1}^{(1)})$ has opposite signs at $-\infty$ and $+\infty$. Namely, employing the remark~\eqref{Rem Re V1122} 
and recalling that $\eta<\mu_{\pm}$, 
we have the following estimate on $I^{+}\cap J_{\lambda}^{+}$:
\begin{align*}
\lefteqn{\mathrm{Im} \, V_{11} (\lambda+m-\mathrm{Re}\, V_{22})+\mathrm{Im} \, V_{22} (\lambda-m-\mathrm{Re}\, V_{11})}&\\
&= \ \lambda ( \mathrm{Im} \, V_{11} +\mathrm{Im} \, V_{22} )+ \mathrm{Im} \, V_{11} (m-\mathrm{Re}\, V_{22})- \mathrm{Im} \, V_{22} (m+\mathrm{Re}\, V_{11})
\\
&\geq  \ \lambda \mu_{+} (\vert \mathrm{Im} \, V_{11} \vert + \vert\mathrm{Im} \, V_{22} \vert)-\eta \lambda (\vert \mathrm{Im} \, V_{11}\vert +\vert\mathrm{Im} \, V_{22} \vert)\\
&\geq \ \lambda(\mu_{+}-\eta)(\mathrm{Im} \, V_{11} +\mathrm{Im} \, V_{22} );
\end{align*}
and similarly, on $I^{-}\cap J_{\lambda}^{-}$:
\[ \mathrm{Im} \, V_{11} (\lambda+m-\mathrm{Re}\, V_{22})+ \mathrm{Im} \, V_{22} (\lambda-m-\mathrm{Re}\, V_{11}) \leq \lambda(\mu_{-}-\eta)\left( \mathrm{Im} \, V_{11} +  \mathrm{Im} \, V_{22} \right).\]
Next, it follows from the upper bound in \eqref{Eq bound denominator} for the denominator of $ \mathrm{Re}\,(\lambda\psi_{-1}^{(1)})$ in \eqref{Eq Re psi-1} that, on $I^{+}\cap J_{\lambda}^{+}$,
\begin{align*}
\mathrm{Re}\,( \lambda\psi_{-1}^{(1)})&\geq \frac{\lambda(\mu_{+}-\eta)( \mathrm{Im}\, V_{11}+ \mathrm{Im}\, V_{22})}{\left((\mathrm{Im}\, V_{11}-\mathrm{Im}\, V_{22})^2+(2\lambda-\mathrm{Re}\, V_{11}- \mathrm{Re}\,V_{22})^2 \right)^{1/2}}\\
 &\geq \frac{\mu_{+}-\eta}{\sqrt{\eta^2+(2+2\eta)^2}} \left( \mathrm{Im}\, V_{11}+\mathrm{Im}\, V_{22} \right).
\end{align*}
In the last step of the above expression, we used the estimates \eqref{Rem Re V1122}. On $I^{-}\cap J_{\lambda}^{-}$, we do it in the same manner.

Secondly, the final estimate in \eqref{Eq Re psi-1'} is obtained by using the lower bound in \eqref{Eq bound denominator} for the denominator of $ \mathrm{Re}\,(\lambda \psi_{-1}^{(1)})$ on $J_{\lambda}$:
\begin{align*}
\left\vert  \mathrm{Re}\,(\lambda \psi_{-1}^{(1)})\right\vert \lesssim \frac{\vert \mathrm{Im} V_{11}(\lambda+m- \mathrm{Re}\, V_{22} )\vert+ \vert \mathrm{Im} V_{22}(\lambda-m- \mathrm{Re}\, V_{11} )\vert}{\vert (\lambda+m-\mathrm{Re}\, V_{22})(\lambda-m-\mathrm{Re}\, V_{11}) \vert^{1/2}}\nonumber \lesssim  \vert  \mathrm{Im} V_{11} \vert + \vert  \mathrm{Im} V_{22} \vert .
\end{align*}

Finally, let us prove the first estimate in \eqref{Eq psi k'} while the second one can be considered in a similar way. For $x\in I^{\pm} \cap J_{\lambda}^{\pm}$, from Lemma \ref{Lemma VK}, we have
$ \vert V_{\lambda}^{(\ell)}(x) \vert \lesssim \vert V_{\lambda}(x) \vert f_{\pm}^{\ell}$ and $\vert K_{\lambda}^{(\ell)}(x) \vert \lesssim \vert K_{\lambda}(x) \vert f_{\pm}^{\ell}.$
Here we used the fact implied by \eqref{Rem V11 approx V22} that, for all $x\in J_{\lambda}$,
\begin{equation}\label{Eq max}
\frac{\max\{\vert V_{11}(x) \vert, \vert V_{22}(x) \vert \}}{\vert \lambda+m-V_{22}\vert} \leq \frac{\vert V_{11}(x) \vert+ \vert V_{22}(x) \vert }{\vert \lambda+m-V_{22}\vert}\lesssim  \frac{\vert V_{11}(x) \vert}{\vert \lambda+m-V_{11}\vert}+\frac{\vert V_{22}(x) \vert}{\vert \lambda-m-V_{22}\vert}\lesssim 1.
\end{equation}
Thus, applying this to control each element $d_{j}^{r,r-j+(1,1)}(V_{\lambda},K_{\lambda})$ on $I^{\pm} \cap J_{\lambda}^{\pm}$:
\begin{align*}
\lefteqn{
\displaystyle\vert d_{j}^{r,r-j+(1,1)}(V_{\lambda},K_{\lambda}) \vert
}
\\
&\lesssim \sum_{\alpha \in \mathcal{I}_{j}^{r,r-j+(1,1)}}\left\vert V_{\lambda}^{(1)}\right\vert^{\alpha_1}\ldots\left\vert V_{\lambda}^{(r_{1}-j_1+1)}\right\vert^{\alpha_{r_{1}-j_1+1}} \left\vert K_{\lambda}^{(1)} \right\vert^{\alpha_{r_{1}-j_1+2}}\ldots \left\vert K_{\lambda}^{(r_2-j_2+1)}\right\vert^{\alpha_{\vert r \vert-\vert j \vert +2}}\\
&\lesssim \sum_{\alpha \in \mathcal{I}_{j}^{r,r-j+(1,1)}}\left\vert V_{\lambda}\right\vert^{\displaystyle\sum_{p=1}^{r_{1}-j_1+1} \alpha_{p}}  \left\vert K_{\lambda} \right\vert^{\displaystyle\sum_{p=1}^{r_{2}-j_2+1} \alpha_{r_{1}-j_1+1+p}}  f_{\pm}^{\displaystyle\sum_{p=1}^{r_{1}-j_1+1} p\alpha_{p}+ \sum_{p=1}^{r_{2}-j_2+1} p\alpha_{r_{1}-j_1+1+p}}\\
&\lesssim \left\vert V_{\lambda}\right\vert^{j_1}  \left\vert K_{\lambda} \right\vert^{j_2}  f_{\pm}^{\vert r \vert},
\end{align*}
in which we borrowed the definition of the set $\mathcal{I}_{j}^{r}$ in \eqref{Notation Ijr}. The estimate in \eqref{Eq psi k'} for $x\in I^{\pm}\cap J_{\lambda}^{\pm}$ follows from the formula of $\psi_{k}^{(1)}$ in Lemma \ref{Lemma psi k'}. 
\end{proof}

\begin{remark}\label{Remark Re Psi}
From the estimates \eqref{Eq Re psi-1'} and \eqref{Eq Sim of Abso}, 
it follows that, for all $x\in I^{\pm}\cap J_{\lambda}^{\pm}$,
\[ \text{\textbf{In the Dirac case:} } \quad
\mathrm{Re} \,(\lambda\psi_{-1}^{(1)} (x))\approx  \mathrm{Im} V_{11}(x)  +  \mathrm{Im} V_{22}(x) .\]
The sign of the sum $\mathrm{Im} V_{11}(x)  +  \mathrm{Im} V_{22}(x)$ decides the sign of $ \mathrm{Re}\, (\lambda\psi_{-1}^{(1)})$ in the neighbourhood of infinity. 

This is to be compared with the Schr\"{o}dinger case 
\cite[Lem.~3.4]{KS19} where the sign of $\mathrm{Im} \, V$ 
(with scalar~$V$ now)
plays this role, more precisely
\[ \text{\textbf{In the Schr\"{o}dinger case:} } \quad
\mathrm{Re} \,(\lambda\psi_{-1}^{(1)} (x))\approx 
 \lambda^{-\frac{1}{2}} \,  \mathrm{Im}\, V.\]
In this case, when $\lambda$ is considered to be large, $\mathrm{Im}\, V$ needs to be proportional to (and larger than) $\lambda^{\frac{1}{2}}$ near $\delta_{\pm}$ such that $\mathrm{Re} \,(\lambda\psi_{-1}^{(1)} (x))$ is also large. This was handled in~\cite{KS19}
thanks to the definition of $\delta_{\pm}$ which is in terms of $\mathrm{Im}\, V$. Then $\mathrm{Re}\, V$ needs to be controlled by $\mathrm{Im}\, V$ such that $\mathrm{Re}\, V$ can also be bounded by~$\lambda$,
whence the extra condition \cite[Cond.~(3.3)]{KS19}.

However, this extra work can been relaxed in the Dirac case thanks to the above form of $\mathrm{Re} \,(\lambda\psi_{-1}^{(1)})$. Technically, this can be explained by the product structure of $V_{\lambda}^{\text{Dirac}}$ which allows $\lambda$ to show up in $\mathrm{Im} \, V_{\lambda}^{\text{Dirac}}$ and therefore it cancels $\lambda$ appearing in the denominator of $ \mathrm{Re}\, (\lambda\psi_{-1}^{(1)})$ asymptotically.

Furthermore, the case $\mathrm{Im}\, V_{11} =\mathrm{Im}\, V_{22}$ needs to be taken into account. For example, if this happens on $[0,+\infty)$ (obviously, $\mu_{+}=1$ will be chosen in this situation), then the first estimate in \eqref{Eq Re psi-1'} 
can be taken strictly as follows:
\begin{align*}
\mathrm{Re}\,\left( \lambda \psi_{-1}^{(1)}(x)\right)= \frac{1}{\sqrt{2}} \frac{\mathrm{Im}\, V_{11} (2\lambda -\mathrm{Re} \,V_{22} -\mathrm{Re} \,V_{11})}{\sqrt{\vert V_{\lambda}\vert+ \mathrm{Re} \,V_{\lambda}}}\geq \frac{\mathrm{Im}\, V_{11} (2\lambda -\mathrm{Re} \,V_{22} -\mathrm{Re} \,V_{11})}{ 2\lambda -\mathrm{Re} \,V_{22} -\mathrm{Re} \,V_{11}}= \mathrm{Im}\, V_{11}.
\end{align*}
Meanwhile, the constant which turns up at \eqref{Eq Re psi-1'}, $\frac{2(\mu_{+}-\eta)}{\sqrt{\eta^2+(2+2\eta)^2}}$, is strictly smaller than and close to $1$ when $\eta$ is chosen small enough. However, it does not matter because this constant will be attached with $1-o(1)$ as $\lambda \to +\infty$ when we deal with it in the next lemma.
\end{remark}

With the derivatives of $\psi_{k}^{(1)}$ given in \eqref{Eq Recursion}, we can determine the primitives $\psi_{k}$ uniquely by choosing the initial data 
$\psi_{k}(0)=0$, $\forall k\in [[-1,n-1]].$
\begin{lemma}\label{Lemma Decay solution}
Let Assumption \ref{Assump General} hold for some $N\in \N_{1}$. Let $n\in [[1,N]]$ and $\{\psi_{k}^{(1)}\}_{k\in[[-1,n-1]]}$ be determined by \eqref{Eq Recursion} with the plus sign in the formula of $\psi_{-1}^{(1)}$. Let $P_{\lambda, n}$ defined as in ~\eqref{Eq Pn}. There exist $c_{1}>0$, $c_{2}>0$ and $\lambda_{0}>0$ such that, for all $\lambda>\lambda_{0}$ and for all $x\in J_{\lambda}$,
\begin{equation*}
\exp\left( -c_1F(x)  \right)\exp\left(\frac{1}{2} \int_{0}^{x} \mathcal{U}(t)\, \dd t\right) \lesssim \left\vert k_1(x) \exp(-P_{\lambda,n}(x))\right\vert\lesssim \exp\left( -c_{2}F(x)\right).
\end{equation*}
Furthermore, if $V_{12}=V_{21}$, the statement is also true as $n=0$, \emph{i.e.} $N=0$ is allowed.
\end{lemma}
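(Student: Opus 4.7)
The plan is to take the logarithm and decompose the exponent as
\[
\log|k_1(x) e^{-P_{\lambda,n}(x)}| = \int_0^x \mathrm{Im}\,V_{21}(\tau)\,\dd\tau - \lambda\,\mathrm{Re}\,\psi_{-1}(x) - \sum_{k=0}^{n-1}\lambda^{-k}\mathrm{Re}\,\psi_k(x),
\]
and to establish three things: (i) the combination of $|k_1|$ with $e^{-\mathrm{Re}\,\psi_0}$ (or, when $n=0$ and $V_{12}=V_{21}$, of $|k_1|$ alone) produces exactly the factor $\exp(\tfrac12\int_0^x\mathcal{U})$ up to a prefactor comparable to~$1$; (ii) the eikonal term $-\lambda\mathrm{Re}\,\psi_{-1}$ delivers decay proportional to $F(x)$; (iii) the residual terms $\lambda^{-k}\mathrm{Re}\,\psi_k$ with $k\ge 1$ contribute only $o(F(x))$ uniformly on $J_\lambda$ as $\lambda\to+\infty$.

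For (i), when $n\ge 1$, I would integrate the explicit identity $\psi_0'=\tfrac14 V_\lambda^{(1)}/V_\lambda+\tfrac12 K_\lambda^{(1)}/K_\lambda$ with initial condition $\psi_0(0)=0$, so that $\mathrm{Re}\,\psi_0(x)=\tfrac14\log|V_\lambda(x)/V_\lambda(0)|+\tfrac12\log|K_\lambda(x)/K_\lambda(0)|$. Inserting the identities $|k_1(x)|=\exp(\int_0^x\mathrm{Im}\,V_{21})$ and $|K_\lambda|=\exp(\int_0^x(\mathrm{Im}\,V_{21}-\mathrm{Im}\,V_{12}))/|\lambda+m-V_{22}|$, the $\mathrm{Im}\,V_{21}$ integrals conspire to produce exactly $\tfrac12\int_0^x\mathcal{U}$, while the remaining prefactor reduces to $Q(x)=\bigl(|\lambda+m-V_{22}(x)|/|\lambda-m-V_{11}(x)|\bigr)^{1/4}$ multiplied by a constant depending only on $V(0)$ and $\lambda$. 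By \eqref{Rem V11 approx V22}, $Q(x)$ is comparable to a positive constant uniformly on $J_\lambda$ for large $\lambda$. For $n=0$ with $V_{12}=V_{21}$, one has $\mathrm{Im}\,V_{21}=\tfrac12\mathcal{U}$ directly, so no appeal to $\psi_0$ is needed.

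For (ii), I would integrate the pointwise bounds of Lemma~\ref{Lemma WKB sol}. On $I^\pm\cap J_\lambda^\pm$, the first two inequalities of \eqref{Eq Re psi-1'} combined with \eqref{Eq Sim of Abso} yield $\lambda\mathrm{Re}\,\psi_{-1}(x)\ge c'F(x)-C$ with $c':=(\mu_\pm-\eta)/\sqrt{\eta^2+(2+2\eta)^2}$, while the third inequality gives $\lambda\mathrm{Re}\,\psi_{-1}(x)\le C'F(x)+C$; on $[-a_-,a_+]$ both sides are bounded. For (iii), I distinguish two cases. If $\nu_\pm\le 0$ then $|f|^{(k+1)\nu_\pm}\lesssim 1$ on $I^\pm$ for $|x|$ large, so $|\lambda^{-k}\psi_k(x)|\lesssim\lambda^{-k}|x|\lesssim\lambda^{-k}F(x)$. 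If $\nu_\pm>0$, the cut-off definition \eqref{Eq G g 1}--\eqref{Eq G g 2} forces $|f(x)|^{2\nu_\pm/(1-\varepsilon_1)}\le\lambda$ on $J_\lambda^\pm$, whence
\[
\lambda^{-k}|f(x)|^{(k+1)\nu_\pm}\le\lambda^{(1-\varepsilon_1)/2-k(1+\varepsilon_1)/2}\le\lambda^{-\varepsilon_1}\qquad (k\ge 1);
\]
after integration this contribution is still $o(F(x))$ because $F(x)\gtrsim|x|$.

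Combining the three ingredients yields $|k_1 e^{-P_{\lambda,n}}|=\exp(\tfrac12\int_0^x\mathcal{U})\cdot Q(x)\cdot\exp(-\lambda\mathrm{Re}\,\psi_{-1}+o(F(x)))$. The lower bound follows immediately with $c_1:=C'+1$. For the upper bound, \eqref{Assump G Off-Diagonal 1} gives $\tfrac12\int_0^x\mathcal{U}\le\varepsilon_\pm F(x)$ on $I^\pm$, so the product is dominated by $\exp((\varepsilon_\pm-c')F(x)+o(F(x)))\le\exp(-c_2F(x))$ with $c_2:=(c'-\varepsilon_\pm)/2$. The main obstacle is precisely securing $c_2>0$: one needs $c'>\varepsilon_\pm$, and since $c'\to\mu_\pm/2$ as $\eta\to 0$ while $\varepsilon_\pm<\mu_\pm/2$ by assumption, this forces $\eta$ in \eqref{Eq G g 1}--\eqref{Eq G g 2} to be fixed small enough at the outset. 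This quantitative compatibility between \eqref{Assump G Off-Diagonal 1}, the cut-off parameter $\eta$, and the eikonal lower bound is the genuinely delicate part of the argument; the remaining bookkeeping for the subleading $\psi_k$ corrections in the $\nu_\pm>0$ regime is where the choice of the exponent $2\nu_\pm/(1-\varepsilon_1)$ in the construction of $\delta_\lambda^\pm$ pays off.
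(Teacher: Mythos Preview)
Your proposal is correct and follows essentially the same approach as the paper's proof: both integrate $\psi_0'$ explicitly to extract the factor $\exp(\tfrac12\int_0^x\mathcal{U})$ together with the bounded prefactor $Q(x)\approx 1$ via \eqref{Rem V11 approx V22}, then use the eikonal bounds of Lemma~\ref{Lemma WKB sol} for the $F$-decay, and control the $k\ge 1$ corrections through the cut-off constraint $|f|^{2\nu_\pm/(1-\varepsilon_1)}\le\lambda$ in the $\nu_\pm>0$ case. Your identification of the key compatibility condition $\eta$ small enough to ensure $(\mu_\pm-\eta)/\sqrt{\eta^2+(2+2\eta)^2}>\varepsilon_\pm$ matches exactly the paper's choice.
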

\begin{proof}
Let us recall that
\[ P_{\lambda,n}(x) = \int_{0}^{x} \psi_{0}^{(1)}(t)\, \dd t+ \sum_{\substack{k\neq 0\\
k=-1}}^{n-1}  \int_{0}^{x} \lambda^{-k}\psi_{k}^{(1)}(t) \, \dd t.\]
From the formula of $\psi_{0}^{(1)}$, we observe that
\begin{align*}
\left \vert \exp \left( - \int_{0}^{x} \psi_{0}^{(1)}(t) \dd t\right)  \right\vert&= \frac{\vert V_{\lambda}(0) \vert^{1/4}}{\vert V_{\lambda}(x) \vert^{1/4}}\frac{\vert K_{\lambda}(0)\vert^{1/2}}{\vert K_{\lambda}(x)\vert^{1/2}}.
\end{align*}
Then, it follows from the definition of the functions $V_{\lambda}$, $K_{\lambda}$ and estimates in \eqref{Rem V11 approx V22} that
\begin{align*}
&\left\vert k_1(x) \exp(-P_{\lambda,n}(x))\right\vert \\
&=\vert k_1(x) \vert  \frac{\vert V_{\lambda}(0) \vert^{1/4}}{\vert V_{\lambda}(x) \vert^{1/4}}\frac{\vert K_{\lambda}(0)\vert^{1/2}}{\vert K_{\lambda}(x)\vert^{1/2}} \exp\left(- \sum_{\substack{k=-1 \\k\neq 0}}^{n-1}  \int_{0}^{x} \mathrm{Re} \,(\lambda^{-k}\psi_{k}^{(1)}(t))\, \dd t\right)\\
&= \vert k_1(x) k_2(x) \vert^{1/2}\frac{\vert \lambda-m-V_{11}(0) \vert^{1/4}}{\vert \lambda+m-V_{22}(0)\vert^{1/4}}\frac{\vert \lambda+m-V_{22}(x) \vert^{1/4}}{\vert \lambda-m+V_{11}(x)\vert^{1/4}}\exp\left( -\sum_{\substack{k=-1 \\k\neq 0}}^{n-1}   \int_{0}^{x}\mathrm{Re} \,(\lambda^{-k} \psi_{k}^{(1)}(t))\, \dd t\right)\\
&\approx\exp\left(-\int_{0}^{x} \mathrm{Re} \,(\lambda \psi_{-1}^{(1)}(t))\, \dd t+ \frac{1}{2}\int_{0}^{x} \mathcal{U}(t)\, \dd t- \sum_{k=1}^{n-1}  \int_{0}^{x} \mathrm{Re} \,(\lambda^{-k} \psi_{k}^{(1)}(t))\, \dd t\right)\,.
\end{align*}
Thanks to the estimate \eqref{Eq Re psi-1'} and \eqref{Eq psi k'}, $\mathcal{U}\in L_{\mathrm{loc}}^{\infty}(\R)$, we have the uniform bound, for all $x\in [0,a_{+}]$,
 \[ \left\vert \int_{0}^{x} \mathrm{Re} \,(\lambda\psi_{-1}^{(1)}(t))\, \dd t \right\vert + \left\vert \frac{1}{2}\int_{0}^{x} \mathcal{U}(t)\, \dd t \right\vert+ \left\vert \sum_{k=1}^{n-1}  \int_{0}^{x} \mathrm{Re} \,(\lambda^{-k}\psi_{k}^{(1)}(t))\, \dd t \right\vert \lesssim 1.\]
By the estimate \eqref{Eq Re psi-1'} and \eqref{Eq psi k'} again, there exists a constant $M>0$ such that for all $x\in I^{+}\cap J_{\lambda}^{+}$,
\begin{align*}
&-\int_{0}^{x} \mathrm{Re} \,(\lambda \psi_{-1}^{(1)}(t))\, \dd t+ \frac{1}{2}\int_{0}^{x} \mathcal{U}(t)\, \dd t- \sum_{k=1}^{n-1}  \int_{0}^{x} \mathrm{Re} \,(\lambda^{-k}\psi_{k}^{(1)}(t))\, \dd t\\
\leq & -\frac{\mu_{+}-\eta}{\sqrt{\eta^2+(2+2\eta)^2}} F(x) + \frac{1}{2}\int_{0}^{x} \mathcal{U}(t)\, \dd t- \sum_{k=1}^{n-1}  \int_{a_{+}}^{x} \mathrm{Re} \,(\lambda^{-k}\psi_{k}^{(1)}(t))\, \dd t+M\\
= &-\frac{\mu_{+}-\eta}{\sqrt{\eta^2+(2+2\eta)^2}} F(x) \left(1- \frac{\frac{1}{2}\int_{0}^{x} \mathcal{U}(t)\, \dd t}{\frac{\mu_{+}-\eta}{\sqrt{\eta^2+(2+2\eta)^2}} F(x)}  + \frac{\displaystyle\sum_{k=1}^{n-1}  \int_{a_{+}}^{x} \mathrm{Re} \,(\lambda^{-k}\psi_{k}^{(1)}(t))\, \dd t}{\frac{\mu_{+}-\eta}{\sqrt{\eta^2+(2+2\eta)^2}} F(x)}\right)+M.
\end{align*}
To estimate the term with $\mathcal{U}$, we use the condition \eqref{Assump G Off-Diagonal 1}. Let $\varepsilon_{+}\in \left(0, \frac{\mu_{+}}{2}\right)$ be the number given in the condition \eqref{Assump G Off-Diagonal 1}. 
We choose $\eta$ very small such that
\[ \frac{\mu_{+}-\eta}{\sqrt{\eta^2+(2+2\eta)^2}} > \varepsilon_{+}.\]
We deduce that, for all $x\in I^{+}\cap J_{\lambda}^{+}$,
\[\frac{\displaystyle\frac{1}{2}\int_{0}^{x} \mathcal{U}(t)\, \dd t}{\frac{\displaystyle\mu_{+}-\eta}{\sqrt{\eta^2+(2+2\eta)^2}}F(x) } \leq \frac{\varepsilon_{+}}{\displaystyle\frac{\mu_{+}-\eta}{\sqrt{\eta^2+(2+2\eta)^2}} }<1 .\]
To bound the terms with $\psi_{k}$ for $k \in [[1,n-1]]$, we apply \eqref{Eq psi k'} for all $t\in I^{+}\cap J_{\lambda}^{+}$
and obtain
\begin{equation}\label{Eq sum estimate}
\left\vert \sum_{k=1}^{n-1}  \mathrm{Re}\, (\lambda^{-k}\psi_{k}^{(1)}(t)) \right\vert \lesssim \sum_{k=1}^{n-1} \frac{ f_{+}(t)^{k+1}}{\lambda^{k}}\lesssim  \left\{ \begin{aligned}
& \lambda^{-1} &&\quad \text{if } f_{+} \text{ is bounded at }+\infty,\\
& \lambda^{-\varepsilon_{1}} &&\quad \text{if } f_{+} \text{ is unbounded at }+\infty.
\end{aligned}\right.
\end{equation}
Indeed, the case $f_{+}$ is bounded at $+\infty$ is obvious. In contrast, we employ the property \eqref{Eq G Property delta 2} of $\delta^{+}_{\lambda}$ and the definition of function $g$ in \eqref{Eq G g 1}, for all $t \in I^{+}\cap J_{\lambda}^{+}$,
\begin{align*}
\frac{ f_{+}(t)^{k+1}}{\lambda^{k}}  \leq \frac{\lambda^{\frac{k+1}{2}(1-\varepsilon_{1})}}{\lambda^{k}},
\end{align*}
and notice that $\frac{k+1}{2}(1-\varepsilon_{1})-k\leq -\varepsilon_{1}$ for all $k\geq 1$. By employing \eqref{Eq F}, we have
\begin{equation}\label{Eq sum psi k}
\frac{\left\vert \displaystyle\sum_{k=1}^{n-1}  \int_{a_{+}}^{x} \mathrm{Re} \,(\lambda^{-k}\psi_{k}^{(1)}(t))\, \dd t \right\vert}{ F(x)} = o(1)\qquad \text{as } \lambda \to +\infty.
\end{equation}
Hence, the second inequality in the statement of this lemma is proved. The first inequality is obtained easily by the final estimate in \eqref{Eq Re psi-1'}, the observation \eqref{Eq Sim of Abso} and the selected sign of the sum $\mathrm{Im}\, V_{11}+\mathrm{Im}\, V_{22}$ in \eqref{Assump G Diagonal 1} for all $x\in I^{+}\cap J_{\lambda}^{+}$,
$
\mathrm{Re}\, (\lambda\psi_{-1}^{(1)})\lesssim \mathrm{Im}\, V_{11}+\mathrm{Im}\, V_{22}.
$
Finally, combining this with \eqref{Eq sum psi k}, we obtain the result.

When $n=0$, there is no presence of $\psi_{0}^{(1)}$ in $P_{\lambda,n}$ and thus the integral $\int_{0}^{x} \frac{\mathcal{U}(t)}{2}\, \dd t$ will not come out in the above estimates, but the integral $\int_{0}^{x} \mathrm{Im}\, V_{21}(t)\, \dd t$ appears instead. However, if $V_{12}=V_{21}$, we can perform the proof as we have done above.
\end{proof}
The next proposition reveals that the terms attached with the derivatives of the cut-off function $\xi$ decay exponentially as $\lambda\to+ \infty$ at a rate controlled by the function $F$.
\begin{proposition}\label{Lemma L2 norm}
Let Assumption \ref{Assump General} hold for some $N\in \N_{1}$. Let $n\in [[1,N]]$, $\{\psi_{k}^{(1)}\}_{k\in[[-1,n-1]]}$ be determined by \eqref{Eq Recursion} with the plus sign in the formula of $\psi_{-1}^{(1)}$ and $P_{\lambda,n}$ defined as in \eqref{Eq Pn}. Let $\xi$ be given in \eqref{Eq cutoff} whose $\delta^{\pm}_{\lambda}$, $g_{\pm}$ and $\Delta^{\pm}_{\lambda}$ is identified by \eqref{Eq delta Def}, \eqref{Eq G g 1} and \eqref{Eq G Delta}. Let us denote
\begin{equation}\label{Eq L^2 xi''}
\kappa(\lambda):= \frac{\left\Vert \frac{k_1\exp(-P_{\lambda,n})}{\lambda+m-V_{22}}  \xi^{(2)} \right\Vert+\left\Vert \frac{k_1\exp(-P_{\lambda,n})}{\lambda+m-V_{22}} \left( 2P_{\lambda,n}^{(1)}-\frac{K_{\lambda}^{(1)}}{K_{\lambda}}\right) \xi^{(1)} \right\Vert}{\sqrt{\Vert k_1 \exp(-P_{\lambda,n})\xi \Vert^2+ \left\Vert \frac{k_1 (-i\partial_{x})(\exp(-P_{\lambda,n})\xi)}{\lambda+m-V_{22}}  \right\Vert^2}}.
\end{equation}
Then $\kappa(\lambda) = o(1)$ as $\lambda\to +\infty$. More precisely, there exists $\lambda_{0}>0$ such that, for all $\lambda>\lambda_{0}$,
$$\kappa(\lambda)=\kappa_{-}(\lambda)+\kappa_{+}(\lambda)$$
where (with some $d_1,d_2>0$)
\begin{equation*}
\kappa_{+}(\lambda) = \left\{ \begin{aligned}
&0, && \text{if }g_{+} \text{ is bounded at} +\infty,\\
&\mathcal{O}\left(\exp(-d_1F(d_2\delta^{+}_{\lambda}))\right), && \text{otherwise},
\end{aligned}\right.
\end{equation*}
and
\begin{equation*}
\kappa_{-}(\lambda) = \left\{ \begin{aligned}
&0, && \text{if }g_{-} \text{ is bounded at} -\infty,\\
&\mathcal{O}\left(\exp(-d_1F(- d_2\delta^{-}_{\lambda}))\right), && \text{otherwise}.
\end{aligned}\right.
\end{equation*}
Furthermore, if $V_{12}=V_{21}$, 
the statement is also true for $n=0$, \emph{i.e.} $N=0$ is allowed.
\end{proposition}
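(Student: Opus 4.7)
The plan is to bound the numerator by something decaying exponentially in $F(\delta^\pm_\lambda)$ while keeping the denominator bounded below by a positive constant. Since $\mathrm{supp}(\xi^{(j)}) \subset B^-_\lambda \cup B^+_\lambda$ with $B^\pm_\lambda := (\pm\delta^\pm_\lambda \mp \Delta^\pm_\lambda, \pm\delta^\pm_\lambda)$, I split the numerator of $\kappa$ accordingly to get $\kappa(\lambda) = \kappa_-(\lambda) + \kappa_+(\lambda)$. In the trivial sub-case (bounded $\mathrm{Re}\,V_{11}$, $\mathrm{Re}\,V_{22}$ and $\mathrm{Im}\,V_{11} - \mathrm{Im}\,V_{22}$ at $\pm \infty$ and $\nu_\pm \leq 0$), the prescriptions \eqref{Eq G g 2} and \eqref{Eq G Delta} give $\delta^\pm_\lambda = +\infty$ and $\Delta^\pm_\lambda = 0$, so one may take $\xi \equiv 1$ on $\R_\pm$ and hence $\kappa_\pm(\lambda) = 0$. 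The remaining work concerns the non-trivial case, in which $\Delta^\pm_\lambda = (\delta^\pm_\lambda)^{-1}$ and $\delta^\pm_\lambda \to +\infty$ by~\eqref{Eq G Property delta 1}.

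For the denominator, I would observe that, for $\lambda$ large enough, the compact set $[-a_-, a_+]$ is contained in the region $\{\xi \equiv 1\}$ since $\delta^\pm_\lambda \to \infty$ and $\Delta^\pm_\lambda \to 0$. On this fixed compact set both $F$ and $\int_0^x \mathcal U$ are uniformly bounded, so the lower bound in Lemma~\ref{Lemma Decay solution} delivers $|k_1(x) e^{-P_{\lambda,n}(x)}| \gtrsim 1$ uniformly in $\lambda$ there. Hence $\|k_1 e^{-P_{\lambda,n}} \xi\|_{L^2(\R)}^2 \gtrsim 1$, and the entire denominator is bounded below by a positive constant independent of $\lambda$.

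For the numerator, I would combine on $B^\pm_\lambda$ the pointwise ingredients: $|\xi^{(j)}| \lesssim (\Delta^\pm_\lambda)^{-j}$ by construction; $|k_1 e^{-P_{\lambda,n}}(x)| \lesssim e^{-c_2 F(x)}$ from Lemma~\ref{Lemma Decay solution}; $|\lambda + m - V_{22}|^{-1} \lesssim \lambda^{-1}$ from~\eqref{Rem Re V1122}; and polynomial-in-$\lambda$ upper bounds on $|P'_{\lambda,n}|$ and $|K'_\lambda/K_\lambda|$ obtained from Lemmas~\ref{Lemma psi k'}, \ref{Lemma VK} and~\ref{Lemma WKB sol}. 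Monotonicity of $F$ near $\pm \infty$ (a direct consequence of~\eqref{Assump G Diagonal 1}) together with $\Delta^\pm_\lambda = o(\delta^\pm_\lambda)$ gives $F(x) \geq F(d_2 \delta^\pm_\lambda)$ on $B^\pm_\lambda$ for any fixed $d_2 \in (0,1)$ and $\lambda$ large. Integrating in $L^2$ over the band of measure $\Delta^\pm_\lambda$ produces a bound of the form
$$
\text{(numerator on } B^\pm_\lambda) \;\lesssim\; \mathrm{poly}(\lambda,\delta^\pm_\lambda)\, e^{-c_2 F(d_2 \delta^\pm_\lambda)},
$$
and since~\eqref{Eq F} forces $F(d_2 \delta^\pm_\lambda) \to +\infty$, the polynomial prefactor is absorbed into a slightly smaller exponential, yielding $\kappa_\pm(\lambda) = \mathcal{O}(e^{-d_1 F(\pm d_2 \delta^\pm_\lambda)})$ for some $d_1 \in (0, c_2)$.

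The main obstacle is the polynomial control of $|P'_{\lambda,n}|$ and $|K'_\lambda/K_\lambda|$ on $B^\pm_\lambda$, where the diagonal entries $V_{ii}$, their derivatives and $|V_\lambda|^{1/2}$ can all be large. The defining condition $g_\pm(\pm\delta^\pm_\lambda) = \lambda$ together with~\eqref{Rem Re V1122} confines $|\mathrm{Re}\,V_{ii}|$ and $|\mathrm{Im}\,V_{11}-\mathrm{Im}\,V_{22}|$ to be $\lesssim \eta\lambda$, and~\eqref{Eq G g 1} forces $|f(\pm \delta^\pm_\lambda)|^{\nu_\pm} \lesssim \lambda^{(1-\varepsilon_1)/2}$; inserted into the estimates~\eqref{Eq psi k'} of Lemma~\ref{Lemma WKB sol}, this makes every factor in $|P'_{\lambda,n}|$ and $|K'_\lambda/K_\lambda|$ polynomially bounded in $\lambda$ and $\delta^\pm_\lambda$, which is then dwarfed by $e^{-c_2 F(d_2 \delta^\pm_\lambda)}$. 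The additional assertion for $n=0$ under $V_{12}=V_{21}$ follows with no further work, since the corresponding extension of Lemma~\ref{Lemma Decay solution} supplies exactly the same upper and lower bounds on $|k_1 e^{-P_{\lambda,0}}|$, and the argument above uses no specific property of $\psi_0'$.
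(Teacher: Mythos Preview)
Your overall plan and most of the details are right and track the paper's argument closely: the lower bound on the denominator via the fixed compact set, the trivial case with $\delta^\pm_\lambda=+\infty$, the use of Lemma~\ref{Lemma Decay solution} for the upper bound $|k_1e^{-P_{\lambda,n}}|\lesssim e^{-c_2 F(x)}$ on $B^\pm_\lambda$, and the monotonicity of $F$ to pass from $F(x)$ to $F(\pm d_2\delta^\pm_\lambda)$.

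There is, however, a genuine gap in your treatment of the $\xi^{(1)}$--term. You assert that $|P'_{\lambda,n}|$ is polynomially bounded in $\lambda$ and $\delta^\pm_\lambda$ on $B^\pm_\lambda$. This is false in general for the leading piece $|\lambda\psi'_{-1}|=|V_\lambda|^{1/2}$. The estimate~\eqref{Eq psi k'} you invoke only covers $k\in[[0,n-1]]$; the definition of $g_\pm$ controls $|\mathrm{Re}\,V_{ii}|$ and $|\mathrm{Im}\,V_{11}-\mathrm{Im}\,V_{22}|$, but \emph{not} $|\mathrm{Im}\,V_{ii}|$ individually. For instance, with $V_{11}=x+ie^x$, $V_{22}=ie^x$ on $\R_+$ (so that $\nu_+=0$ and $g_+(x)\approx |x|/\eta$, hence $\delta^+_\lambda\approx\eta\lambda$), one has on $B^+_\lambda$ that $|V_\lambda|^{1/2}\approx|\lambda+m-V_{22}|\approx e^{\eta\lambda}$, which is neither polynomial in $\lambda$ nor in $\delta^+_\lambda$. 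Consequently your absorption argument (``polynomial prefactor swallowed by $e^{-c_2F(d_2\delta^\pm_\lambda)}$ since $F\to+\infty$'') does not apply as stated.

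The fix is simple and is exactly what the paper does: do not split the product. Keep $|V_\lambda|^{1/2}$ together with the factor $|\lambda+m-V_{22}|^{-1}$ that is already present in the numerator and use~\eqref{Rem V11 approx V22} to get
\[
\frac{|V_\lambda(x)|^{1/2}}{|\lambda+m-V_{22}(x)|}
=\frac{|\lambda-m-V_{11}(x)|^{1/2}}{|\lambda+m-V_{22}(x)|^{1/2}}\approx 1
\qquad\text{on }J_\lambda.
\]
With this, the whole prefactor $\bigl|(2P'_{\lambda,n}-K'_\lambda/K_\lambda)/(\lambda+m-V_{22})\bigr|$ is $\lesssim 1+\sum_{k=0}^{n-1}\lambda^{-(k+1)}|f(x)|^{(k+1)\nu_\pm}\lesssim 1$ on $J^\pm_\lambda$, by~\eqref{Eq psi k'} and the bound $|f|^{\nu_\pm}\lesssim\lambda^{(1-\varepsilon_1)/2}$; the exponential $e^{-c_2F(x)}$ then gives the desired $\mathcal{O}(e^{-d_1F(\pm d_2\delta^\pm_\lambda)})$ without any absorption of large prefactors. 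Your separate bound $|\lambda+m-V_{22}|^{-1}\lesssim\lambda^{-1}$ throws away exactly the size needed to cancel $|V_\lambda|^{1/2}$.
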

\begin{proof}
First of all, we want to show that the denominator in \eqref{Eq L^2 xi''} is bounded from below by a constant not depending on $\lambda$. Thanks to  Lemma \ref{Lemma Decay solution} and the boundedness of $V_{ij}$ for $i,j\in\{1,2\}$ on $[0,1]$, one has
\begin{align*}
\int_{\R} \left\vert k_{1} \exp(-P_{\lambda,n}) \right\vert^2 \, \dd x\gtrsim \int_{0}^{1} \exp\left(-2c_{1} F(x)+\int_{0}^{x} \mathcal{U}(t) \, \dd t \right)\, \dd x\gtrsim 1.
\end{align*}
Now, we try to control two terms attached with $\xi^{(1)}$ and $\xi^{(2)}$ in the numerator of \eqref{Eq L^2 xi''}. 
Obviously, the case of $g_{+}$ being bounded at $+\infty$ is trivial, since $\xi(x)=1$ on $[0,+\infty)$. The negative case is the same. We just need to care about the remaining situations in which $\xi$ is a \enquote{true} cut-off function. The main idea is to employ the exponential decay in order to limit the growth of polynomials on the support of $\xi^{(1)}$ and $\xi^{(2)}$. 
In detail, applying Lemma~\ref{Lemma Decay solution}
on the support of $\xi^{(2)}$ and \eqref{Eq cutoff inequa}, \eqref{Eq G Delta}, we obtain (with some $c_{3}, c_{4}>0$)
\begin{align*}
&\quad\left\Vert \frac{k_1}{\lambda+m-V_{22}} \exp(-P_{\lambda,n}) \xi^{(2)} \right\Vert^2\\
& \lesssim \left(\delta_{\lambda}^{-}\right)^{4} \int_{J_{\lambda}^{-}\setminus \widetilde{J}_{\lambda}^{-}}  \exp\left(-2c_{2}F(x) \right) \dd x +  \left(\delta_{\lambda}^{+}\right)^{4}\int_{J_{\lambda}^{+}\setminus \widetilde{J}_{\lambda}^{+}}  \exp\left(-2c_{2}F(x) \right)  \dd x\\
&\lesssim \left(\delta_{\lambda}^{-}\right)^{3} \exp\left(-2c_{2}F\left(-\delta^{-}_{\lambda}+ \Delta_{\lambda}^{-}\right) \right) +\left(\delta_{\lambda}^{+}\right)^{3}\exp\left(-2c_{2} F\left(\delta^{+}_{\lambda}- \Delta_{\lambda}^{+}\right) \right)\\
& \lesssim \exp\left(-c_{3} F\left(-c_{4}\delta_{-}\right) \right) +\exp\left(-c_{3} F\left(c_{4}\delta_{+}\right) \right).
\end{align*}
In the second inequality, we used the fact that 
$F(x)$ increases as $\vert x \vert$ increases. Whereas the observation \eqref{Eq F} is employed in the third inequality. 
The term associated with $\xi^{(1)}$ is estimated in the same way. Just notice that, from the similarity in \eqref{Rem V11 approx V22}, Lemma \ref{Lemma VK}, Lemma \ref{Lemma WKB sol} and estimates $f_{\pm}$ as in \eqref{Eq sum estimate} for all $x\in J_{\lambda}^{\pm}\setminus \widetilde{J}_{\lambda}^{\pm}$, we have (with some $c_5>0$)
\begin{align*}
&\left\vert\left( 2P_{\lambda,n}^{(1)}(x)-\frac{K_{\lambda}^{(1)}(x)}{K_{\lambda}(x)}\right) \frac{k_1(x)\exp(-P_{\lambda,n}(x))}{\lambda+m-V_{22}(x)}\right\vert\\
&\leq \left( \vert V_{\lambda}(x) \vert^{1/2}+ \frac{\vert K_{\lambda}^{(1)}(x)\vert}{\vert K_{\lambda}(x)\vert}+\sum_{k=0}^{n-1}\vert \lambda^{-k}\psi_{k}^{(1)}\vert \right)\frac{\exp\left( -c_{2}F(x) \right) }{ \vert \lambda+m-V_{22}\vert} \\
&\lesssim \left( 1+  \sum_{k=0}^{n-1} \lambda^{-(k+1)} f_{\pm}(x)^{k+1}\right)\exp\left( -c_{2}F(x) \right) \\
&\lesssim \exp\left( -c_{5}F(x) \right).
\end{align*}
Thus, the desired claim follows.
\end{proof}

\subsection{Main results}
Now, we can state our main theorems and their consequences.
 
The following theorem says that if $V_{11}, V_{22}$ at least belong to $W^{2,\infty}_{\mathrm{loc}}(\R)$ and $V_{12}, V_{21}$ at least belong to $W^{1,\infty}_{\mathrm{loc}}(\R)$ and satisfy Assumption \ref{Assump General}, our WKB solution will become the quasimode for the problem \eqref{Mo Eq Pseudo Eigen 2}. When the potential $V$ is symmetric, the sufficient conditions for the involving spaces of $V_{11},V_{22}$ are released to $W^{1,\infty}_{\mathrm{loc}}(\R)$ and of $V_{12}, V_{21}$ are released to $L^{\infty}_{\mathrm{loc}}(\R)$ in some cases. Furthermore, the rate of decay of the estimate when $V$ is symmetric is better in some situations.
\begin{theorem}\label{Theorem 1}
Let Assumption \ref{Assump General} hold for some $N\in \N_{1}$ and set $n=N$. Let $\{\psi_{k}^{(1)}\}_{k\in[[-1,n-1]]}$ be determined by \eqref{Eq Recursion}  with the plus sign in the formula of $\psi_{-1}^{(1)}$ and $P_{\lambda,n}$ defined as in \eqref{Eq Pn}. Let us define
\[
\Psi_{\lambda,n} := \begin{pmatrix}
k_1 u_{\lambda,n}\\
k_2 v_{\lambda,n}
\end{pmatrix},
\qquad \mbox{where} \qquad
u_{\lambda,n} := \xi \exp(-P_{\lambda,n}),\quad v_{\lambda,n}:= \frac{k_1}{k_2} \frac{\partial_{x} u_{\lambda,n}}{\lambda+m-V_{22}},
\]
$k_1,k_2$ are functions as in \eqref{Eq k1k2}
and $\xi$ is given in \eqref{Eq cutoff} whose $\delta^{\pm}_{\lambda}$, $\Delta^{\pm}_{\lambda}$ are identified by \eqref{Eq delta Def} and \eqref{Eq G Delta}. Then, 
\begin{equation*}
\frac{\Vert (H_{V}- \lambda)\Psi_{\lambda,n} \Vert}{\Vert \Psi_{\lambda,n} \Vert} = o(1) , \qquad \lambda \to +\infty.
\end{equation*}
More precisely,
\begin{equation*}
\frac{\Vert (H_{V}- \lambda)\Psi_{\lambda,n} \Vert}{\Vert \Psi_{\lambda,n} \Vert} \leq \kappa(\lambda)+\sigma^{(n)},
\end{equation*}
where $\kappa$ is as in \eqref{Eq L^2 xi''} and $\sigma^{(n)}=\sigma_{-}^{(n)}+\sigma_{+}^{(n)}$ with
\[\sigma_{\pm}^{(n)}= \left\{ \begin{aligned}
& \mathcal{O}(\lambda^{-n}),\quad && f_{\pm} \text{ is bounded at }\pm \infty,\\
& \mathcal{O}(\lambda^{-\frac{(1+\varepsilon_{1})}{2}(n+1)+1}),\quad &&f_{\pm} \text{ is unbounded at }\pm \infty.
\end{aligned}\right.\]
\end{theorem}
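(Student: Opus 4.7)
The plan is to combine Proposition~\ref{Lemma L2 norm}, which already handles the $\xi^{(1)}$ and $\xi^{(2)}$ contributions (producing $\kappa(\lambda)$), with a direct polynomial estimate of the WKB remainder term. By the identity~\eqref{Eq L u} and the reduction performed in Section~\ref{Section WKB}, one has
$$
\frac{\Vert (H_V-\lambda)\Psi_{\lambda,n}\Vert}{\Vert \Psi_{\lambda,n}\Vert}
\leq \kappa(\lambda)
+ \frac{\left\Vert \frac{k_1 \exp(-P_{\lambda,n}) R_{\lambda,n}}{\lambda+m-V_{22}}\xi\right\Vert_{L^2}}{\Vert\Psi_{\lambda,n}\Vert},
$$
so it remains to show that the second quotient obeys the rates claimed in the theorem. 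For the denominator I will use the cheap bound $\Vert\Psi_{\lambda,n}\Vert \geq \Vert k_1 u_{\lambda,n}\Vert_{L^2}$ together with the lower bound of Lemma~\ref{Lemma Decay solution} restricted to $[0,1]$, on which all coefficients are uniformly bounded by the local hypothesis; this will yield $\Vert\Psi_{\lambda,n}\Vert \gtrsim 1$ uniformly in $\lambda$.

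For the numerator I proceed pointwise on the support of $\xi$. Lemma~\ref{Lemma Rn} already decomposes $\vert R_{\lambda,n}\vert$ in terms of the combinatorial quantities $d_j^{r,(n+1,n+1)}(V_\lambda,K_\lambda)$. The key step, which mirrors the one already carried out in the proof of Lemma~\ref{Lemma WKB sol} via Lemma~\ref{Lemma VK}, is the homogeneous bound
$$
\vert d_j^{r,(n+1,n+1)}(V_\lambda,K_\lambda)\vert
\lesssim \vert V_\lambda\vert^{j_1}\vert K_\lambda\vert^{j_2}\vert f\vert^{\vert r\vert\nu_{\pm}}
$$
on $I^{\pm}\cap J_\lambda^{\pm}$ (and a uniform bound on $[-a_-,a_+]$). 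Combining this with $\vert\lambda+m-V_{22}\vert\approx\lambda$ and $\vert V_\lambda\vert^{1/2}\approx\lambda$ on $J_\lambda$ (both consequences of~\eqref{Rem Re V1122}) yields
$$
\left\vert\frac{R_{\lambda,n}(x)}{\lambda+m-V_{22}(x)}\right\vert
\lesssim \sum_{\ell=-1}^{n-2}\frac{\vert f(x)\vert^{(n+\ell+2)\nu_{\pm}}}{\lambda^{n+\ell+1}}
\qquad\text{on } I^{\pm}\cap J_\lambda^{\pm},
$$
and an $\mathcal{O}(\lambda^{-n})$ bound on $[-a_-,a_+]$.

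The decisive polynomial gain then comes from the choice of $\delta^{\pm}_\lambda$ in Section~\ref{Section: cutoff}: when $\nu_{\pm}>0$, the cutoff support enforces $\vert f(x)\vert^{2\nu_{\pm}}\leq \lambda^{1-\varepsilon_1}$ on $J_\lambda^{\pm}$, so each term above is dominated by $\lambda^{(n+\ell+2)(1-\varepsilon_1)/2-(n+\ell+1)}$, whose maximum is attained at $\ell=-1$ and equals $\lambda^{-(1+\varepsilon_1)(n+1)/2+1}$; when $\nu_{\pm}\leq 0$ the polynomial factor in $f$ is harmless and the index $\ell=-1$ produces $\lambda^{-n}$. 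Finally, the passage to $L^2$ is executed exactly as in the proof of Proposition~\ref{Lemma L2 norm}: the exponential factor $\vert k_1\exp(-P_{\lambda,n})\vert \lesssim e^{-c_2 F}$ from Lemma~\ref{Lemma Decay solution} integrates to a quantity uniformly bounded in $\lambda$, so the $L^2$-norm simply inherits the pointwise polynomial rate.

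The main obstacle is the bookkeeping in the second paragraph, namely checking the homogeneous bound on $d_j^{r,(n+1,n+1)}(V_\lambda,K_\lambda)$ for every admissible index $(j,r)$ appearing in Lemma~\ref{Lemma Rn} and verifying that $\ell=-1$ is indeed the worst index in the sum. Once this is in place, the separate estimates on $I^{\pm}\cap J_\lambda^{\pm}$ and on the bounded middle part $[-a_-,a_+]$ combine with the already proven Proposition~\ref{Lemma L2 norm} to give the stated bound.
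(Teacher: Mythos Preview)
Your proposal is correct and follows essentially the same route as the paper: split via~\eqref{Eq L u}, invoke Proposition~\ref{Lemma L2 norm} for the $\xi^{(1)}$ and $\xi^{(2)}$ terms, and then bound $\bigl|R_{\lambda,n}/(\lambda+m-V_{22})\bigr|$ pointwise on $J_\lambda$ using Lemma~\ref{Lemma Rn} together with Lemma~\ref{Lemma VK} and the cutoff constraint $\vert f\vert^{2\nu_{\pm}}\leq \lambda^{1-\varepsilon_1}$. The paper streamlines the final step slightly by writing $\bigl\Vert\tfrac{R_{\lambda,n}}{\lambda+m-V_{22}}\bigr\Vert_{L^\infty(J_\lambda)}\Vert k_1 u_{\lambda,n}\Vert_{L^2}$ and cancelling $\Vert k_1 u_{\lambda,n}\Vert_{L^2}$ directly against the denominator (rather than separately showing the numerator is $L^2$-integrable and the denominator is $\gtrsim 1$), and it also checks $\Psi_{\lambda,n}\in\mathcal{D}(H_V)$; note too that only the lower bounds $\vert\lambda+m-V_{22}\vert\gtrsim\lambda$ and $\vert V_\lambda\vert^{1/2}\gtrsim\lambda$ are actually available and needed, not the two-sided $\approx$ you wrote.
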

\begin{proof}
Before going to the proof, it is necessary to check that~$\Psi_{\lambda,n}$ belongs to the domain of~$H_{V}$. With the choice of $\Psi_{\lambda,n}$ in the statement of the theorem, we have the relation between $H_{V}$ and $\mathscr{L}_{\lambda,V}$ as follows:
$ (H_{V}-\lambda)\Psi_{\lambda,n} = \begin{psmallmatrix}
\mathscr{L}_{\lambda,V} u_{\lambda,n} \\0
\end{psmallmatrix}. 
$
Thus, $\Psi_{\lambda,n} \in \mathcal{D}(H_{V})$ if and only if
\begin{equation*}
%\left\{
\begin{aligned}
k_1 u_{\lambda,n} \in L^2(\R), \qquad
%\\ &
\frac{k_1 \left(-i\partial_{x}\right) u_{\lambda,n}}{\lambda+m-V_{22}} \in L^2(\R), \qquad
%\\ & 
\mathscr{L}_{\lambda,V} u_{\lambda,n} \in L^2(\R).
\end{aligned} 
%\right.
\end{equation*}
Obviously, this happens if $\xi$ is a \enquote{true} cut-off. The thing that makes us worry, for example, is the case $\delta^{+}_{\lambda}=+\infty$ when $g_{+}$ is bounded at $+\infty$.
From the observation \eqref{Eq F} combined with Lemma \ref{Lemma Decay solution} and mimicking the proof of Proposition \ref{Lemma L2 norm}, it yields that, for  sufficiently large $x>0$ (with some $c_3>0$),
$$
  \left\vert k_1(x) u_{\lambda,n}(x) \right\vert\lesssim \exp(-c_3 x),
  \qquad
  \left\vert \frac{k_1(x) (-i\partial_{x})u_{\lambda,n}(x)}{\lambda+m-V_{22}(x)} \right\vert \lesssim  \exp(-c_3 x).
$$
%\begin{align*}
%&\left\vert k_1(x) u_{\lambda,n}(x) \right\vert\lesssim \exp(-c_3 x),\\
%&\left\vert \frac{k_1(x) (-i\partial_{x})u_{\lambda,n}(x)}{\lambda+m-V_{22}(x)} \right\vert \lesssim  \exp(-c_3 x).
%\end{align*}
From \eqref{Eq L u}, we obtain
\[ \left\vert \mathscr{L}_{\lambda,V} u_{\lambda,n} \right\vert = \left\vert\frac{R_{\lambda,n}}{\lambda+m-V_{22}} \right\vert \vert k_1 \exp(-P_{\lambda,n}) \vert.\]
We will see later that $\left\vert\frac{R_{\lambda,n}(x)}{\lambda+m-V_{22}(x)} \right\vert$ is uniformly bounded from above. Therefore, the $L^2$-integrability of $\mathscr{L}_{\lambda,V} u_{\lambda,n}$ is ensured by the $L^2$-integrability of $k_1(x) \exp\left(-P_{\lambda,n}(x)\right)$.

Now, we can come back to prove the statement of the theorem. Let us recall that we need to estimate the quantity
\begin{align*}
\frac{\Vert (H_{V}- \lambda)\Psi_{\lambda,n} \Vert}{\Vert \Psi_{\lambda,n} \Vert} &= \frac{\Vert \mathscr{L}_{\lambda,V} u_{\lambda,n}\Vert}{\sqrt{\Vert k_1 u_{\lambda,n} \Vert^2+ \left\Vert \frac{k_1}{\lambda+m-V_{22}} (-i\partial_{x})u_{\lambda,n} \right\Vert^{2}}}.
\end{align*}
Using the triangle inequality and \eqref{Eq L^2 xi''}, we obtain
\begin{align*}
 \Vert \mathscr{L}_{\lambda,V} u_{\lambda} \Vert
 \leq \ &\left\Vert \frac{k_1}{\lambda+m-V_{22}} \exp(-P_{\lambda,n}) \xi^{(2)} \right\Vert + \left\Vert \frac{k_1}{\lambda+m-V_{22}} \left( 2P_{n}^{(1)}-\frac{K_{\lambda}^{(1)}}{K_{\lambda}}\right)\exp(-P_{\lambda,n}) \xi^{(1)} \right\Vert 
 \\
&+ \left\Vert \frac{R_{\lambda,n}}{\lambda+m-V_{22}} \right\Vert_{L^{\infty}(J_{\lambda})}\left\Vert k_1 u_{\lambda,n} \right\Vert,
\end{align*}
and thus
\begin{align*}
\frac{\Vert \mathscr{L}_{\lambda,V} u_{\lambda,n}\Vert}{\sqrt{\Vert k_1 u_{\lambda,n} \Vert^2+ \left\Vert \frac{k_1}{\lambda+m-V_{22}} (-i\partial_{x})u_{\lambda,n} \right\Vert^{2}}} \leq \kappa(\lambda) + \left\Vert \frac{R_{\lambda,n}}{\lambda+m-V_{22}} \right\Vert_{L^{\infty}(J_{\lambda})}.
\end{align*}
The estimate of the remainder \eqref{Eq Estimate Rn} together 
with Lemma~\ref{Lemma VK} yield that, for all $ x\in [a_{-},a_{+}]$,
\begin{align*}
\left\vert\frac{R_{\lambda,n}(x)}{\lambda+m-V_{22}(x)} \right\vert \lesssim \lambda^{-n}.
\end{align*}

In a similar way, it turns out that, for all $x\in I^{\pm}\cap J_{\lambda}^{\pm}$,
\begin{equation*}
\left\vert\frac{R_{\lambda,n}(x)}{\lambda+m-V_{22}(x)} \right\vert \lesssim \frac{1}{\lambda}\sum_{\ell=-1}^{n-2} \frac{ f_{\pm}(x)^{n+\ell+2}}{\vert V_{\lambda}(x)\vert^{(n+\ell)/2}}\lesssim \left\{ \begin{aligned}
& \lambda^{-n}, \qquad &&f_{\pm} \text{ is bounded at }\pm \infty,\\
& \lambda^{-\frac{(1+\varepsilon_{1})}{2}(n+1)+1}, \qquad &&f_{\pm} \text{ is unbounded at }\pm \infty.\end{aligned} \right.
\end{equation*}
To prove the case $f_{\pm}$ is unbounded at $\pm \infty$, we employ the fact $f_{\pm}(x)^{\frac{2}{1-\varepsilon_{1}}}\leq \lambda$, which is a consequence of \eqref{Eq G Property delta 2} and the definition of $g$ in \eqref{Eq G g 1}.
\end{proof}

%----------------------------------------------------
\begin{theorem}\label{Theorem 2}
Under the same assumptions 
and settings as in Theorem~\ref{Theorem 1} with $N\in \N_{0}$ and $V_{12}=V_{21}$, let $n=N$. Then, 
\begin{equation*}
\frac{\Vert (H_{V}- \lambda)\Psi_{\lambda,n} \Vert}{\Vert \Psi_{\lambda,n} \Vert} \leq \kappa(\lambda)+\tau^{(n)}(\lambda) , \qquad \lambda \to +\infty.
\end{equation*}
Here, $\kappa$ is as in \eqref{Eq L^2 xi''} and $\tau^{(n)}=\tau_{-}^{(n)}+\tau_{0}^{(n)}+\tau_{+}^{(n)}$ with $\tau_{0}=\lambda^{-(n+1)}$ and
\begin{enumerate}[i)]
\item if $V_{11}$ or $V_{22}$  is not bounded at $\pm \infty$
\begin{equation*}
\tau_{\pm}^{(n)}(\lambda)=\left\{
\begin{aligned}
&\mathcal{O}\left(\lambda^{-n}\sup_{x\in I^{\pm}\cap J_{\lambda}^{\pm}} \frac{f_{\pm}(x)^{n+1} \max\{\vert V_{11} \vert, \vert V_{22} \vert \}(x)}{\vert \lambda+m-V_{22}(x) \vert} \right),\quad &&f_{\pm} \text{ is bounded at }\pm \infty,\\
&\mathcal{O}\left(\lambda^{-\frac{(1+\varepsilon_{1})}{2}(n+1)+1}\right),\quad &&f_{\pm} \text{ is unbounded at }\pm \infty;
\end{aligned}
\right.
\end{equation*}
\item if $V_{11}$ and $V_{22}$ are bounded at $\pm \infty$
\begin{equation*}
\tau_{\pm}^{(n)}(\lambda)=\left\{ \begin{aligned}
& \mathcal{O}(\lambda^{-(n+1)}),\quad &&f_{\pm} \text{ is bounded at }\pm \infty,\\
& \mathcal{O}(\lambda^{-\frac{(1+\varepsilon_{1})}{2}(n+1)}),\quad &&f_{\pm} \text{ is unbounded at }\pm \infty.
\end{aligned}\right.
\end{equation*}
\end{enumerate}
\end{theorem}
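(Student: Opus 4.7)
The proof will proceed along the same scheme as that of Theorem~\ref{Theorem 1}, with the only novelty being a sharper estimation of the WKB remainder. First I would verify that $\Psi_{\lambda,n} \in \mathcal{D}(H_V)$ exactly as in the opening paragraph of the proof of Theorem~\ref{Theorem 1}; for the borderline case $N=0$ this step invokes the additional $V_{12}=V_{21}$ hypothesis under which Lemma~\ref{Lemma Decay solution} and Proposition~\ref{Lemma L2 norm} remain valid. The identity~\eqref{Eq L u}, the triangle inequality, and Proposition~\ref{Lemma L2 norm} then reduce the claim to the single estimate
\begin{equation*}
\frac{\Vert (H_V-\lambda)\Psi_{\lambda,n}\Vert}{\Vert\Psi_{\lambda,n}\Vert} \leq \kappa(\lambda) + \left\Vert \frac{R_{\lambda,n}}{\lambda+m-V_{22}}\right\Vert_{L^\infty(J_\lambda)}.
\end{equation*}

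The key novelty, and the main work of the proof, is to estimate $\Vert R_{\lambda,n}/(\lambda+m-V_{22})\Vert_{L^\infty(J_\lambda)}$ more precisely than in Theorem~\ref{Theorem 1}. I plan to feed the remainder structure of Lemma~\ref{Lemma Rn} into the \emph{refined} inequalities of Lemma~\ref{Lemma VK}, which retain the extra multiplicative factor $\max\{|V_{11}|,|V_{22}|\}/|\lambda+m-V_{22}|$ at each differentiation of $V_\lambda$---a factor that was crudely bounded by $1$ in the proof of Theorem~\ref{Theorem 1} via~\eqref{Eq max}. Since the sum in Lemma~\ref{Lemma Rn} runs over multi-indices with $|j|\geq 1$, at least one differentiation is present in every summand, so at least one copy of this ratio factor survives the cancellations with the denominator $|V_\lambda|^{(n+\ell)/2}$. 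The target intermediate inequality is, on $I^\pm \cap J_\lambda^\pm$,
\begin{equation*}
\left\vert \frac{R_{\lambda,n}(x)}{\lambda+m-V_{22}(x)}\right\vert \lesssim \frac{|f(x)|^{(n+1)\nu_\pm}}{\lambda^n}\cdot\frac{\max\{|V_{11}(x)|,|V_{22}(x)|\}}{|\lambda+m-V_{22}(x)|},
\end{equation*}
with the compact piece $[-a_-,a_+]$ contributing uniformly $\mathcal{O}(\lambda^{-(n+1)})$ by the second part of Lemma~\ref{Lemma VK}.

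The conclusion would then follow by a dichotomy. When both $V_{11}$ and $V_{22}$ are bounded at $\pm\infty$, the ratio factor is uniformly $\mathcal{O}(\lambda^{-1})$ on $I^\pm \cap J_\lambda^\pm$, yielding the advertised $\lambda^{-(n+1)}$-rate when $\nu_\pm \leq 0$; when at least one of $V_{11},V_{22}$ is unbounded at $\pm\infty$, the supremum of this ratio is what remains and appears explicitly in $\sigma_\pm^{(n)}(\lambda)$. The case $\nu_\pm>0$ is handled in both scenarios exactly as in Theorem~\ref{Theorem 1}, substituting $|f(x)|^{2\nu_\pm/(1-\varepsilon_1)}\leq \lambda$ on $J_\lambda^\pm$ via~\eqref{Eq G Property delta 2} together with~\eqref{Eq G g 1}--\eqref{Eq G g 2}. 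I expect the main technical obstacle to be purely combinatorial: keeping precise track, across the indexing sets $\mathcal{I}_j^{r,(n+1,n+1)}$ of Lemma~\ref{Lemma Rn}, of how many factors of the ratio $\max\{|V_{11}|,|V_{22}|\}/|\lambda+m-V_{22}|$ survive after all cancellations, and identifying which elementary monomial in $d_j^{r,(n+1,n+1)}(V_\lambda,K_\lambda)$ sets the dominant behaviour.
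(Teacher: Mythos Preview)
Your approach is correct and mirrors the paper's own proof: reduce to $\kappa(\lambda)+\|R_{\lambda,n}/(\lambda+m-V_{22})\|_{L^\infty(J_\lambda)}$, then exploit the refined derivative bounds in Lemma~\ref{Lemma VK} to extract one extra power of $\max\{|V_{11}|,|V_{22}|\}/|\lambda+m-V_{22}|$ from every summand of Lemma~\ref{Lemma Rn}.

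There is one point where your reasoning, as written, does not quite close. You say the ratio factor appears ``at each differentiation of $V_\lambda$'' and then infer from $|j|\geq 1$ that at least one such factor survives. But $|j|=j_1+j_2$ counts $V_\lambda$- and $K_\lambda$-derivatives together, and summands with $j_1=0$, $j_2\geq 1$ (e.g.\ the $\tfrac{K_\lambda'}{K_\lambda}|V_\lambda|^{1/2}$ piece of $R_{\lambda,0}$) contain no $V_\lambda$-derivative at all. What rescues the argument is that the symmetry hypothesis $V_{12}=V_{21}$ is in force for \emph{every} $n$, not only for $n=0$: it is precisely this hypothesis that activates the second refined bound in Lemma~\ref{Lemma VK}, namely $|K_\lambda^{(\ell)}|\lesssim |K_\lambda|\,|f|^{\ell\nu_\pm}|V_{22}|/|\lambda+m-V_{22}|$, so that $K_\lambda$-derivatives also carry a copy of the ratio. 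The paper's proof uses this to obtain the factor $\bigl(\max\{|V_{11}|,|V_{22}|\}/|\lambda+m-V_{22}|\bigr)^{|j|}$ with the full exponent $|j|$, and then discards all but one copy via~\eqref{Eq max}. Once you state this explicitly, your target intermediate inequality and the ensuing dichotomy go through exactly as you describe; note in particular that in the bounded case with $\nu_\pm>0$ the surviving ratio contributes an additional $\lambda^{-1}$, which is how the exponent improves from $-\tfrac{1+\varepsilon_1}{2}(n+1)+1$ in Theorem~\ref{Theorem 1} to $-\tfrac{1+\varepsilon_1}{2}(n+1)$ here.
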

\begin{proof}
Since $V_{12}=V_{21}$, Lemma \ref{Lemma Decay solution} and Proposition \ref{Lemma L2 norm} still hold when $n=0$. However, we can perform the estimates in Theorem \ref{Theorem 1} more strictly. Now, we assume that $V_{11}$ or $V_{22}$ is not bounded at $+\infty$. In detail, from the estimate of the remainder in \eqref{Eq Estimate Rn} together with Lemma \ref{Lemma VK} and for $n=0$, we obtain
\begin{align*}
\left\vert\frac{R_{\lambda,0}(x)}{\lambda+m-V_{22}(x)} \right\vert&\lesssim\frac{1}{\lambda}\left(\frac{\vert V_{\lambda}^{(1)}(x) \vert}{\vert V_{\lambda}(x) \vert^{1/2}} + \frac{\vert K_{\lambda}^{(1)}(x) \vert}{\vert K_{\lambda}(x) \vert} \vert V_{\lambda}(x) \vert^{1/2}\right)\\
&\lesssim \left\{ \begin{aligned}
&\frac{1}{\lambda} \qquad &&\forall x \in [0,a_{+}],\\
& \frac{f_{+}(x)\max\{\vert V_{11} \vert, \vert V_{22} \vert \}(x)}{\vert \lambda + m - V_{22}(x)\vert} \qquad &&\forall x \in I^{+}\cap J_{\lambda}^{+}.
\end{aligned}
\right.
\end{align*}
For $n\geq 1$, for all $x\in [0,a_{+}]$, we have
\begin{align*}
\left\vert\frac{R_{\lambda,n}(x)}{\lambda+m-V_{22}(x)} \right\vert \lesssim \frac{1}{\lambda^{n+1}}.
\end{align*}
While for all $x\in I^{+}\cap J_{\lambda}^{+}$,
\begin{align*}
\left\vert\frac{R_{\lambda,n}(x)}{\lambda+m-V_{22}(x)} \right\vert &\lesssim \sum_{\ell=-1}^{n-2} \frac{1}{\lambda^{n+\ell+1}} \sum_{\vert j \vert=1}^{n+\ell+2} \sum_{\vert r \vert=n+\ell+2}  f_{+}(x)^{\vert r \vert} \left( \frac{\max\{\vert V_{11} \vert, \vert V_{22} \vert \}(x)}{\vert \lambda+m-V_{22}(x)\vert}\right)^{\vert j \vert}\\
&\lesssim \frac{1}{\lambda^{n}}\frac{ f_{+}(x)^{(n+1)} \max\{\vert V_{11} \vert, \vert V_{22} \vert \}(x)}{\vert \lambda+m-V_{22}(x)\vert}\sum_{\ell=-1}^{n-2} \frac{ f_{+}(x)^{\ell+1}}{\lambda^{\ell+1}}.
\end{align*}
In the second inequality, we employed \eqref{Eq max}. Notice that, if $f_{+}$ is unbounded, from \eqref{Eq G Property delta 2} and the definition of $g$ in \eqref{Eq G g 1}, we have $ f_{+}(x) \leq \lambda^{\frac{1-\varepsilon_{1}}{2}}$ for all $x\in J_{\lambda}^{+}$ and thus for all $x\in I^{+}\cap J_{\lambda}^{+}$,
\[\sum_{\ell=-1}^{n-2} \frac{f_{+}(x)^{\ell+1}}{\lambda^{\ell+1}} \lesssim 1,\]
in all cases of $f_{+}$. From this, we obtain the estimates in the statement for all $x\geq 0$, even in the case $V_{11}$ and $V_{22}$ bounded at $\pm \infty$. The proof for $x\leq 0$ is fulfilled in the same way.
\end{proof}
\begin{remark}\label{Remark sign}
Let us make some comments about the shape of the pseudomodes in connection with the sign of $\mathrm{Im}\, V_{11}+\mathrm{Im}\, V_{22}$ and the sign of $\lambda$:
\begin{enumerate}[i)]
\item If $\lambda>0$ and the sum of the diagonal terms of $V$ changes its sign in the assumption~$\eqref{Assump G Diagonal 1}$, \emph{i.e}.
\begin{equation}\label{Assump G Diagonal 1'}
\begin{aligned}
&\mathrm{Im}\, V_{11}+\mathrm{Im}\, V_{22}\gtrsim 1 \qquad &&\text{ on }I^{-},\\
&\mathrm{Im}\, V_{11}+\mathrm{Im}\, V_{22} \lesssim -1 \qquad &&\text{ on }I^{+};
\end{aligned}
\end{equation}
then we just need to choose the minus sign in the formula of $\psi_{-1}^{(1)}$ in \eqref{Eq Recursion}. Then, we have
\begin{equation*}
\mathrm{Re}\, (\lambda \psi_{-1}^{(1)})=-\frac{1}{\sqrt{2}} \frac{\mathrm{Im} V_{11}(\lambda+m-\mathrm{Re}\, V_{22})+ \mathrm{Im}\, V_{22} (\lambda-m-\mathrm{Re}\, V_{11})}{\sqrt{\vert V_{\lambda} \vert + \mathrm{Re}\, V_{\lambda}}}.
\end{equation*}
By repeating the procedure when proving \eqref{Eq Re psi-1'}, we have
\begin{equation}\label{Eq lambda Re psi-1'}
\begin{aligned}
&\mathrm{Re} \,(\lambda\psi_{-1}^{(1)})\geq  -\frac{\mu_{+}-\eta}{\sqrt{\eta^2+(2+2\eta)^2}} \left(\mathrm{Im}\, V_{11} +  \mathrm{Im}\, V_{22} \right)\qquad && \text{ on } I^{+}\cap J_{\lambda}^{+},\\
&\mathrm{Re} \,(\lambda\psi_{-1}^{(1)})\leq -\frac{\mu_{-}-\eta}{\sqrt{\eta^2+(2+2\eta)^2}} \left(\mathrm{Im}\, V_{11} +  \mathrm{Im}\, V_{22} \right)\qquad &&\text{ on } I^{-}\cap J_{\lambda}^{-}.
\end{aligned}
\end{equation}
Therefore, the function
\[ \tilde{F}(x) := -\int_{0}^{x} \left(\mathrm{Im}\, V_{11}(t) +  \mathrm{Im}\, V_{22}(t)\right)\, \dd t \]
will play the same role as function $F$. 
Although all the other terms $\{\psi_{k}^{(1)}\}_{1\leq k \leq n-1}$ also change their sign, it does not matter because they are all estimated with absolute value. Only the sign of $\lambda \psi_{-1}^{(1)}$ is crucial. Thus, we still assume the same remaining hypotheses 
in Assumption~\ref{Assump General}, but $F$ is replaced by $\tilde{F}$ and we have the same outcomes as stating in the above theorems.
\item Let $\lambda<0$ and Assumption \ref{Assump General} hold. What we need to do is to slightly change $\lambda$ into $-\lambda$ in some places such as in Lemma \ref{Lemma delta}. We redefine
%\begin{equation*}
$
\delta^{\pm}_{\lambda} : = \inf\left\{ x\geq a_{\pm}: g_{\pm}(x)=-\lambda\right\} 
$
%\end{equation*}
with $g_{\pm}$ being given in the same way as in \eqref{Eq G g 1}. When $(-\lambda)$ is large enough, it follows, as in \eqref{Rem Re V1122}, that for all $x\in J_{\lambda}$:
\begin{equation*}
\begin{aligned}
&(1-\eta)(-\lambda)\leq \mathrm{Re}\, V_{11}(x)+m-\lambda \leq (1+\eta)(-\lambda),\\
&(1-\eta)(-\lambda)\leq  \mathrm{Re}\, V_{22}(x)-m-\lambda\leq (1+\eta)(-\lambda),\\
&\vert \mathrm{Im}\, V_{11}(x)- \mathrm{Im}\, V_{22}(x)\vert \leq \eta(-\lambda).
\end{aligned}
\end{equation*}
In this case, we will choose the minus sign in the formula of $\psi_{-1}^{(1)}$ in \eqref{Eq Recursion} and we obtain the same results as in \eqref{Eq lambda Re psi-1'}. Then, the method of the current section still works for the pseudomode construction and the outcomes of the above theorems are analogous.
\end{enumerate}
In summary, from the two remarks above, 
our scheme suggests that the sign of the solution $\psi_{-1}^{(1)}$ of the eikonal equation should be chosen as in the following table:
\begin{table}[h]
\begin{tabular}{ |c||l||l| } 
\hline
\multirow{2}{1cm}{} &$\mathrm{Im}\, V_{11}+ \mathrm{Im}\, V_{22}\lesssim -1$ on $I_{-}$  &$\mathrm{Im}\, V_{11}+ \mathrm{Im}\, V_{22}\gtrsim 1$ on $I_{-}$\\
&$\mathrm{Im}\, V_{11}+ \mathrm{Im}\, V_{22}\gtrsim 1$ on $I_{+}$&$\mathrm{Im}\, V_{11}+ \mathrm{Im}\, V_{22}\lesssim -1$ on $I_{+}$ \\
\hhline{|=||=||=|}
$\lambda>0$ & $+$ & $-$ \\ 
\hline
$\lambda<0$ & $-$ & $+$ \\ 
\hline
\end{tabular}
\caption{The sign in the formula of $\psi_{-1}^{(1)}$ in \eqref{Eq Recursion}.}
\end{table}

\end{remark} 
\subsection{Applications}
We consider some special examples of the matrix-valued potentials $V$ which satisfy Assumption \ref{Assump General}.
\begin{example}
Let us list some smooth potentials $V$ defined on $\R$ such that $V_{12}=V_{21}=u$ and Assumption~\ref{Assump General} holds true. From that, we can apply Theorem \ref{Theorem 2}.
\begin{enumerate}
\item[1)] $V_{11}$ and $V_{22}$ are bounded at $\pm \infty$:
\begin{equation*}
V(x) :=
\begin{pmatrix}
i \frac{x}{\sqrt{x^2+1}} && u(x)\\
u(x) && 0
\end{pmatrix},
\end{equation*}
where $u$ is some smooth function on $\R$ such that, 
with $\varepsilon \in \left(0,\frac{1}{2}\right)$,
\[ \int_{0}^{x} \mathrm{Im} \, u(t) \, \dd t \leq \varepsilon F(x) =\varepsilon (\sqrt{x^2+1}-1), \qquad \forall \vert x \vert \gtrsim 1.\]
Here we choose $\mu_{\pm}=1$, $f_{\pm}(x)=\vert x\vert^{-1}$ for $\vert x \vert\gtrsim 1$. Since $g_{\pm}$ are bounded both at $-\infty$ and $+\infty$, the cut-off function is not needed for the pseudomodes construction. For all $n\in \N_{0}$, there exists $\lambda_0>0$ such that, for all $\lambda>\lambda_0$, 
\[\frac{\Vert (H_{V}- \lambda)\Psi_{\lambda,n} \Vert}{\Vert \Psi_{\lambda,n} \Vert} \lesssim \lambda^{-(n+1)}. \]
\item[2)] $V_{11}$ is bounded but $V_{22}$ is unbounded at $\pm \infty$:
\begin{equation*}
V(x) :=
\begin{pmatrix}
i \frac{x}{\sqrt{x^2+1}} && u(x)\\
u(x) && i \ln(x+\sqrt{x^2+1})
\end{pmatrix},
\end{equation*}
where $u$ is any smooth function on $\R$ such that, 
with $\varepsilon \in \left(0,\frac{1}{2}\right)$,
\[ \int_{0}^{x} \mathrm{Im} \, u(t) \, \dd t \leq \varepsilon F(x)=\varepsilon x\ln(x+\sqrt{x^2+1}), \qquad \forall \vert x \vert \gtrsim 1.\]
Here we choose $\mu_{\pm}=1$, $f_{\pm}(x)=\vert x\vert^{-1}$ for $\vert x \vert\gtrsim 1$. Following \eqref{Eq G g 1} and \eqref{Eq delta Def}, for $\lambda>0$ large enough, the boundary of the cut-off can be computed approximately as
$ \delta^{-}_{\lambda} =\delta^{+}_{\lambda}\approx \sinh(\eta \lambda) \approx e^{\eta \lambda},$
and thus with some $c>0$,
$\kappa(\lambda) = \mathcal{O}\left(\exp(-ce^{\eta \lambda}) \right).$
It implies that, for all $n\in N_{0}$, there exists $\lambda_{0}>0$ such that, for all $\lambda>\lambda_{0}$, we have
\[ \frac{\Vert (H_{V}- \lambda)\Psi_{\lambda,n} \Vert}{\Vert \Psi_{\lambda,n} \Vert} \lesssim \lambda^{-(n+1)}.\] 
\item[3)] 
$V_{11}$ is bounded at $-\infty$ but unbounded at $+\infty$ while $V_{22}$ is on the contrary:
\[ V(x) := \begin{pmatrix}
i\frac{e^x}{2} && u\\
u && -i \frac{e^{-x}}{2}
\end{pmatrix},\]
where $u$ is any smooth function on $\R$ such that, 
with $\varepsilon \in \left(0,\frac{1}{2}\right)$,
\[ \int_{0}^{x} \mathrm{Im} \, u(t) \, \dd t \leq \varepsilon F(x)=\varepsilon \cosh(x), \qquad \forall \vert x \vert \gtrsim 1.\]
In this situation, we make a choice $\mu_{\pm}=\mu$ with some $\mu\in (2\varepsilon,1)$, $f_{\pm}(x)=1$ for $\vert x\vert\gtrsim 1$. From \eqref{Eq G g 1} and \eqref{Eq delta Def} and for $\lambda>0$ large enough, we obtain
\[ \delta^{-}_{\lambda}=\delta^{+}_{\lambda} = \arcsinh(\eta \lambda) =\ln\left(\eta \lambda+\sqrt{(\eta \lambda)^2+1}\right),\]
and thus with some $c>0$,
$\kappa(\lambda) = \mathcal{O}\left(\exp(-c\lambda^{d_2}) \right).$
It implies that, for all $n\in \N_{1}$, there exists $\lambda_{0}>0$ such that, for all $\lambda>\lambda_{0}$,
\[ \frac{\Vert (H_{V}- \lambda)\Psi_{\lambda,n} \Vert}{\Vert \Psi_{\lambda,n} \Vert} \lesssim \lambda^{-n}.\]
\end{enumerate}

\end{example}
\begin{example}[Polynomial-like diagonal terms] Let us take a look at the potential $V$ satisfying Assumption \ref{Assump General} with $f_{\pm}(x)= \vert x\vert^{-1}$, $V_{12}=V_{21}=0$ (for simplicity) and
\[ \vert \mathrm{Re}\, V_{ii} \vert \approx \vert x \vert^{\alpha_{ii}},\qquad  \vert \mathrm{Im}\, V_{ii}\vert  \approx  \vert x \vert^{\beta_{ii}} ,\qquad \forall \vert x \vert \gtrsim 1,\]
with $\alpha_{ii},\beta_{ii}\in \R$, for $i\in \{1,2\}$. 
It is necessary to assume that $\max\,\{\beta_{11},\beta_{22}\}\geq 0$ such that the sum of the imaginary parts of the diagonal terms $\mathrm{Im}\, V_{11}+ \mathrm{Im}\, V_{22}$ satisfies the condition \eqref{Assump G Diagonal 1}. Theorem \ref{Theorem 1} provides us with the fact that $n=1$ (\emph{i.e.} we need $V_{11},V_{22} \in W_{\mathrm{loc}}^{2,2}(\R)$ and $V_{11},V_{22} \in W_{\mathrm{loc}}^{1,2}(\R)$) is enough to treat all kinds of potentials satisfying Assumption \ref{Assump General}. However, we would like to see what type of potential that $n=0$ can be treated and how fast the decay is when the potential is more regular. For that purposes, let us consider two cases in Theorem \ref{Theorem 2}:
\begin{enumerate}
\item[Case 1:] $\vert V_{11} \vert$ and $\vert V_{22}\vert$ are bounded at $\pm \infty$. This happens if and only if $\displaystyle \omega:=\max_{i\in \{1,2\}} \{\alpha_{ii}, \beta_{ii}\}=0$. The application of Theorem \ref{Theorem 2} yields that, for all $n\geq 0$,
\[ \frac{\Vert (H_{V}- \lambda)\Psi_{\lambda,n} \Vert}{\Vert \Psi_{\lambda,n} \Vert} = \mathcal{O}(\lambda^{-(n+1)}) \qquad \text{ as } \lambda\to+\infty.\]
Furthermore, in this case, the pseudomodes globally localise on $\R$ without being attached with cut-off functions.
\item[Case 2:] $\vert V_{11} \vert$ or $\vert V_{22}\vert$ is not bounded at $\pm \infty$ (\emph{i.e.} $\omega>0$). 
We consider two smaller cases:
\begin{enumerate}
\item[i)] $\mathrm{Re}\, V_{11}$ and $\mathrm{Re}\, V_{22}$ and $\mathrm{Im}\, V_{11} - \mathrm{Im}\, V_{22}$ are bounded at $\pm \infty$.  We do not use cut-off in this situation and for all $n\geq 0$,
\begin{align*}
\frac{\Vert (H_{V}- \lambda)\Psi_{\lambda,n} \Vert}{\Vert \Psi_{\lambda,n} \Vert} &= \left\{
\begin{aligned}
&\mathcal{O}\left(\lambda^{-(n+1)}\right), \qquad & \omega \leq n+1,\\
&\mathcal{O}\left( \lambda^{-n}\right),\qquad &\omega>n+1,
\end{aligned}
\right.
\end{align*}
as $\lambda \to +\infty$.
In the case $\omega>n+1$, we employed \eqref{Eq max}.
\item[ii)] $\mathrm{Re}\, V_{11}$ or $\mathrm{Re}\, V_{22}$ or $\mathrm{Im}\,V_{11}-\mathrm{Im}\,V_{22}$ is unbounded at $\pm\infty$. Then, the possible maximum order of $g_{\pm}$ denoted by $\tilde{\omega}$ is $\omega$, \emph{i.e.} $\tilde{\omega} \leq \omega$. Of course, $\tilde{\omega}>0$ and thus we can compute 
$\delta^{-}_{\lambda}=\delta^{+}_{\lambda}=\delta \approx \lambda^{\frac{1}{\tilde{\omega}}}.$
Applying Theorem \ref{Theorem 2} again, it results that, for all $n\geq 0$,
\begin{align*}
\frac{\Vert (H_{V}- \lambda)\Psi_{\lambda,n} \Vert}{\Vert \Psi_{\lambda,n} \Vert} &= \left\{
\begin{aligned}
&\mathcal{O}\left(\lambda^{-(n+1)}\right), \qquad & \omega \leq n+1,\\
&\mathcal{O}\left( \lambda^{-(n+1)+\frac{\omega-n-1}{\tilde{\omega}}}\right),\qquad &\omega>n+1,
\end{aligned}
\right.
\end{align*}
as $\lambda \to +\infty$.
We see that in the second case, 
$n=0$ can cover all the potentials such that
$ \omega-1<\tilde{\omega} \leq \omega.$
For example, $n=0$ can treat
\begin{enumerate}[a)]
\item $V= \begin{pmatrix}
x^2+ix && 0\\
0 && 0
\end{pmatrix}$, with $\omega=\tilde{\omega}=2$.
\item $V= \begin{pmatrix}
i(x+\vert x \vert^{\frac{1}{2}}) && 0\\
0 && ix
\end{pmatrix}$, with $\omega=1$ and $\tilde{\omega}=\frac{1}{2}$.
\end{enumerate}
\end{enumerate}
\end{enumerate}
The same thing happens as in the Schr\"{o}dinger case: 
the pseudomode with $n=1$ is sufficient to treat all polynomial-like potential (even the case $V_{12} \neq V_{21}$, see Theorem \ref{Theorem 1}). The pseudomode associated with $n=0$ suffices for potentials growing not faster than linearly. 
\end{example}

\begin{example}[Exponential potentials] 
Consider following potentials $V$ satisfying Assumption \ref{Assump General}. Since we would like to apply Theorem \ref{Theorem 2}, we will assume further that $V_{12}=V_{21}=0$ for the sake of simplicity. 
\begin{enumerate}
\item[1)] $V_{11}$ and $V_{22}$ are bounded at $\pm \infty$ with $N\geq0$:
\begin{equation*}
V(x):=\begin{pmatrix}
i\, \mathrm{sgn}(x) e^{\vert x \vert^{-\alpha_{1}}} && 0\\
0 && i\,\mathrm{sgn}(x) e^{\vert x \vert^{-\alpha_{2}}}
\end{pmatrix},
\qquad \forall  \vert x \vert \gtrsim 1,
\end{equation*}
with $\alpha_{1},\alpha_{2}\geq 0$. We choose $f_{\pm}(x)=1$ for $\vert x \vert\gtrsim 1$. The cut-off function is not needed in this case, \emph{i.e.,} $\xi\equiv 1$. From Proposition \ref{Lemma L2 norm} and Theorem \ref{Theorem 2},  we have, 
\begin{equation*}
\frac{\Vert (H_{V}- \lambda)\Psi_{\lambda,N} \Vert}{\Vert \Psi_{\lambda,N} \Vert} = \mathcal{O}\left(\lambda^{-(N+1)}\right) \qquad \text{ as } \lambda\to+\infty.
\end{equation*}
\item[2)] $V_{11}$ is unbounded at $\pm\infty$ while $V_{22}$ is bounded at $\pm\infty$ with $N\geq 1$, moreover, they oscillate on $\R$:
\begin{equation*}
V(x):=\begin{pmatrix}
\vert x \vert^{\alpha}+i\, \mathrm{sgn}(x) e^{\sin(x)} && 0\\
0 && i\, \mathrm{sgn}(x)  e^{\cos(x)}
\end{pmatrix},
\qquad \forall \vert x \vert \gtrsim 1,
\end{equation*}
with $\alpha >0$. We choose $f_{\pm}(x)=1$ for $\vert x \vert\gtrsim 1$. From \eqref{Eq delta Def} and \eqref{Eq G g 1}, we have $\delta^{-}_{\lambda}=\delta^{+}_{\lambda}=\delta\approx \lambda^{\frac{1}{\alpha}}$. Using \eqref{Eq F}  we obtain $\kappa \lesssim\exp\left(-c \lambda^{\frac{1}{\alpha}}\right)$. Theorem \ref{Theorem 2} gives us
\[\frac{\Vert (H_{V}- \lambda)\Psi_{\lambda,N} \Vert}{\Vert \Psi_{\lambda,N} \Vert} = \mathcal{O}\left(\lambda^{-N}\right)\qquad \text{ as } \lambda\to+\infty. \]
\item[3)] 
$V_{11}$ and $V_{22}$ are unbounded at $\pm\infty$ with $N\geq 1$:
\begin{equation*}
V(x):=\begin{pmatrix}
i\,\textup{sgn}(x)\, e^{\vert x \vert^{\alpha_{1}}} && 0\\
0 && i\,\textup{sgn}(x)\, e^{\vert x \vert^{\alpha_{2}}}
\end{pmatrix},
\qquad \forall \vert x \vert \gtrsim 1,
\end{equation*}
with $\alpha_{1}, \alpha_{2}>0$. We choose $f_{\pm}(x)= \vert x \vert^{\omega-1}$ where $\omega:=\max \{\alpha_{1}, \alpha_{2}\}$. 
\begin{enumerate}[i)]
\item 
If $\vert \mathrm{Im}\, V_{11}-\mathrm{Im}\, V_{22} \vert$ is bounded at $\pm \infty$, \emph{i.e.} $\alpha_{1}=\alpha_{2}=\omega$, from \eqref{Eq delta Def} and \eqref{Eq G g 1}, we consider two situations:
\begin{enumerate}[a)]
\item if $0<\omega\leq 1$, $g_{\pm}$ are bounded at $\pm \infty$. From Proposition \ref{Lemma L2 norm}, $\kappa(\lambda)=0$,
\item if $\omega>1$, $\delta^{\pm}_{\lambda} = \lambda^{\frac{1-\varepsilon_{1}}{2(\omega-1)}}$. Thanks to \eqref{Eq F}, $\kappa(\lambda)\lesssim \exp\left(-c\lambda^{\frac{1-\varepsilon_{1}}{2(\omega-1)}} \right)$ with some $c>0$.
\end{enumerate}
\item 
If $\vert \mathrm{Im}\, V_{11}-\mathrm{Im}\, V_{22} \vert$ is unbounded at $\pm \infty$, by \eqref{Eq delta Def} and \eqref{Eq G g 1} and for sufficiently large $\lambda>0$, $ \delta^{\pm}_{\lambda} \approx (\ln (\lambda))^{\frac{1}{\omega}}$. From the definition of $F$, we can obtain the estimate $F(x) \gtrsim \vert x\vert^{\omega}$ for $\vert x \vert \gtrsim 1$. 
Thus, by Proposition~\ref{Lemma L2 norm}, there exists a constant $c>0$ such that $\kappa(\lambda) \lesssim \lambda^{-c \ln \lambda}$.
\end{enumerate}
Finally, from Theorem \ref{Theorem 2}, we have
\[\frac{\Vert (H_{V}- \lambda)\Psi_{\lambda,N} \Vert}{\Vert \Psi_{\lambda,N} \Vert} = \left\{\begin{aligned}
&\mathcal{O}\left(\lambda^{-N}\right), \qquad && \omega \leq 1,\\
&\mathcal{O}\left(\lambda^{-\frac{1+\varepsilon_{1}}{2}(N+1)+1}\right), \qquad && \omega> 1,
 \end{aligned}\right.\qquad \text{ as } \lambda\to+\infty.\]

\item[4)] Superexponential functions:
\begin{equation*}
V(x):=\begin{pmatrix}
i e^{\sinh(x)} && 0\\
0 && -ie^{-\sinh(x)}
\end{pmatrix},
\qquad \forall x\in \R.
\end{equation*}
In this example, we choose $f_{\pm}(x)=\cosh(x)$ for $\vert x\vert\gtrsim 1$.  From \eqref{Eq delta Def} and \eqref{Eq G g 1}, we can compute, with some constant $C>0$, that
\[ \delta^{-}_{\lambda} =\delta^{+}_{\lambda} = \arcsinh\left( \arccosh\left(\frac{\eta \lambda}{2}\right)\right)\geq C \ln\left(\ln(\lambda)\right),\]
when $\lambda>0$ large enough. Clearly, we have the following rough estimate
\[ F(x) = \int_{0}^x 2\sinh(\sinh(t))\, \dd t \gtrsim  e^{\frac{2}{d_{2}C}\vert x \vert},\qquad \forall \vert x \vert \gtrsim 1,\]
where $d_{2}>0$ is the number
appearing in Proposition~\ref{Lemma L2 norm}.
Thus, we obtain
$ F(\pm d_{2} \delta^{\pm}_{\lambda} ) \gtrsim \ln(\lambda)^2$
and there exists a constant $c>0$ such that
$\kappa(\lambda) =\mathcal{O}\left(\lambda^{-c \ln \lambda}\right).$
From Theorem \ref{Theorem 1}, we have
\[ \frac{\Vert (H_{V}- \lambda)\Psi_{\lambda,N} \Vert}{\Vert \Psi_{\lambda,N} \Vert} =\mathcal{O}\left(\lambda^{-\frac{1+\varepsilon_{1}}{2}(N+1)+1}\right)\qquad \text{ as } \lambda\to+\infty.\]

\end{enumerate}
\end{example}

\section{Pseudomodes for large general pseudoeigenvalues}\label{Section complex}
In this section, we want to construct the pseudomode corresponding 
to a  complex pseudoeigenvalue
\[ 
  \lambda = \alpha + i \beta,
  \qquad \mbox{where} \qquad
  \alpha,\beta \in \R.
\]
The interesting part is that the shape of the pseudospectral region is also revealed in the process of the construction. If the large real part of $\lambda$ played a decisive role in the decaying estimation in the previous section, the imaginary part $\beta$ will take on this role in this section. We shall pay attention to the class of potentials whose $\mathrm{Im}\, V_{11}$ and $\mathrm{Im}\, V_{22}$ are identical, \emph{i.e.} $\mathrm{Im}\, V_{11}=\mathrm{Im}\, V_{22}=:\mathcal{V}$, and increasing on $\R_{+}$. The WKB analysis will be performed around 
a \emph{turning} point $x_{\beta}>0$ which is defined by the equation
\begin{equation}\label{Eq Turning point}
\mathcal{V}(x_{\beta})=\beta.
\end{equation}

Since the non-zero part of the pseudomode will live completely in $\R_{+}$, it will be more convenient to consider the operators on $L^{2}(\R_{+}) \oplus L^2(\R_{+})$, instead of $ L^2(\R) \oplus L^2(\R)$. The application of the results for the class of operators on $ L^2(\R) \oplus L^2(\R)$ is easily obtained by the trivial extension of the pseudomode of the operators on $L^{2}(\R_{+}) \oplus L^2(\R_{+})$.

\subsection{Allowable shapes of the potentials}
Since we do not have to bound the derivatives of components of $V$ on a fixed compact set as in Section \ref{Section Real}, the whole space of them can be enlarged to $L^{2}_{\mathrm{loc}}(\R_{+})$, instead of $L^{\infty}_{\mathrm{loc}}(\R_{+})$. In order that the turning point $x_{\beta}$ is uniquely determined, we will assume that $\mathcal{V}$ is strictly monotone for sufficiently large $x>0$.  The assumptions \eqref{Assump G Der Vii} and \eqref{Assump G Der Vij} are kept the same in this section, so that we can control the transport solutions.  
To be more specific, we make the following hypothesis.

\begin{assumpA}\label{Assump General C}
Let $N\in \N_{1}$ and $\nu\geq -1$, assume that $V_{11},V_{22} \in W^{N+1,2}_{\mathrm{loc}}\left(\R_{+}\right)$ and $V_{12},V_{21} \in W^{N,2}_{\mathrm{loc}}\left(\R_{+}\right)$ satisfy the subsequent conditions:
\begin{enumerate}[1)]
\item the imaginary parts of $V_{11}$ and $V_{22}$ are equal, \emph{i.e.} $\mathrm{Im}\, V_{11} = \mathrm{Im}\, V_{22} =: \mathcal{V}$ satisfy
\begin{equation}\label{Assump G V Rising}
\lim_{x\to+\infty} \mathcal{V}(x)=+\infty,
\end{equation}
and there exist $\varepsilon_{1}>0$ such that, for all $x \gtrsim 1$,
\begin{numcases}{\mathcal{V}^{(1)}(x)\gtrsim}
\mathcal{V}(x)^{\frac{1}{2}} x^{\frac{3}{2}\nu+\varepsilon_{1}}, &\label{Assump G Diagonal 1 C}\\
\vert \mathcal{V}^{(2)}(x) \vert x^{-\nu}, &\label{Assump G Diagonal 2 C}
\end{numcases}

\item the sum $\mathcal{U}:=\mathrm{Im}\, V_{12}+ \mathrm{Im}\, V_{21}$ is controlled above by $\mathcal{V}^{(1)}$: 
\begin{equation}\label{Assump G Off-Diagonal C}
\vert \mathcal{U}(x) \vert = o(x^{-\nu}\mathcal{V}^{(1)}(x))  \qquad \text{ as } x \to +\infty;
\end{equation}
\item the derivatives of $V_{ii}$ are controlled by $V_{ii}$, for $i\in \{1,2\}$,
\begin{equation}\label{Assump G Der Vii C}
 \forall n \in [[1,N+1]],\qquad \vert V_{ii}^{(n)}(x) \vert =\mathcal{O}\left( x^{n\nu} \left\vert  V_{ii}(x)\right\vert \right) \qquad \text{ as } x \to +\infty,
\end{equation}
and the differences between $V_{12}$ and $V_{21}$ (their derivatives) are controlled by polynomials 
\begin{equation}\label{Assump G Der Vij C}
 \forall n \in [[0,N]],\qquad \vert (V_{21}-V_{12})^{(n)}(x) \vert =\mathcal{O}(  x^{(n+1)\nu}) \qquad \text{ as } x \to + \infty.
\end{equation}
\end{enumerate}
\end{assumpA}

Comparing with Assumption~\ref{Assump General}, although there are more conditions for the imaginary parts~$\mathcal{V}$ of the diagonal terms, the class of admissible potentials is still very large. Furthermore, our assumption allows to cover functions~$\mathcal{V}$ which grow slowly at $+\infty$ such as logarithmic ones. This is interesting because the analogous hypothesis
\cite[Ass.~5.2]{KS19} for Schr\"{o}dinger operators
does not allow for this kind of functions.
\begin{remark}
We have the following helpful properties
\begin{enumerate}[i)]
\item As discussed in \cite[Sec.~3]{KS19}, when $\nu<-1$ the condition \eqref{Assump G Der Vii C} immediately implies that $V_{11}$ and $V_{22}$ are bounded. Thus the rising of $\mathcal{V}$ in \eqref{Assump G V Rising} needs to go along with the condition $\nu\geq -1$. Furthermore, when $\nu\geq -1$, we can deduce from the condition \eqref{Assump G Diagonal 2 C} that, for large enough $x>0$ and every $\vert h \vert \leq \frac{x^{-\nu}}{2}$,
\begin{equation}\label{Eq Approx}
\mathcal{V}^{(1)}(x+h) \approx \mathcal{V}^{(1)}(x).
\end{equation}
In other word, the values of $\mathcal{V}^{(1)}$ 
can be comparable up to a constant. The proof of \eqref{Eq Approx} can be found in \cite{KS19}, 
but for the reader's convenience, we recall the proof in a simpler way, for $\nu>-1$
\begin{align*}
\left\vert \ln \frac{\vert \mathcal{V}^{(1)}(x+h)\vert}{\vert \mathcal{V}^{(1)}(x)\vert} \right\vert = \left\vert \int_{x}^{x+h} \frac{\mathcal{V}^{(2)}(t)}{\mathcal{V}^{(1)}(t)}\, \dd t\right\vert&\lesssim \left\{ 
\begin{aligned}
& \int_{x}^{x+h} \vert t \vert^{\nu} \, \dd t \qquad &&h\geq 0,\\
& \int_{x+h}^{x} \vert t \vert^{\nu} \, \dd t \qquad &&h\leq 0,
\end{aligned}
\right.\\
&\leq\left\{ 
\begin{aligned}
& h (x+h)^{\nu} \qquad &&h\geq 0,\, \nu \geq 0,\\
& h x^{\nu} \qquad &&h\geq 0,\, \nu < 0,\\
& (-h) x^{\nu} \qquad &&h\leq 0,\, \nu \geq 0,\\
& (-h) (x+h)^{\nu} \qquad &&h\leq 0, \,\nu < 0,\\
\end{aligned}
\right.\\
&\lesssim 1.
\end{align*}
In the last inequality, we used the observation that, for all $\vert h \vert \leq \frac{x^{-\nu}}{2}$ and for $x>0$,
\begin{equation}\label{Compare x}
\frac{x}{2}\leq x+h \leq \frac{3x}{2}.
\end{equation}
The case $\nu=-1$ is treated similarly.
\item From the assumption \eqref{Assump G V Rising} and \eqref{Assump G Diagonal 1 C}, we deduce that
\begin{equation}\label{V and x}
\mathcal{V}(x)\gtrsim x^{\nu+2\varepsilon_{1}},\qquad \forall x\gtrsim 1.
\end{equation}
Indeed, we just need to look at the case $\nu+2\varepsilon_{1}>0$ (then $\frac{3}{2}\nu+\varepsilon_{1}+1>\nu+1\geq 0$). For some fix large $x_{0}>0$ and  for $x\geq x_{0}$, we have,
\[ \int_{x_{0}}^{x} \frac{\mathcal{V}^{(1)}(t)}{\mathcal{V}^{1/2}(t)}\, \dd t\gtrsim\int_{x_{0}}^{x} t^{\frac{3}{2}\nu+\varepsilon_{1}}\, \dd t \Rightarrow\mathcal{V}(x)^{\frac{1}{2}}- \mathcal{V}(x_{0})^{\frac{1}{2}}\gtrsim x^{\frac{3}{2}\nu+\varepsilon_{1}+1}-x_{0}^{\frac{3}{2}\nu+\varepsilon_{1}+1},\]
and thus \eqref{V and x} is obtained when $x$ is considered to be large.
\end{enumerate}

\end{remark}

The cut-off in this case is constructed such that the pseudomode lives around the turning point $x_{\beta}$. Namely, we arrange
\begin{equation}\label{Eq cutoff C}
\begin{aligned}
&\xi \in C_{0}^{\infty}(\R_{+}), \quad 0\leq \xi \leq 1,\\
&\xi(x)=1, \qquad \forall x \in (x_{\beta}-\frac{\delta_{\beta}}{2},x_{\beta}+\frac{\delta_{\beta}}{2})=:J_{\beta}',\\
&\xi(x)=0,\qquad \forall x \notin (x_{\beta}-\delta_{\beta},x_{\beta}+\delta_{\beta})=:J_{\beta},
\end{aligned}
\end{equation}
with 
\begin{equation}\label{Eq G delta C}
\delta_{\beta} := \frac{x_{\beta}^{-\nu}}{2}.
\end{equation}
We see that if $\nu<0$, the support of the pseudomode is able to be extended on $\R_{+}$ when $\beta \to +\infty$. As the WKB construction for the real pseudoeigenvalue, for each $n\in \N_{0}$ and each $\lambda \in \C$, the pseudomode has the form
\begin{equation}\label{Eq pseudomode C}
\Psi_{\lambda,n} := \begin{pmatrix}
k_1 u_{\lambda,n}\\
k_2 v_{\lambda,n}
\end{pmatrix},
\end{equation}
where 
\begin{equation*}
\begin{aligned}
&k_1(x):=\exp{\left(-i\int_{x_{\beta}}^{x} V_{21}(\tau) \, \dd \tau\right)}\qquad \text{and} \qquad k_2(x) := \exp{\left(-i\int_{x_{\beta}}^{x} V_{12}(\tau) \, \dd \tau\right)},\\
&u_{\lambda,n} := \xi \exp\left(-P_{\lambda,n} \right)\qquad \text{and} \qquad v_{\lambda,n}= \frac{k_1}{k_2} \frac{\partial_{x} u_{\lambda,n}}{\lambda+m-V_{22}},\\
&P_{\lambda,n}(x) = \sum_{k=-1}^{n-1}  \int_{x_{\beta}}^{x} \lambda^{-k}\psi_{k}^{(1)}(t) \, \dd t,
\end{aligned}
\end{equation*}
with $\xi$ given in \eqref{Eq cutoff C}, $(\psi_{k}^{(1)})_{k\in[[-1,n-1]]}$ given in \eqref{Eq Recursion}.
\subsection{Main results}
Now, we can state our main theorem in 
the setting of this section.
\begin{theorem}\label{Theorem 3 C}
Let Assumption \ref{Assump General C} holds. Assume that there exists a ($\beta$-dependent) $\alpha$ such that the following conditions hold as $\beta\to +\infty$, for all $x\in J_{\beta}$,
\begin{equation}\label{Eq a 1 C}
\begin{aligned}
 &\left(\alpha-m-\mathrm{Re}\, V_{11}(x)\right)\left(\alpha+m-\mathrm{Re}\, V_{22}(x)\right)>0,\\
 &\vert \alpha-m- \mathrm{Re}\, V_{11}(x) \vert \approx \vert \alpha \vert ,\\
 &\vert \alpha+m-\mathrm{Re}\, V_{22}(x) \vert \approx \vert \alpha\vert ,
\end{aligned}
\end{equation}
and
\begin{equation}\label{Eq a 2 C}
\left(\beta x_{\beta}^{\nu}\right)^{\frac{1}{2}} \lesssim \vert \alpha \vert.
\end{equation}
Let $\{\psi_{k}^{(1)}\}_{k\in[[-1,N-1]]}$ be determined by \eqref{Eq Recursion}  and $\Psi_{\lambda, N}$ defined as in \eqref{Eq pseudomode C}. We choose
\begin{enumerate}[i)]
\item the plus sign in the formula of $\psi_{-1}^{(1)}$ in \eqref{Eq Recursion}  if $\alpha-m-\mathrm{Re}\, V_{11}>0$,
\item the minus sign in the formula of $\psi_{-1}^{(1)}$ in \eqref{Eq Recursion}  if $\alpha-m-\mathrm{Re}\, V_{11}<0$.
\end{enumerate}
Then, for every $c \in (0,1)$, there exists $\beta_{0}>0$ such that, for all $\beta> \beta_{0}$,
\begin{equation*}
\frac{\Vert (H_{V}- \lambda)\Psi_{\lambda,N} \Vert}{\Vert \Psi_{\lambda,N} \Vert} \leq \kappa(\beta,c)+\sigma^{(N)}(\beta),
\end{equation*}
in which
\begin{itemize}
\item $\displaystyle \kappa(\beta,c):=\exp\left(-cF\left(x_{\beta},x_{\beta}-\frac{\delta_{\beta}}{2}\right) \right) +\exp\left(-cF\left(x_{\beta},x_{\beta}+\frac{\delta_{\beta}}{2}\right) \right)=o(1),$
\item $\displaystyle\sigma^{(N)}(\beta):=  \sum_{\ell=-1}^{N-2} \frac{x_{\beta}^{(N+\ell+2)\nu}}{\vert \alpha \vert^{N+\ell+1}}\left( 1+\frac{\beta}{\vert \alpha \vert}\right)^{N+\ell+2},$
\end{itemize}
with $\displaystyle F(x_{\beta},x):=\int_{x_{\beta}}^x [\mathcal{V}(t)-\beta]\, \dd t$.
\end{theorem}
The first condition in \eqref{Eq a 1 C} is exactly the condition \eqref{Eq Cond Re} that allows the regularity of pseudomodes be inherited from the regularity of the potential through the principal square root of $V_{\lambda}$. The last two conditions in \eqref{Eq a 1 C} are inspired from the first ones in \eqref{Rem Re V1122} for the real case of $\lambda$. In order to restrain the wild growth of $(\psi_{k}^{(1)})_{k\in [[0,N-1]]}$, we require \eqref{Eq a 2 C}. Through the statement of the Theorem \ref{Theorem 3 C}, it not only indicates the existence of the pseudomode, but also gives us a way to sketch the pseudospectrum around the infinity by looking for the admissible $\alpha$ (see Subsection~\ref{Application C}). The quantity $\kappa(\beta,c)$ always has an exponential decay at the rate control by the general function $F(x_{\beta},x)$. This is also an improvement upon 
\cite[Thm.~5.1]{KS19} whose rate is only controlled by some polynomial.

\subsection{Intermediate steps}
On the way to prove our main results, some useful lemmata are designed similar to lemmata used in Subsection \ref{Subsection Auxi Steps}.
\begin{lemma}\label{Lemma VK C}
Let the assumptions of Theorem \ref{Theorem 3 C} hold. There exists $\beta_{0}>0$ such that, for all $\beta> \beta_{0}$, for all $\ell \in [[1, N+1]]$ and for all $x\in J_{\beta}$,
\begin{equation*}
\begin{aligned}
&\vert V_{\lambda}^{(\ell)}(x) \vert \lesssim \vert V_{\lambda}(x) \vert \left(1+\frac{\beta}{\vert \alpha \vert} \right)^{\ell}x^{\ell\nu},\\
&\vert K_{\lambda}^{(\ell)}(x) \vert \lesssim \vert K_{\lambda}(x) \vert \left(1+\frac{\beta}{\vert \alpha \vert} \right)^{\ell}x^{\ell\nu}.
\end{aligned}
\end{equation*}
\end{lemma}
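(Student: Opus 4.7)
The plan is to mimic the proof of Lemma~\ref{Lemma VK} from Section~\ref{Section Real}, with the bounds \eqref{Rem Re V1122} and \eqref{Rem V11 approx V22} (which were tied to real $\lambda$ on $J_\lambda$) replaced by the analogous bounds implied by the hypotheses \eqref{Eq a 1 C}. The core mechanism is the same: apply the Leibniz rule and Fa\`a di Bruno's formula, bound $(\lambda\pm m-V_{jj})$ from below using the real parts and the assumption \eqref{Eq a 1 C}, and bound derivatives of $V_{11},V_{22}$ via \eqref{Assump G Der Vii C} and derivatives of $V_{21}-V_{12}$ via \eqref{Assump G Der Vij C}.

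First I would record the key elementary observation, valid on $J_\beta$ for $\beta$ large, that by \eqref{Eq a 1 C}
\[
|\lambda\pm m-V_{jj}(x)|\;\ge\;|\alpha\pm m-\mathrm{Re}\,V_{jj}(x)|\;\gtrsim\;|\alpha|,
\qquad j\in\{1,2\},
\]
and, by the triangle inequality together with $|\mathrm{Im}\,V_{jj}|=\mathcal{V}\le\beta+o(\beta)$ on $J_\beta$,
\[
\frac{|V_{jj}(x)|}{|\lambda\pm m-V_{jj}(x)|}
\;\le\;1+\frac{|\lambda|+m}{|\lambda\pm m-V_{jj}(x)|}
\;\lesssim\;1+\frac{\beta}{|\alpha|}.
\]
These two estimates replace the bounds \eqref{Rem Re V1122}, \eqref{Rem V11 approx V22} and \eqref{Eq max} that were available for real $\lambda$.

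Next, for the $V_\lambda$ estimate I would expand by Leibniz exactly as in \eqref{Eq Estimate V}:
\[
V_\lambda^{(\ell)}=-(\lambda-m-V_{11})V_{22}^{(\ell)}-(\lambda+m-V_{22})V_{11}^{(\ell)}+\sum_{k=1}^{\ell-1}\binom{\ell}{k}V_{11}^{(k)}V_{22}^{(\ell-k)},
\]
divide by $|V_\lambda|=|\lambda-m-V_{11}|\,|\lambda+m-V_{22}|$, and apply \eqref{Assump G Der Vii C} termwise to extract a common factor $x^{\ell\nu}$. Each remaining quotient $|V_{jj}|/|\lambda\pm m-V_{jj}|$ is $\lesssim 1+\beta/|\alpha|$ by the observation above, giving the claimed bound
\[
\frac{|V_\lambda^{(\ell)}(x)|}{|V_\lambda(x)|}\;\lesssim\;x^{\ell\nu}\Bigl(1+\tfrac{\beta}{|\alpha|}\Bigr)^{\ell}.
\]

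For $K_\lambda=e^{u}/(\lambda+m-V_{22})$ with $u(x)=-i\int_{x_\beta}^x(V_{21}-V_{12})\,\dd\tau$, I would combine the Leibniz rule with Fa\`a di Bruno as in \eqref{Eq Estimate K}. The derivatives $(\lambda+m-V_{22})^{-(k)}/(\lambda+m-V_{22})^{-1}$ are sums of products $\prod_j\bigl(V_{22}^{(j)}/(j!(\lambda+m-V_{22}))\bigr)^{\alpha_j}$ with $\sum j\alpha_j=k$; each factor is $\lesssim x^{j\nu}(1+\beta/|\alpha|)$ by \eqref{Assump G Der Vii C} and the quotient estimate above, so their product is bounded by $x^{k\nu}(1+\beta/|\alpha|)^{\sum\alpha_j}\le x^{k\nu}(1+\beta/|\alpha|)^{k}$. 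The factor $(e^u)^{(\ell-k)}/e^u$ is a sum of products of $u^{(j)}=-i(V_{21}-V_{12})^{(j-1)}$, each of which is controlled by $x^{j\nu}$ via \eqref{Assump G Der Vij C}; this contributes $x^{(\ell-k)\nu}$. Summing over $k\in[[0,\ell]]$ yields $|K_\lambda^{(\ell)}|\lesssim|K_\lambda|\,x^{\ell\nu}(1+\beta/|\alpha|)^{\ell}$, which is the second claim. The main technical point is ensuring the lower bounds $|\lambda\pm m-V_{jj}|\gtrsim|\alpha|$ together with $|V_{jj}|/|\alpha|\lesssim 1+\beta/|\alpha|$ remain uniform in $x\in J_\beta$; this is precisely what \eqref{Eq a 1 C} guarantees, so no finer argument using $\mathcal{V}(x)\approx\beta$ on $J_\beta$ is actually required for this lemma.
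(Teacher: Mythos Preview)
Your proposal is correct and follows essentially the same route as the paper: mimic Lemma~\ref{Lemma VK} by combining the Leibniz/Fa\`a di Bruno expansions \eqref{Eq Estimate V}--\eqref{Eq Estimate K} with the key quotient bound $|V_{jj}|/|\lambda\mp m-V_{jj}|\lesssim 1+\beta/|\alpha|$ derived from \eqref{Eq a 1 C}, and use $\sum_j\alpha_j\le\sum_j j\alpha_j=k\le\ell$ to get the exponent $\ell$ on $(1+\beta/|\alpha|)$. Your triangle-inequality derivation of that quotient bound is a slight variant of the paper's (which splits into real and imaginary parts first), but the two are equivalent; the aside about $\mathcal V\le\beta+o(\beta)$ is not actually needed in your argument.
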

\begin{proof}
The proof of this lemma repeats again the procedure of the proof of Lemma \ref{Lemma VK} with the observation that, thanks to the condition \eqref{Eq a 1 C},
\begin{align*}
\frac{\vert V_{11}\vert}{\vert \lambda-m-V_{11} \vert} \lesssim \frac{\vert \alpha - m - \mathrm{Re}\, V_{11} \vert+ \vert \beta - \mathrm{Im}\, V_{11} \vert+ \vert \alpha \vert+ \beta+  m }{\vert \alpha - m - \mathrm{Re}\, V_{11} \vert+\vert \beta - \mathrm{Im}\, V_{11} \vert}\lesssim 1 + \frac{\beta}{\vert \alpha \vert}.
\end{align*}
Correspondingly, we also have
\[ \frac{\vert V_{22}\vert}{\vert \lambda+m-V_{22} \vert}\lesssim 1 + \frac{\beta}{\vert \alpha \vert}.\]
Applying these estimates to \eqref{Eq Estimate V} and \eqref{Eq Estimate K}, the claims follow directly. Especially, in~\eqref{Eq Estimate K}, we notice that
$ 
\sum_{j=1}^{k} \alpha_{j} \leq \sum_{j=1}^{k}j\alpha_{j}=k\leq \ell.
$
\end{proof}
\begin{lemma}\label{Lemma WKB sol C}
Let the assumptions of Theorem \ref{Theorem 3 C} hold and $\{\psi_{k}^{(1)}\}_{k\in[[-1,N-1]]}$ be determined by \eqref{Eq Recursion}. There exists $\beta_{0}>0$ such that, for all $\beta> \beta_{0}$ and for all $x\in J_{\beta}$, we have
\begin{equation*}
F(x_{\beta},x)\leq \int_{x_{\beta}}^{x} \mathrm{Re}(\lambda \psi_{-1}^{(1)}(t))\, \dd t \lesssim F(x_{\beta},x),
\end{equation*}
and for all $k\in [[0,N-1]]$,
\begin{equation*}
\left\vert \lambda^{-k} \psi_{k}^{(1)}(x)\right\vert \lesssim \left(1+\frac{\beta}{\vert \alpha \vert}\right)^{k+1} \frac{x_{\beta}^{(k+1)\nu}}{\vert \alpha \vert^{k}}.
\end{equation*}
\end{lemma}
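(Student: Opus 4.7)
\emph{Proof proposal.} Since $\mathrm{Im}\, V_{11}=\mathrm{Im}\, V_{22}=\mathcal{V}$, a direct computation of $V_\lambda=(\lambda-m-V_{11})(\lambda+m-V_{22})$ separates cleanly into
\[
\mathrm{Re}\, V_\lambda = (\alpha-m-\mathrm{Re}\,V_{11})(\alpha+m-\mathrm{Re}\,V_{22})-(\beta-\mathcal{V})^2,\quad
\mathrm{Im}\, V_\lambda = (\beta-\mathcal{V})\bigl(2\alpha-\mathrm{Re}\,V_{11}-\mathrm{Re}\,V_{22}\bigr).
\]
The sign hypothesis \eqref{Eq a 1 C} forces the two real factors to have the same sign, hence $2\alpha-\mathrm{Re}\,V_{11}-\mathrm{Re}\,V_{22}$ carries the sign of $\alpha-m-\mathrm{Re}\,V_{11}$. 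The statement's choice of sign in $\psi_{-1}'$ is designed so that $\mathrm{Re}(\lambda\psi_{-1}')$ acquires the same sign as $\mathcal{V}(t)-\beta$. This is the mechanism by which $\int_{x_\beta}^{x}\mathrm{Re}(\lambda\psi_{-1}')\,\dd t$ ends up non-negative on both sides of the turning point, as one needs for an integrable pseudomode.

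For the first pair of inequalities I would exploit \eqref{Eq bound denominator}. Because the imaginary parts of $V_{11}$ and $V_{22}$ coincide, the upper bound in \eqref{Eq bound denominator} collapses to
\[
\sqrt{|V_\lambda|+\mathrm{Re}\, V_\lambda}\leq \tfrac{1}{\sqrt 2}\,|2\alpha-\mathrm{Re}\,V_{11}-\mathrm{Re}\,V_{22}|,
\]
and the lower bound combined with \eqref{Eq a 1 C} yields
\[
\sqrt{|V_\lambda|+\mathrm{Re}\, V_\lambda}\gtrsim \sqrt{|\alpha-m-\mathrm{Re}\,V_{11}|\,|\alpha+m-\mathrm{Re}\,V_{22}|}\gtrsim |\alpha|\gtrsim |2\alpha-\mathrm{Re}\,V_{11}-\mathrm{Re}\,V_{22}|.
\]
Inserting these two-sided controls into the explicit formula $\mathrm{Re}(\lambda\psi_{-1}')=\mp\tfrac{1}{\sqrt 2}\,\mathrm{Im}\,V_\lambda/\sqrt{|V_\lambda|+\mathrm{Re}\,V_\lambda}$ produces a pointwise equivalence $|\mathrm{Re}(\lambda\psi_{-1}')(t)|\approx |\mathcal{V}(t)-\beta|$ with the correct sign; integrating from $x_\beta$ and handling the two cases $x>x_\beta$ and $x<x_\beta$ separately delivers the desired two-sided bound by $F(x_\beta,x)=\int_{x_\beta}^{x}[\mathcal{V}(t)-\beta]\,\dd t$ (which is itself non-negative on both sides).

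For the second estimate I would plug the structural identity of Lemma~\ref{Lemma psi k'},
\[
\psi_{k}'=\frac{\lambda^{k}}{V_\lambda^{k/2}}\sum_{|j|=0}^{k+1}\sum_{|r|=k+1}\frac{d_{j}^{r,r-j+(1,1)}(V_\lambda,K_\lambda)}{V_\lambda^{j_1}K_\lambda^{j_2}},
\]
and estimate every factor $V_\lambda^{(p)}$ and $K_\lambda^{(p)}$ inside $d_j^{r,r-j+(1,1)}$ by Lemma~\ref{Lemma VK C}. Each such factor costs $(1+\beta/|\alpha|)^{p}x^{p\nu}$ times the size of the function itself; since $\sum_p p\alpha_p=r_i$ in the definition \eqref{Notation Ijr} of $\mathcal{I}_j^{r,r-j+(1,1)}$, the exponent bookkeeping collapses to a total of $(1+\beta/|\alpha|)^{|r|}x^{|r|\nu}=(1+\beta/|\alpha|)^{k+1}x^{(k+1)\nu}$, and the factors $V_\lambda^{j_1}K_\lambda^{j_2}$ cancel the $V_\lambda^{j_1}K_\lambda^{j_2}$ in the denominators. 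Finally, \eqref{Eq a 1 C} yields $|V_\lambda|\gtrsim\alpha^{2}$, and because $x\in J_\beta$ with $\delta_\beta=x_\beta^{-\nu}/2$ one has $x\approx x_\beta$ (either obviously if $\nu\leq 0$, or via \eqref{Eq Approx} if $\nu\geq 1$, with the intermediate case $0<\nu<1$ being elementary), so $x^{(k+1)\nu}\approx x_\beta^{(k+1)\nu}$. Dividing by $|\lambda|^{k}$ gives exactly $|\lambda^{-k}\psi_k'|\lesssim (1+\beta/|\alpha|)^{k+1}x_\beta^{(k+1)\nu}/|\alpha|^{k}$.

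The main obstacle is the first part: one must be careful that the real part of $\lambda\psi_{-1}'$ is \emph{pointwise} (not merely in integrated form) of the same sign as $\mathcal{V}-\beta$, since this sign synchronisation is what makes the exponential $\mathrm{e}^{-P_{\lambda,n}}$ actually decay on both sides of the turning point $x_\beta$; the case analysis on the sign of $\alpha-m-\mathrm{Re}\,V_{11}$ (and the corresponding choice of sign in the eikonal equation) has to be carried through consistently. The estimates on $\psi_k'$ for $k\geq 0$ are then a routine, if notationally heavy, bookkeeping exercise on top of Lemmata~\ref{Lemma psi k'} and~\ref{Lemma VK C}.
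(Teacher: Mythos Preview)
Your proposal is correct and follows essentially the same route as the paper: for the eikonal bound you use the two estimates \eqref{Eq bound denominator} (which the paper restates as \eqref{Eq bound denominator C}), exploit the collapse of the upper bound when $\mathrm{Im}\,V_{11}=\mathrm{Im}\,V_{22}$ to get the sharp pointwise comparison with $\mathcal{V}-\beta$, split into the two sign cases, and integrate; for the transport bounds you combine Lemma~\ref{Lemma psi k'} with Lemma~\ref{Lemma VK C}, then use $|V_\lambda|\gtrsim\alpha^2$ and $x\approx x_\beta$ on $J_\beta$. One small point: the left inequality in the first display is \emph{sharp} (constant~$1$), not merely $\gtrsim$; your identification of the exact cancellation between the numerator factor $2\alpha-\mathrm{Re}\,V_{11}-\mathrm{Re}\,V_{22}$ and the collapsed upper bound on $\sqrt{|V_\lambda|+\mathrm{Re}\,V_\lambda}$ is precisely what delivers this, so make that explicit rather than writing only ``pointwise equivalence''.
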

\begin{proof}
By working as in \eqref{Eq bound denominator}, we also have the inequalities for dealing with the denominator of $\mathrm{Re} (\lambda\psi_{-1}^{(1)})$,
\begin{equation}\label{Eq bound denominator C}
\begin{aligned}
&\sqrt{\vert V_{\lambda} \vert + \mathrm{Re}\, V_{\lambda}} \geq \sqrt{2} \sqrt{(\alpha-m-\mathrm{Re}\, V_{11})(\alpha+m-\mathrm{Re}\, V_{22})},\\
&\sqrt{\vert V_{\lambda} \vert + \mathrm{Re}\, V_{\lambda}}\leq \frac{1}{\sqrt{2}}\sqrt{(\mathrm{Im} \, V_{11}-\mathrm{Im} \, V_{22})^2+(2\alpha-\mathrm{Re}\, V_{11}-\mathrm{Re}\, V_{22})^2}.
\end{aligned}
\end{equation}
Notice that, there are two cases:
\begin{enumerate}[i)]
\item If $\alpha-m-\mathrm{Re}\, V_{11}>0$ on $J_{\beta}$, then the formula of $\mathrm{Re} (\lambda\psi_{-1}^{(1)})$ has the expression
\begin{align*}
\mathrm{Re}(\lambda \psi_{-1}^{(1)}) = \frac{1}{\sqrt{2}} \frac{(\mathcal{V} -\beta)(2\alpha-\mathrm{Re}\, V_{11}-\mathrm{Re}\, V_{22})}{\sqrt{\vert V_{\lambda} \vert + \mathrm{Re}\, V_{\lambda}}}.
\end{align*}
\item If $\alpha-m-\mathrm{Re}\, V_{11}<0$ on $J_{\beta}$, then the formula of $\mathrm{Re} (\lambda\psi_{-1}^{(1)})$ has the expression
\begin{align*}
\mathrm{Re}(\lambda \psi_{-1}^{(1)}) =- \frac{1}{\sqrt{2}} \frac{(\mathcal{V} -\beta)(2\alpha-\mathrm{Re}\, V_{11}-\mathrm{Re}\, V_{22})}{\sqrt{\vert V_{\lambda} \vert + \mathrm{Re}\, V_{\lambda}}}.
\end{align*}
\end{enumerate}
By using the second estimate in \eqref{Eq bound denominator C} and observing the sign of the term $\mathcal{V}(x)-\beta$ on the left and on the right of $x_{\beta}$ on $J_{\beta}$, it implies that, for all $x\in J_{\beta}$,
\begin{align*}
\int_{x_{\beta}}^{x} \mathrm{Re}\, (\lambda \psi_{-1}^{(1)}(t)) \, \dd t \geq \int_{x_{\beta}}^{x} [\mathcal{V}(x)-\beta] \, \dd t.
\end{align*}
The rest upper bound for the integral of $\mathrm{Re} (\lambda\psi_{-1}^{(1)})$ on $J_{\beta}$ is given by the first estimate in \eqref{Eq bound denominator C} and \eqref{Eq a 1 C}. 

For each $k\geq 0$, we establish the proof for $\lambda^{-k} \psi_{k}^{(1)}$ as in the proof of Lemma~\ref{Lemma WKB sol}. Then, Lemma~\ref{Lemma VK C} implies that, for all $x\in J_{\beta}$,
\begin{align*}
\left\vert \lambda^{-k} \psi_{k}^{(1)}(x) \right\vert \lesssim  \left(1+\frac{\beta}{\vert \alpha \vert} \right)^{k+1} \sup_{x\in J_{\beta}} \frac{x^{(k+1)\nu}}{\vert V_{\lambda}(x) \vert^{k/2}} \lesssim \left(1+\frac{\beta}{\vert \alpha \vert} \right)^{k+1} \frac{x_{\beta}^{k\nu}}{\vert \alpha\vert^{k}}.
\end{align*}
Here, the last inequality is given by \eqref{Eq a 1 C} and \eqref{Compare x}.
\end{proof}
For all $x\in J_{\beta}$, by changing variable twice in integrals, we have
\begin{align*}
F(x_{\beta},x) =(x-x_{\beta})^2 \int_{0}^1 \int_{0}^1 \xi \mathcal{V}^{(1)}\left(x_{\beta}+ \tau \xi(x-x_{\beta}) \right)\, \dd \tau \dd \xi.
\end{align*}
From \eqref{Eq Approx}, it yields that, for all $x\in J_{\beta}$,
\begin{equation}\label{Eq F C}
F(x_{\beta},x) \approx \mathcal{V}^{(1)}(x_{\beta})(x-x_{\beta})^2.
\end{equation}
Using this approximation, we obtain the following lemma.

\begin{lemma}\label{Lemma Decay solution C}
Let the assumptions of Theorem \ref{Theorem 3 C} hold and let $\{\psi_{k}^{(1)}\}_{k\in[[-1,N-1]]}$ be determined by \eqref{Eq Recursion}. Then
\begin{enumerate}[i)]
\item for every $\varepsilon\in (0,1)$ and for every $\eta \in (0,1)$, there exists $\beta_{0}>0$ such that, for all $\beta> \beta_{0}$ and for all $x\in J_{\beta}\setminus \left(x_{\beta}-\eta \delta_{\beta}, x_{\beta}+\eta \delta_{\beta}\right)$,
\begin{equation*}
\left\vert k_1(x) \exp(-P_{\lambda,N}(x))\right\vert\lesssim \exp\left( -(1-\varepsilon)F(x_{\beta},x)\right);
\end{equation*}
\item there exists $C>0$ such that for every $\eta \in (0,1)$, there exists $\beta_{0}>0$ such that, for all $\beta> \beta_{0}$ and for all $x\in J_{\beta}\setminus \left(x_{\beta}-\eta \delta_{\beta}, x_{\beta}+\eta \delta_{\beta}\right)$,
\begin{equation*}
\left\vert k_1(x) \exp(-P_{\lambda,N}(x))\right\vert\gtrsim \exp\left( -C F(x_{\beta},x) \right).
\end{equation*}
\end{enumerate}
 \end{lemma}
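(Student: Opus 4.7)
The plan is to take logarithms and write
\[ \log\bigl|k_1(x)\exp(-P_{\lambda,N}(x))\bigr| = \int_{x_\beta}^{x}\!\mathrm{Im}\,V_{21}(t)\,\dd t-\int_{x_\beta}^{x}\!\mathrm{Re}\,(\lambda\psi_{-1}'(t))\,\dd t-\sum_{k=0}^{N-1}\int_{x_\beta}^{x}\!\lambda^{-k}\,\mathrm{Re}\,(\psi_k'(t))\,\dd t, \]
and argue that the $\psi_{-1}$-integral supplies the leading decay, bounded above by $-F(x_\beta,x)$ and below by $-C F(x_\beta,x)$ via Lemma~\ref{Lemma WKB sol C}, while every other contribution is subdominant. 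I would split the remaining terms into three groups and treat them in turn: the $|k_1|$-factor combined with the $\psi_0$-contribution, the off-diagonal integral $\int\mathcal U$ in isolation, and the higher-order transport integrals $\int\lambda^{-k}\psi_k'$ for $k\in[[1,N-1]]$.

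For the $\psi_0$-contribution I would exploit the closed-form primitive $\psi_0'=\tfrac14 V_\lambda^{(1)}/V_\lambda+\tfrac12 K_\lambda^{(1)}/K_\lambda$ and mimic the manipulation in the proof of Lemma~\ref{Lemma Decay solution}; combining with $|k_1|$ produces the factor $|k_1 k_2|^{1/2}=\exp\bigl(\tfrac12\int_{x_\beta}^{x}\mathcal U\bigr)$ multiplied by the ratios $|V_\lambda(x_\beta)/V_\lambda(x)|^{1/4}$ and $|K_\lambda(x_\beta)/K_\lambda(x)|^{1/2}$. By Lemma~\ref{Lemma VK C} together with the short length $|x-x_\beta|\leq\delta_\beta=x_\beta^{-\nu}/2$ of the integration interval, both ratios are pinched between positive constants uniformly in $\beta$, contributing only a harmless $O(1)$ factor. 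This factor is absorbable into $\varepsilon F(x_\beta,x)$ for the upper bound (i), since the approximation \eqref{Eq F C} combined with \eqref{Assump G Diagonal 1 C} and the constraint $|x-x_\beta|\geq\eta\delta_\beta/2$ forces $F(x_\beta,x)\gtrsim\eta^2 x_\beta^{\varepsilon_1}\to\infty$ on the relevant set.

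Next I would show that $\bigl|\tfrac12\int_{x_\beta}^{x}\mathcal U(t)\,\dd t\bigr|$ and $\bigl|\int_{x_\beta}^{x}\lambda^{-k}\psi_k'(t)\,\dd t\bigr|$ for $k\in[[1,N-1]]$ are each $o(F(x_\beta,x))$ on $J_\beta\setminus(x_\beta-\eta\delta_\beta,x_\beta+\eta\delta_\beta)$. Assumption~\eqref{Assump G Off-Diagonal C} together with the uniform relation $\mathcal V^{(1)}(x)\approx\mathcal V^{(1)}(x_\beta)$ inherited from \eqref{Eq Approx} gives $\bigl|\int_{x_\beta}^{x}\mathcal U\bigr|\leq o(1)\cdot x_\beta^{-\nu}\mathcal V^{(1)}(x_\beta)\cdot|x-x_\beta|$, whose ratio to the lower bound $F(x_\beta,x)\gtrsim\mathcal V^{(1)}(x_\beta)\eta^2 x_\beta^{-2\nu}$ tends to $0$. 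For each $k\geq 1$, integrating the pointwise bound of Lemma~\ref{Lemma WKB sol C} over an interval of length at most $\delta_\beta$ yields the estimate $(1+\beta/|\alpha|)^{k+1}x_\beta^{k\nu}/|\alpha|^k$, whose ratio to the same lower bound on $F(x_\beta,x)$ reduces to $(1+\beta/|\alpha|)^{k+1}x_\beta^{(k+2)\nu}/(|\alpha|^k \mathcal V^{(1)}(x_\beta))$.

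The hard part will be precisely this last step: the worst case $k=N-1$ is exactly what the hypotheses \eqref{Eq a 2 C} (for $\nu\leq 0$) and \eqref{Eq a 3 C} (for $\nu>0$) are calibrated to kill, and the verification must distinguish the regimes $\beta\lesssim|\alpha|$ and $\beta\gtrsim|\alpha|$, in which the factor $(1+\beta/|\alpha|)^{k+1}$ behaves very differently. Once all the $o(1)$ contributions are collected, exponentiating the inequality $-\int_{x_\beta}^{x}\mathrm{Re}\,(\lambda\psi_{-1}')\,\dd t\leq -F(x_\beta,x)$ from Lemma~\ref{Lemma WKB sol C} with the subdominant corrections yields, for any prescribed $\varepsilon\in(0,1)$ and all sufficiently large $\beta$, the upper estimate in (i); the lower estimate in (ii) follows from the same decomposition by using instead the companion bound $\int_{x_\beta}^{x}\mathrm{Re}\,(\lambda\psi_{-1}')\,\dd t\lesssim F(x_\beta,x)$ and the already obtained absolute-value control over the subleading terms.
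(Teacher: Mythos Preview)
Your approach matches the paper's: reduce $|k_1\exp(-P_{\lambda,N})|$ to the exponential of $-\int_{x_\beta}^{x}\lambda\,\mathrm{Re}\,\psi_{-1}'$ plus corrections, bound the leading term two-sidedly via Lemma~\ref{Lemma WKB sol C}, and show that the $\mathcal U$ integral and the integrals of $\lambda^{-k}\psi_k'$ for $k\in[[1,N-1]]$ are each $o(F(x_\beta,x))$ on the set in question, using \eqref{Assump G Off-Diagonal C} and the hypotheses \eqref{Eq a 2 C}--\eqref{Eq a 3 C} respectively.

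There is one slip in your $\psi_0$ step. The manipulation from Lemma~\ref{Lemma Decay solution} does not leave you with $|k_1k_2|^{1/2}$ times the original $V_\lambda$- and $K_\lambda$-ratios; rather, since $K_\lambda=(k_1/k_2)(\lambda+m-V_{22})^{-1}$, combining $|k_1|$ with $|K_\lambda(x_\beta)/K_\lambda(x)|^{1/2}$ and then with the $V_\lambda$-ratio produces $|k_1k_2|^{1/2}$ times the \emph{scalar} ratios
\[
\frac{|\lambda-m-V_{11}(x_\beta)|^{1/4}}{|\lambda+m-V_{22}(x_\beta)|^{1/4}}\cdot\frac{|\lambda+m-V_{22}(x)|^{1/4}}{|\lambda-m-V_{11}(x)|^{1/4}},
\]
and it is these that are $\approx 1$ by \eqref{V11 approx V22 C}. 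Your alternative justification via Lemma~\ref{Lemma VK C} only yields $\bigl|\log|V_\lambda(x)/V_\lambda(x_\beta)|\bigr|\lesssim (1+\beta/|\alpha|)\,x_\beta^{\nu}\cdot\delta_\beta\approx 1+\beta/|\alpha|$, which is \emph{not} uniformly bounded in the regime $\beta/|\alpha|\to\infty$ that the theorem explicitly permits. Once you route this step through \eqref{V11 approx V22 C} instead, the rest of your argument goes through as written.
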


\begin{proof}
From the equality of two imaginary parts $\mathrm{Im}\,V_{11} = \mathrm{Im}\,V_{11}$, we have
\begin{equation}\label{V11 approx V22 C}
\vert \lambda-m-V_{11}(x) \vert \approx \vert \lambda+m-V_{22}(x) \vert,\qquad \forall x \in J_{\beta}.
\end{equation}
Indeed, from the choice of $\alpha$ in \eqref{Eq a 1 C}, it turns out that
\begin{equation}
\frac{\vert \lambda-m-V_{11}(x) \vert}{\vert \lambda+m-V_{22}(x) \vert}\lesssim \frac{\vert \alpha-m-\mathrm{Re}\, V_{11} \vert+\vert \beta-\mathcal{V}\vert}{\vert \alpha+m-\mathrm{Re}\, V_{22} \vert+\vert \beta-\mathcal{V}\vert} \lesssim \frac{\mathcal{O}(\vert \alpha \vert)+\vert \beta-\mathcal{V}\vert}{\vert \alpha \vert +\vert \beta-\mathcal{V}\vert}\lesssim 1,
\end{equation}
and the other direction is similar. Therefore, as in the proof of Lemma \eqref{Lemma Decay solution}, we obtain the following approximation on $J_{\beta}$:
\begin{equation*}
\left\vert k_1(x) \exp(-P_{\lambda,N}(x))\right\vert \approx \exp\left(\frac{1}{2}\int_{x_{\beta}}^{x} \mathcal{U}(t)\, \dd t- \sum_{\substack{k=-1\\k\neq 0}}^{N-1}  \int_{x_{\beta}}^{x} \mathrm{Re} \,(\lambda^{-k}\psi_{k}^{(1)}(t))\, \dd t\right).
\end{equation*}

Let $\eta \in (0,1)$. Using Lemma~\ref{Lemma WKB sol C} and condition~\eqref{Assump G Off-Diagonal C}, one obtains, when $x_{\beta}$ large enough and for all $x\in J_{\beta}\setminus \left(x_{\beta}-\eta \delta_{\beta}, x_{\beta}+\eta \delta_{\beta}\right)$,
\begin{align*}
\left\vert\frac{ \int_{x_{\beta}}^{x} \mathcal{U}(t)\, \dd t }{\int_{x_{\beta}}^{x}\mathrm{Re}\,\left(\lambda \psi_{-1}^{(1)}(t)\right) \, \dd t} \right\vert \lesssim \frac{ \vert x-x_{\beta}\vert o(x_{\beta}^{-\nu}\mathcal{V}^{(1)}(x_{\beta}))}{\vert x-x_{\beta}\vert^2 \mathcal{V}^{(1)}(x_{\beta})} \lesssim \frac{1}{\eta}\frac{o(x_{\beta}^{-\nu}\mathcal{V}^{(1)}(x_{\beta}))}{  x_{\beta}^{-\nu}\mathcal{V}^{(1)}(x_{\beta})}= o(1).
\end{align*}
By employing Lemma \ref{Lemma WKB sol C} for $k\in [[1,N-1]]$ and for all $x\in J_{\beta}\setminus \left(x_{\beta}-\eta \delta_{\beta}, x_{\beta}+\eta \delta_{\beta}\right)$, we have
\begin{equation*}
\begin{aligned}
\frac{ \displaystyle \int_{x_{\beta}}^{x} \left\vert \lambda^{-k}\psi_{k}^{(1)}(t)\right\vert \, \dd t }{\displaystyle \int_{x_{\beta}}^{x} \mathrm{Re}\,(\lambda\psi_{-1}^{(1)}(t)) \, \dd t}  &\lesssim \frac{\displaystyle \vert x-x_{\beta} \vert\left(1+\frac{\beta}{\vert \alpha \vert}\right)^{k+1} \frac{x_{\beta}^{(k+1)\nu}}{\vert \alpha \vert^{k}}}{\vert x-x_{\beta} \vert^2 \mathcal{V}^{(1)}(x_{\beta})}\lesssim \frac{\displaystyle \left(1+\frac{\beta}{\vert \alpha \vert}\right)^{k+1}x_{\beta}^{(k+2)\nu}}{\displaystyle \eta \mathcal{V}^{(1)}(x_{\beta})\vert \alpha \vert^{k}}\\
&\lesssim \left\{
\begin{aligned}
& \frac{1}{\eta}\left(\frac{x_{\beta}^{\nu}}{\beta} \right)^{k+\frac{1}{2}}\frac{\beta^{\frac{1}{2}}x_{\beta}^{\frac{3}{2}\nu}}{\mathcal{V}^{(1)}(x_{\beta})},\qquad &&\text{ if } \vert \alpha \vert> \beta,\\
& \frac{1}{\eta}\left( \frac{\beta x_{\beta}^{\nu}}{\vert \alpha \vert^2}\right)^{k+\frac{1}{2}}  \frac{\beta^{\frac{1}{2}}x_{\beta}^{\frac{3}{2}\nu}}{\mathcal{V}^{(1)}(x_{\beta})},\qquad &&\text{ if } \vert \alpha \vert \leq \beta,
\end{aligned}
\right.\\
&\lesssim \frac{1}{\eta}x_{\beta}^{-\varepsilon}=o(1).
\end{aligned}
\end{equation*}
Here, in case $\vert \alpha \vert\leq \beta $, we have used \eqref{Eq a 2 C} and in other case, we have used \eqref{V and x}. The conclusion of the lemma is obviously deduced from Lemma~\ref{Lemma WKB sol C}.
\end{proof}
\subsection{Proofs of the main results}
\begin{proof}[Proof of Theorem \ref{Theorem 3 C}]
Fix $c\in (0,1)$ and consider $\kappa(\beta,c)$ defined in the statement of the theorem. We start the proof by showing that
\begin{equation}\label{Eq L^2 xi C}
\frac{\left\Vert \frac{k_1 \exp(-P_{\lambda,N})}{\lambda+m-V_{22}}  \xi^{(2)} \right\Vert_{L^2(\R_{+})} +\left\Vert \frac{k_1 \exp(-P_{\lambda,N})}{\lambda+m-V_{22}} \left( 2P_{\lambda,N}^{(1)}-\frac{K_{\lambda}^{(1)}}{K_{\lambda}}\right) \xi^{(1)} \right\Vert_{L^2(\R_{+})}}{\sqrt{\Vert k_1 \exp(-P_{\lambda,N})\xi \Vert_{L^2(\R_{+})}^2+ \left\Vert \frac{k_1(-i\partial_{x})\left(\exp(-P_{\lambda,N}) \xi\right)}{\lambda+m-V_{22}} \right\Vert_{L^2(\R_{+})}^{2}}}\leq \kappa(\beta,c).
\end{equation}
By choosing $\varepsilon$ in Lemma \ref{Lemma Decay solution C} sufficiently small such that $1-\varepsilon> c$, we can write
$1-\varepsilon = c+2\tilde{c},$
for some $\tilde{c}>0$.
The plan is to use the upper bound of $k_{1} \exp\left(-P_{\lambda,N}\right)$ in Lemma \ref{Lemma Decay solution C} to control the terms in the numerator of \eqref{Eq L^2 xi C} and employ the lower bound of $k_{1} \exp\left(-P_{\lambda,N}\right)$ for the denominator. We start with the term attached with $\xi^{(2)}$:
\begin{align*}
\lefteqn{\left\Vert \frac{k_1 \exp(-P_{\lambda,N})}{\lambda+m-V_{22}}  \xi^{(2)} \right\Vert_{L^2(\R_{+})}^2} & \\
&\lesssim \frac{\delta_{\beta}^{-4}}{\vert \alpha \vert^2} \left( \int_{x_{\beta}-\delta_{\beta}}^{x_{\beta}-\frac{\delta_{\beta}}{2}}  \exp\left(-2(1-\varepsilon)F(x_{\beta},x) \right) \dd x +  \int_{x_{\beta}+\frac{\delta_{\beta}}{2} }^{x_{\beta}+\delta_{\beta}}  \exp\left(-2(1-\varepsilon)F(x_{\beta},x) \right)  \dd x\right)\\
&\lesssim x_{\beta}^{2\nu}\left(\exp\left(-2(1-\varepsilon)F\left(x_{\beta},x_{\beta}-\frac{\delta_{\beta}}{2}\right) \right) +\exp\left(-2(1-\varepsilon)F\left(x_{\beta},x_{\beta}+\frac{\delta_{\beta}}{2}\right) \right) \right)\\
& \lesssim \exp\left(-2(c+\tilde{c}) F\left(x_{\beta},x_{\beta}-\frac{\delta_{\beta}}{2}\right) \right) +\exp\left(-2(c+\tilde{c})F\left(x_{\beta},x_{\beta}+\frac{\delta_{\beta}}{2}\right) \right).
\end{align*}
In the second inequality, we used \eqref{Eq a 2 C} and the fact that $F(x_{\beta},x)$ is increasing as~$x$ goes far from~$x_{\beta}$. Furthermore, all appearing polynomial terms will be restrained, with some positive constants $C_{1}, C_{2}>0$, by
\begin{equation}\label{Eq Exponential C}
\begin{aligned}
\max &\left\{\exp\left(-\tilde{c}F\left(x_{\beta},x_{\beta}-\frac{\delta_{\beta}}{2}\right) \right),\exp\left(-\tilde{c}F\left(x_{\beta},x_{\beta}+\frac{\delta_{\beta}}{2}\right) \right)\right\}\\
\leq &\ \exp\left(-C_{1} \mathcal{V}^{(1)}(x_{\beta})x_{\beta}^{-2\nu} \right)\\
\leq &\ \exp\left( - C_{2}x_{\beta}^{2\varepsilon_{1}}\right),
\end{aligned}
\end{equation}
which follows directly from \eqref{Eq F C}, the definition of $\delta_{\beta}$ in \eqref{Eq G delta C} and the condition \eqref{Assump G Diagonal 1 C} and \eqref{V and x}. Thus, we have
\begin{equation}\label{xi'}
\begin{aligned}
\left\Vert \frac{k_1\exp(-P_{\lambda,N})}{\lambda+m-V_{22}}  \xi^{(2)} \right\Vert_{L^2(\R_{+})}\lesssim  &\ \exp\left(-(c+\tilde{c})F\left(x_{\beta},x_{\beta}-\frac{\delta_{\beta}}{2}\right) \right) \\
& + \exp\left(-(c+\tilde{c})F\left(x_{\beta},x_{\beta}+\frac{\delta_{\beta}}{2}\right) \right).
\end{aligned}
\end{equation}
The second term in the numerator is bounded in the same manner. In detail, we look at the expression
% notice that after multiplication the expression
\[ 2P_{\lambda, N}^{(1)}-\frac{K_{\lambda}^{(1)}}{K_{\lambda}} = 2i V_{\lambda}^{1/2}+2\sum_{k=0}^{N-1} \lambda^{-k}\psi_{k}^{(1)}-\frac{K_{\lambda}^{(1)}}{K_{\lambda}}.\]
For all $k\in [[0,N-1]]$ and for all $x\in J_{\beta}\setminus J_{\beta}'$, thanks to \eqref{Eq a 2 C} and \eqref{V and x}, we have
\begin{align*}
\frac{\vert \lambda^{-k} \psi_{k}^{(1)}(x)\vert}{\vert V_{\lambda}(x)\vert^{\frac{1}{2}}}\lesssim \left(1+\frac{\beta}{\vert \alpha \vert}\right)^{k+1}\frac{x_{\beta}^{(k+1)\nu}}{\vert \alpha \vert^{k+1}}\lesssim \left\{
\begin{aligned}
&\left( \frac{ x_{\beta}^{\nu}}{\beta}\right)^{k+1},\qquad &&\text{ if } \vert \alpha \vert >\beta,\\
&\left( \frac{\beta x_{\beta}^{\nu}}{\vert \alpha \vert^2}\right)^{k+1},\qquad &&\text{ if } \vert \alpha \vert \leq \beta.
\end{aligned}
\right. \qquad \lesssim 1.
\end{align*}
The term related to $K_{\lambda}$ is estimated as same as $\psi_{0}^{(1)}$. Then, from \eqref{V11 approx V22 C}, we have
\[ \frac{1}{\vert \lambda+m-V_{22}\vert }\left\vert 2P_{\lambda, N}^{(1)}-\frac{K_{\lambda}^{(1)}}{K_{\lambda}}\right\vert\lesssim \frac{\vert V_{\lambda}(x)\vert^{\frac{1}{2}}}{\vert \lambda+m-V_{22}\vert }\lesssim 1.\]
Therefore, we also obtain the same estimate as \eqref{xi'} for the term attached with $\xi^{(1)}$ in \eqref{Eq L^2 xi C}. We consider $\eta>0$ in Lemma \ref{Lemma Decay solution C} such that $2\eta<\frac{1}{2}$, then it leads to
\begin{align*}
\displaystyle\left\Vert k_1 \exp(-P_{\lambda,N})\xi \right\Vert_{L^2(\R_{+})}^2 \gtrsim \  &\int_{x_{\beta}+\eta \delta_{\beta}}^{x_{\beta}+2\eta \delta_{\beta}} \exp\left(-CF(x_{\beta},x)\right) \, \dd x \\
\gtrsim \ &\eta x_{\beta}^{-\nu}\exp\left(-2CF\left(x_{\beta},x_{\beta}+2\eta\delta_{\beta}\right) \right).
\end{align*}
Using \eqref{Eq F C} and choosing $\eta$ small enough, we have
\begin{align*}
(c+\tilde{c})F\left(x_{\beta},x_{\beta}-\frac{\delta_{\beta}}{2}\right)- CF\left(x_{\beta},x_{\beta}+2\eta\delta_{\beta}\right)
 \geq & \left( 1- \mathcal{O}( \eta^2 )\right)(c+\tilde{c})F\left(x_{\beta},x_{\beta}-\frac{\delta_{\beta}}{2}\right)
\end{align*}
and similarly
\[ (c+\tilde{c})F\left(x_{\beta},x_{\beta}+\frac{\delta_{\beta}}{2}\right)- CF\left(x_{\beta},x_{\beta}+2\eta\delta_{\beta}\right)\geq (1-\mathcal{O}( \eta^2 ))(c+\tilde{c})F\left(x_{\beta},x_{\beta}+\frac{\delta_{\beta}}{2}\right).\]
Then, \eqref{Eq L^2 xi C} follows by choosing $\eta$ sufficiently small and \eqref{Eq Exponential C}. 

Finally, we estimate the remainder \eqref{Eq Estimate Rn} 
by using Lemma~\ref{Lemma VK C}.
\end{proof}

%\begin{proof}[Proof of Theorem \ref{Theorem 4 C}]
%The only difference between two proofs of Theorem \ref{Theorem 3 C} and Theorem \ref{Theorem 4 C} is the way we estimate $\mathrm{Re}\, \left(\lambda\psi_{-1}'\right)$ in Lemma \ref{Lemma WKB sol C}. Notice that, by the assumption \eqref{Eq a 1' C},
%\[ 2a-\mathrm{Re}\, V_{11}- \mathrm{Re}\, V_{11}<0.\]
%Therefore, we choose the minus sign in the formula of $\psi_{-1}$ in \eqref{Eq Recursion} such that every above results are kept the same.
%\end{proof}

\subsection{Applications}\label{Application C}
Let us list here some examples which are direct consequences of 
Theorem~\ref{Theorem 3 C}. We will see that the shape of the pseudospectrum depends not only on the type of the potentials, but also on their regularity.

\begin{example}\label{Example Log}
First of all, we want to consider a kind of logarithmic potential on $\R_{+}$:
\begin{equation}\label{Eq Logarithm C}
 V(x) := \begin{pmatrix}
i \ln(x) && u(x)\\
u(x)  && i \ln(x)
\end{pmatrix},
\end{equation}
where $u\in W_{\mathrm{loc}}^{N,2}(\R_{+})$, with $N\in\N_{1}$, 
is such that $\vert u(x) \vert= o(1)$ as $x\to +\infty$. Then all conditions of Assumption~\ref{Assump General C} are satisfied with $\nu=-1$ and any $\varepsilon_{1}\in \left(0,\frac{1}{2}\right)$. Given $\beta>0$, then $x_{\beta}>0$ is determined by the relation
$x_{\beta}=e^{\beta}.$
About the quantity $\kappa(\beta,c)$ for  $c\in(0,1)$, on account of the estimate \eqref{Eq Exponential C}, it decays in a superexponential way independent of the choice of $\alpha$, with some constant $C>0$,
%\begin{align*}
$\kappa(\beta,c) = \mathcal{O}\left(\exp\left(-C e^{\beta}\right)\right)$.
%\end{align*}
Since $\mathrm{Re}\, V_{11}=\mathrm{Re}\, V_{22}=0$, the condition \eqref{Eq a 1 C} of Theorem~\ref{Theorem 3 C} are clearly satisfied if and only if 
\begin{equation}\label{Ex1 a 0}
\vert \alpha \vert>m.
\end{equation}
While \eqref{Eq a 2 C} is assured if and only if
\begin{equation}\label{Ex1 a 1}
\vert \alpha\vert \gtrsim \beta^{\frac{1}{2}} \exp\left(-\frac{\beta}{2}\right).
\end{equation}
Therefore, we consider two cases:
\begin{enumerate}[a)]
\item If $m=0$, by the choice \eqref{Ex1 a 1},  there exist a number $\beta_{0}>0$ and a family $\left(\Psi_{\lambda,N}\right)_{\lambda\in \Omega}$ whose supports are contained in $\R_{+}$ such that
\[ \frac{\Vert (H_{V}- \lambda)\Psi_{\lambda,N} \Vert}{\Vert \Psi_{\lambda,N} \Vert} = \left\{
\begin{aligned}
&\mathcal{O}\left(\beta^{-N}\exp(-(N+1)\beta)\right), \qquad &&\text{ if } \vert \alpha \vert>\beta,\\
&\mathcal{O}\left(\beta^{\frac{1}{2}} \exp\left(-\frac{\beta}{2}\right)\right), \qquad &&\text{ if } \vert \alpha \vert\leq \beta,
\end{aligned}
\right.\]
where
\begin{equation}\label{Eq Log Omega C 1}
\Omega:= \left\{\alpha+i\beta \in \C : \beta> \beta_{0} \text{ and } \vert \alpha \vert\gtrsim \beta^{\frac{1}{2}} \exp\left(-\frac{\beta}{2}\right)\right\}.
\end{equation}
\item If $m>0$,  by the choice \eqref{Ex1 a 0},  there exist a number $\beta_{0}>0$ and a family $\left(\Psi_{\lambda,N}\right)_{\lambda\in \Omega}$ whose supports are contained in $\R_{+}$ such that
\[ \frac{\Vert (H_{V}- \lambda)\Psi_{\lambda,N} \Vert}{\Vert \Psi_{\lambda,N} \Vert} = \left\{
\begin{aligned}
&\mathcal{O}\left(\beta^{-N}\exp(-(N+1)\beta)\right), \qquad &&\text{ if } \vert \alpha \vert>\beta,\\
&\mathcal{O}\left(\beta^{2N} \exp\left(-(N+1)\beta\right)\right), \qquad &&\text{ if } \vert \alpha \vert\leq \beta,
\end{aligned}
\right.\]
where
\begin{equation}\label{Eq Log Omega C 2}
\Omega:= \left\{\alpha+i\beta \in \C : \beta> \beta_{0} \text{ and } \vert \alpha \vert> m \right\}.
\end{equation}
\end{enumerate}

From the definition of $\Omega$, we see that the pseudospectral region contains even points which stay very close to the line $\alpha=0$ when $m=0$ and when $\beta$ large enough (see Figure~\ref{fig:Pic1 C}).

\begin{figure}[h]
 \centering
\begin{subfigure}[c]{0.35\textwidth}
\centering
\includegraphics[width=1\textwidth]{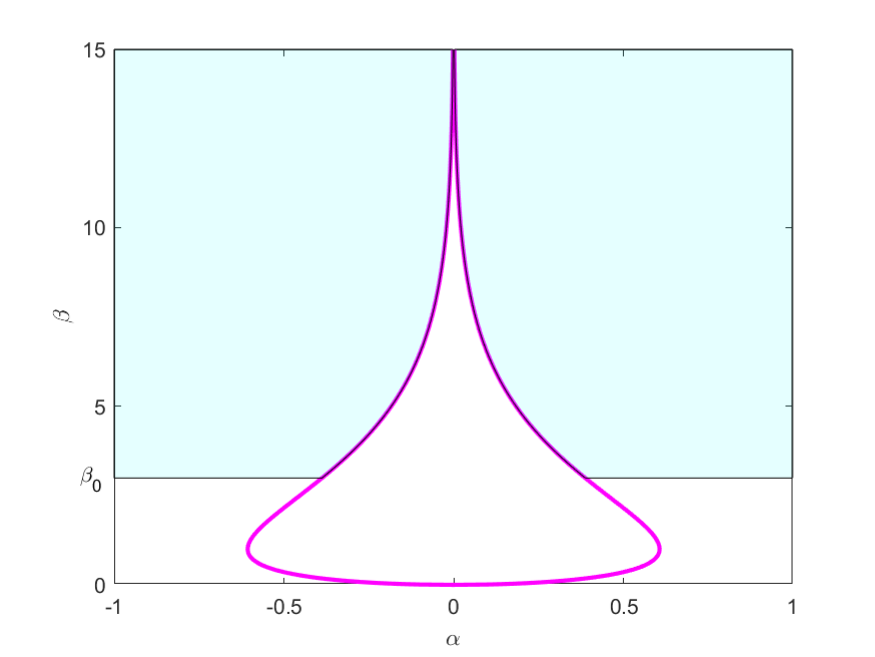}
\caption{$m=0$.}
\end{subfigure}
~
\begin{subfigure}[c]{0.35\textwidth}
 \centering
\includegraphics[width=1\textwidth]{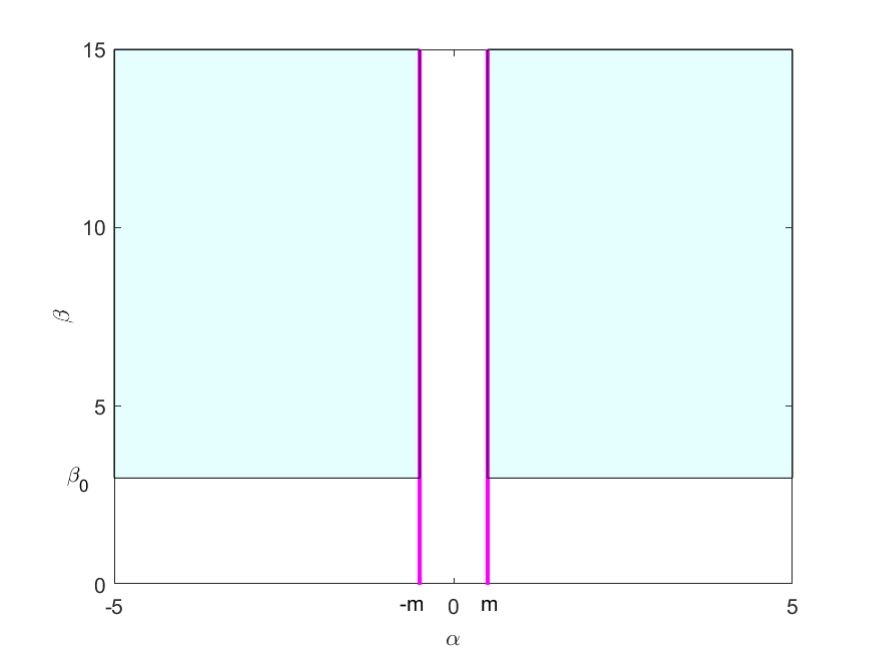}
\caption{$m>0$.}
\end{subfigure}
\captionsetup{singlelinecheck=off}
\caption[foo bar]{Illustration of the shapes of $\Omega$ (in cyan color) with the logarithmic potential $V$ given in \eqref{Eq Logarithm C} in two cases:
\begin{enumerate}[(A)]
  \item $m=0$: The \enquote{wine decanter} curve is the graph of $\vert \alpha \vert = \beta^{\frac{1}{2}}\exp\left(\frac{-\beta}{2}\right)$,
  \item $m>0$: The vertical lines are the graphs of $\vert \alpha \vert= m$.
\end{enumerate}
}\label{fig:Pic1 C}
\end{figure}
%\begin{figure}[h] 
%\centering
%\includegraphics[width=0.6\textwidth ,height = 7 cm]{Log.eps}
%\caption{Illustration of the shape of $\Omega$ (in cyan color) with the logarithmic potential $V$ given in \eqref{Eq Logarithm C}. The \enquote{wine decanter} curve depicts the graph of $\vert \alpha \vert = \beta^{\frac{1}{2}}\exp\left(\frac{-\beta}{2}\right)$. }
%\label{fig:Pic1 C}
%\end{figure}
\end{example}

\begin{example}\label{Example Pol}
Next, we want to study the polynomial-like potential on $\R_{+}$ in the following form
\begin{equation}\label{Eq Polynomial C}
V(x) = \begin{pmatrix}
i x^{\gamma} && v(x)\\
v(x)  && i x^{\gamma}
\end{pmatrix},
\end{equation}
where $\gamma>0$, $v\in W_{\mathrm{loc}}^{N,2}(\R_{+})$ with $N\in \N_{1}$ such that $\vert v(x) \vert= o(x^{\gamma})$ for $x\to +\infty$. Then all the conditions of Assumption \ref{Assump General C} are satisfied with $\nu=-1$ and any $\varepsilon_{1}\in\left(0,\frac{\gamma+1}{2}\right)$. Given $\beta>0$, then $x_{\beta}>0$ is determined, see \eqref{Eq Turning point}, by
$x_{\beta}=\beta^{\frac{1}{\gamma}}.$
From the estimate \eqref{Eq Exponential C}, for any $c\in(0,1)$, the quantity $\kappa(\beta,c)$  has an exponentially decay, with some $C>0$,
\begin{align*}
\kappa(\beta,c) = \mathcal{O}\left(\exp\left(-C \beta^{\frac{\gamma+1}{\gamma}}\right)\right).
\end{align*}
Since $\mathrm{Re}\, V_{11}=\mathrm{Re}\, V_{22}=0$, the constraints in \eqref{Eq a 1 C} imposed on $\alpha$ are satisfied if and only if
\begin{equation}\label{Ex2 a 1}
\vert \alpha \vert >m.
\end{equation}
We can compute directly the left-hand side of \eqref{Eq a 2 C} as a function of $\beta$:
$ \left(\beta x_{\beta}^{\nu}\right)^{\frac{1}{2}}=\beta^{\frac{1}{2}\frac{\gamma-1}{\gamma}}.$
We consider two cases:
\begin{enumerate}[i)]
\item If $\gamma\geq 1$, we may take $\alpha$ as (with $\varepsilon>0$)
\begin{equation*}
\vert \alpha \vert\gtrsim  \beta^{\frac{1}{2}\frac{\gamma-1}{\gamma}+\varepsilon}.
\end{equation*}
Next, we are concerned about how small $\varepsilon$ can be chosen such that we have the decay of $\sigma^{N}(\beta)$. 
For $\beta>0$ large enough, we have 
\begin{equation*}
\sigma^{(N)}(\beta) \lesssim \left\{
\begin{aligned}
&\beta^{-\left( 1+\frac{1}{\gamma}\right)N-\frac{1}{\gamma}}, \qquad &\text{if } \vert \alpha \vert>\beta,\\
&\beta^{-\varepsilon(2N+1)+ \frac{1}{2}\frac{\gamma-1}{\gamma}}, \qquad &\text{if } \vert \alpha \vert\leq \beta.
\end{aligned}
\right.
\end{equation*}
In order to have a decay for $\sigma^{(N)}(\beta)$ as $\beta\to+\infty$, we choose 
\begin{equation*}
\varepsilon=\frac{1}{4N+2}\frac{\gamma-1}{\gamma}+\eta,\qquad \text{ with } \eta>0.
\end{equation*}
In summary, for any $ \eta >0$, there exist $\beta_{0}>0$ and a family $(\Psi_{\lambda,N})_{\lambda\in \Omega}$ such that
\begin{equation*}
\frac{\Vert (H_{V}- \lambda)\Psi_{\lambda,N} \Vert}{\Vert \Psi_{\lambda,N} \Vert} =\left\{
\begin{aligned}
&\mathcal{O}\left(\beta^{-\left( 1+\frac{1}{\gamma}\right)N-\frac{1}{\gamma}}\right), \qquad &\text{if } \vert \alpha \vert>\beta,\\
&\mathcal{O}\left(\beta^{-\eta(2N+1)}\right), \qquad &\text{if } \vert \alpha \vert\leq \beta,
\end{aligned}
\right.
\end{equation*}
has the decay at the polynomial rate and the pseudospectral region $\Omega$ is defined as
\begin{equation}\label{Eq Pol Omega}
\Omega:=\left\{\alpha+i\beta \in \C: \beta> \beta_{0} \text{ and } \vert \alpha \vert\gtrsim  \beta^{\left(\frac{1}{2}+\frac{1}{4N+2}\right)\frac{\gamma-1}{\gamma}+\eta} \right\}.
\end{equation}
\item If $0<\gamma<1$, the condition \eqref{Eq a 2 C} is equivalent to
\begin{equation}\label{Ex2 a 2}
\vert \alpha \vert \gtrsim \beta^{\frac{1}{2}\frac{\gamma-1}{\gamma}}.
\end{equation}
Depending on the value of $m$, we can compare compare two conditions \eqref{Ex2 a 1} and \eqref{Ex2 a 2} as $\beta\to +\infty$. We have two cases as below.
 \begin{enumerate}
 \item If $m=0$, by the choice \eqref{Ex2 a 2}, there exists $\beta_{0}>0$  such that, for all $\lambda$ belonging to the set
\begin{equation}\label{Eq Pol Omega 2}
\Omega= \left\{\alpha+i\beta \in \C: \beta> \beta_{0} \text{ and } \vert \alpha \vert> \beta^{\frac{1}{2}\frac{\gamma-1}{\gamma}} \right\},
\end{equation}
our problem has the decay
\[\frac{\Vert (H_{V}- \lambda)\Psi_{\lambda,N} \Vert}{\Vert \Psi_{\lambda,N} \Vert} = \left\{
\begin{aligned}
&\mathcal{O}\left(\beta^{-\left( 1+\frac{1}{\gamma}\right)N-\frac{1}{\gamma}}\right), \qquad &\text{if } \vert \alpha \vert>\beta,\\
&\mathcal{O}\left(\beta^{-\frac{1}{2}\frac{1-\gamma}{\gamma}}\right), \qquad &\text{if } \vert \alpha \vert\leq \beta.
\end{aligned}
\right.\]
\item  If $m>0$, by the choice \eqref{Ex2 a 1}, there exists $\beta_{0}>0$  such that, for all $\lambda$ belonging to the set
\begin{equation}\label{Eq Pol Omega 3}
\Omega= \left\{\alpha+i\beta \in \C: \beta> \beta_{0} \text{ and } \vert \alpha \vert> m \right\},
\end{equation}
our problem has the decay
\[\frac{\Vert (H_{V}- \lambda)\Psi_{\lambda,N} \Vert}{\Vert \Psi_{\lambda,N} \Vert} =\left\{
\begin{aligned}
&\mathcal{O}\left(\beta^{-\left( 1+\frac{1}{\gamma}\right)N-\frac{1}{\gamma}}\right), \qquad &\text{if } \vert \alpha \vert>\beta,\\
&\mathcal{O}\left(\beta^{-(N+1)\frac{1-\gamma}{\gamma}}\right), \qquad &\text{if } \vert \alpha \vert\leq \beta.
\end{aligned}
\right.\]
\end{enumerate}
\end{enumerate}

The reader is also invited to compare our results with the application of the same method for Schr\"{o}dinger operators with the polynomial potential $V(x):=ix^{\gamma}$ with $\gamma\geq 1$ in \cite[Ex.~5.3]{KS19}. We see that the pseudospectra of the Dirac operators 
are larger than those of the Schr\"{o}dinger operators. While the outcome $\alpha$ is kept between two curves in the Schr\"{o}dinger case, the outcome $\alpha$ in the Dirac case is just bounded from below by a curve. Technically, 
this can be explained by the appearance of $\lambda$ in the denominator of the estimate $\mathrm{Re}\, \left(\lambda \psi_{-1}^{(1)}\right)$ for the Schr\"{o}dinger operator, in which the above bound of $\alpha$ is employed (to be clear, \cite[Est.~(5.9)]{KS19}). Furthermore, \cite[Ex. 5.3]{KS19} only investigates the case $\gamma\geq 1$, while ours produce the results for even $0<\gamma<1$. Finally, the decay of the problem in \cite{KS19} is attained only when $N$ large enough, while our method gives us the decay even for small $N$. In Figure~\ref{fig: Pol}, we see some representatives for the shape of $\Omega$ corresponding to the power $\gamma$ and the value of $m$ (when $0<\gamma<1$). For the faster growing of the polynomial, the pseudospectrum region $\Omega$ stands further away the axis $\alpha=0$.
\begin{figure}[h!]
 \centering
\begin{subfigure}[c]{0.3\textwidth}
\centering
\includegraphics[width=1\textwidth]{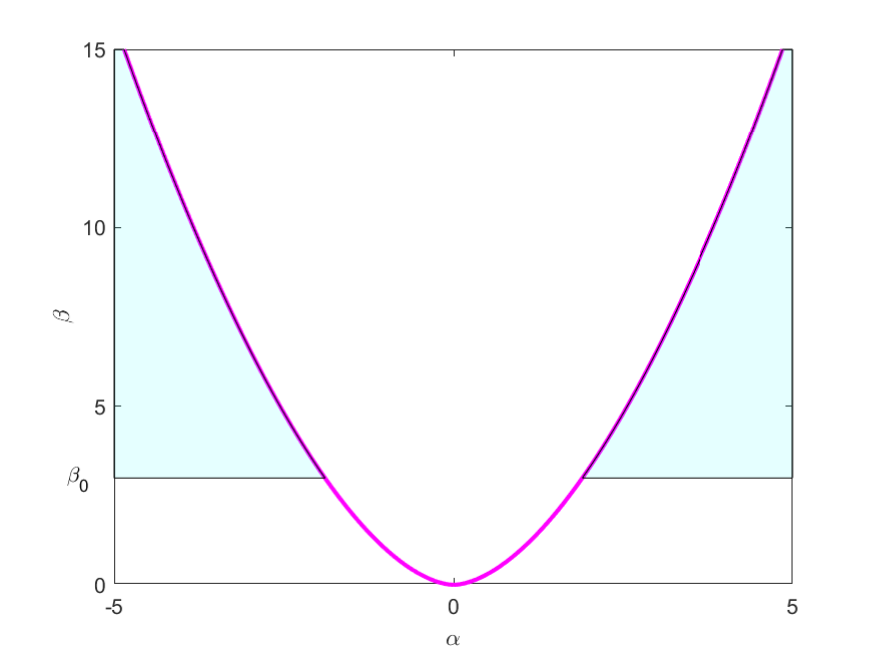}
\caption{$\gamma=2$.}
\end{subfigure}
~
\begin{subfigure}[c]{0.3\textwidth}
\centering
\includegraphics[width=1\textwidth]{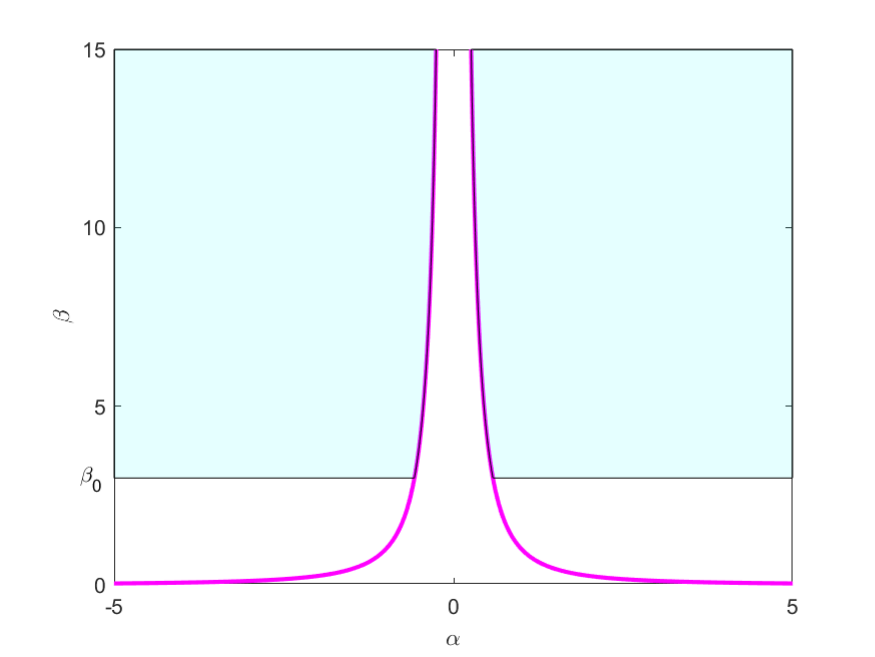}
\caption{$\gamma=\frac{1}{2}$ and $m=0$.}
\end{subfigure}
~
\begin{subfigure}[c]{0.3\textwidth}
 \centering
\includegraphics[width=1\textwidth]{Pol3}
\caption{$\gamma=\frac{1}{2}$ and $m>0$.}
\end{subfigure}
\captionsetup{singlelinecheck=off}
\caption[foo bar]{Illustrations of the shapes of $\Omega$ (in cyan color) associated with the potential $V$ given in \eqref{Eq Polynomial C} corresponding to $\gamma$ and the value of $m$. The magenta curves are, respectively, 
\begin{enumerate}[(A)]
  \item $\gamma=2$: $\vert \alpha \vert=\beta^{\frac{1}{4}+\frac{1}{3}}$, here we took $\eta$ in \eqref{Eq Pol Omega} such that $\frac{1}{8n+4}+\eta=\frac{1}{3}$,
  \item $\gamma=\frac{1}{2}$ and $m=0$: $\vert \alpha \vert=\beta^{-\frac{1}{2}}$,
  \item $\gamma=\frac{1}{2}$ and $m>0$: $\vert \alpha \vert= m$.
\end{enumerate}}\label{fig: Pol}
\end{figure}
\end{example}

\begin{example}
The next example that we want to study is the potential whose $\mathrm{Im}\, V_{11}=\mathrm{Im}\, V_{22}$ is an exponential function:
\begin{equation}\label{Eq Exp C}
 V(x) := \begin{pmatrix}
i e^{x^{\gamma}} && u(x)\\
u(x)  && i e^{x^{\gamma}}
\end{pmatrix},
\end{equation}
where $\gamma>0$, $u\in W_{\mathrm{loc}}^{N,2}(\R_{+})$, with $N\in \N_{1}$, such that $\vert u(x) \vert= o(e^{x^{\gamma}})$ as $x\to +\infty$. All conditions of Assumption \ref{Assump General C} are satisfied with $\nu=\gamma-1$ and any $\varepsilon_{1}>0$. Given $\beta>1$, the turning point $x_{\beta}>0$ is determined by the relation
$x_{\beta}=\ln(\beta)^{\frac{1}{\gamma}}.$
For any $c\in(0,1)$, thanks to \eqref{Eq Exponential C}, there is some $C>0$ such that
\begin{align*}
\kappa(\beta,c) = \mathcal{O}\left(\exp\left(-C \beta \ln(\beta)^{\frac{1}{\gamma}-1}\right)\right).
\end{align*}
Again, since $\mathrm{Re}\, V_{11}=\mathrm{Re}\, V_{22}=0$ the conditions in \eqref{Eq a 1 C} are equivalent to the fact \eqref{Ex2 a 1}. We may take $\alpha$ that is, with some $\varepsilon>0$,
$ \vert \alpha \vert \gtrsim \beta^{\frac{1}{2}+\varepsilon} \ln(\beta)^{\frac{1}{2}\frac{\gamma-1}{\gamma}}$
such that the condition \eqref{Ex2 a 2} is satisfied. Under this choice of $\alpha$, we can bound above $\sigma^{(N)}(\beta)$ 
\begin{align*}
\sigma^{(N)}(\beta) &\lesssim \left\{
\begin{aligned}
&\beta^{-N}\ln(\beta)^{\frac{2(\gamma-1)}{\gamma}N}, \qquad &&\text{ if } \vert \alpha \vert>\beta \text{ and } \gamma \geq 1,\\
&\beta^{-N} \ln(\beta)^{\frac{\gamma-1}{\gamma}(N+1)}, \qquad &&\text{ if } \vert \alpha \vert>\beta \text{ and } \gamma <1,\\
&\beta^{\frac{1}{2}-\varepsilon(2N+1)} \ln(\beta)^{\frac{1}{2}\frac{\gamma-1}{\gamma}}, \qquad &&\text{ if } \vert \alpha \vert\leq \beta .
\end{aligned}
\right.
\end{align*}
In order to get the decay of $\sigma^{(N)}(\beta)$ as $\beta\to+\infty$, we choose
\[ \varepsilon=\frac{1}{4N+2}+\eta, \qquad \text{ with } \eta>0. \]
In conclusion: for any $\eta>0$,  there exist $\beta_{0}>0$ and a family $(\Psi_{\lambda,N})_{\lambda\in \Omega}$ such that
\[\frac{\Vert (H_{V}- \lambda)\Psi_{\lambda,N} \Vert}{\Vert \Psi_{\lambda,N} \Vert} = \left\{
\begin{aligned}
&\mathcal{O}\left(\beta^{-N}\ln(\beta)^{\frac{2(\gamma-1)}{\gamma}N}\right), \qquad &&\text{ if } \vert \alpha \vert>\beta \text{ and } \gamma \geq 1,\\
&\mathcal{O}\left(\beta^{-N} \ln(\beta)^{\frac{\gamma-1}{\gamma}(N+1)}\right), \qquad &&\text{ if } \vert \alpha \vert>\beta \text{ and } \gamma <1,\\
&\mathcal{O}\left(\beta^{-\eta(2N+1)} \ln(\beta)^{\frac{1}{2}\frac{\gamma-1}{\gamma}}\right), \qquad &&\text{ if } \vert \alpha \vert\leq \beta,
\end{aligned}
\right.\]
where
\begin{equation}\label{Eq Exp Omega C}
\Omega:=\left\{ \alpha+i\beta \in \C: \beta> \beta_{0} \text{ and } \vert \alpha \vert \gtrsim   \beta^{\frac{1}{2}+\frac{1}{4N+2}+\eta} \ln(\beta)^{\frac{1}{2}\frac{\gamma-1}{\gamma}}\right\}.
\end{equation}
In Figure~\ref{Fig: Exp}, some sketches are created for the the imagination of the pseudospectral region $\Omega$ in the exponential cases. As in the polynomial cases, the higher $\gamma$ is, the further $\Omega$ stays away from the axis $\alpha=0$.
\begin{figure}[h!]
 \centering
 \begin{subfigure}[c]{0.3\textwidth}
\centering
\includegraphics[width=1\textwidth]{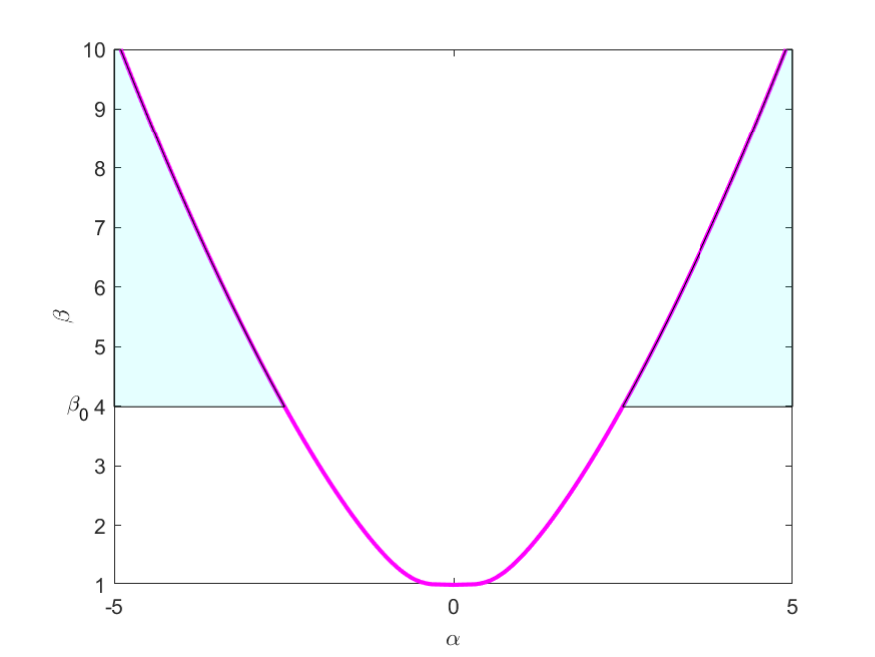}
\caption{$\gamma=2$.}
\end{subfigure}
~
\begin{subfigure}[c]{0.3\textwidth}
\centering
\includegraphics[width=1\textwidth]{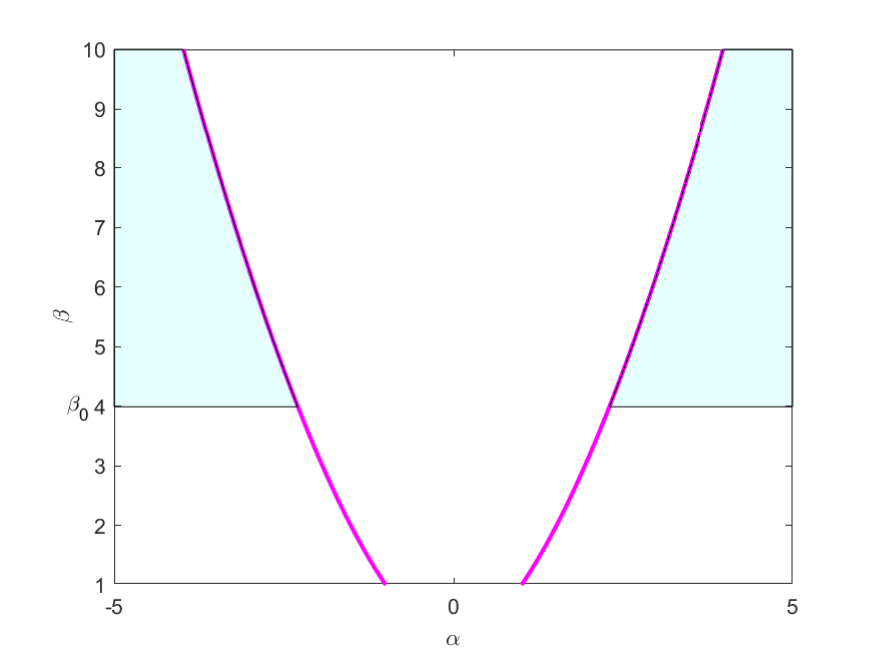}
\caption{$\gamma=1$.}
\end{subfigure}
~
\begin{subfigure}[c]{0.3\textwidth}
 \centering
\includegraphics[width=1\textwidth]{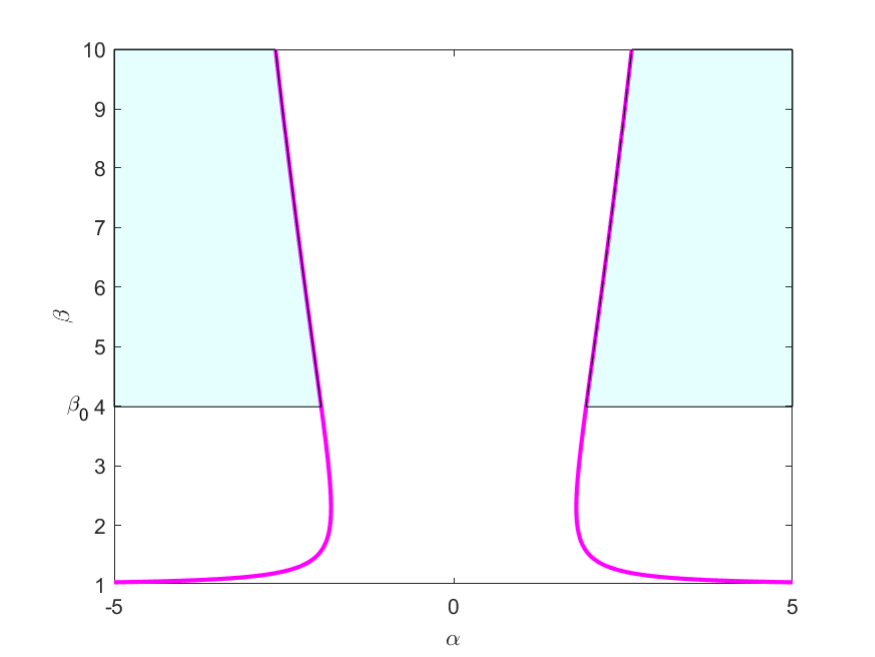}
\caption{$\gamma=\frac{1}{2}$.}
\end{subfigure}
\captionsetup{singlelinecheck=off}
\caption[foo bar]{Illustrations of the shapes of $\Omega$ (in cyan color) associated with the potential $V$ given in \eqref{Eq Exp C} corresponding to $\gamma$. Here we assume that $N$ is large enough such that we can take $\eta$ small enough satisfying $\frac{1}{4N+2}+\eta=\frac{1}{10}$ 
on the right-hand side of \eqref{Eq Exp Omega C}. The magenta curves are, respectively,
  \begin{enumerate}[(A)]
  \item $\gamma=2$: $\vert \alpha \vert=\beta^{\frac{1}{2}+\frac{1}{10}}\ln(\beta)^{\frac{1}{4}}$,
  \item $\gamma=1$: $\vert \alpha \vert=\beta^{\frac{1}{2}+\frac{1}{10}}$,
  \item $\gamma=\frac{1}{2}$: $\vert \alpha \vert=\beta^{\frac{1}{2}+\frac{1}{10}}\ln(\beta)^{-\frac{1}{2}}$.
  \end{enumerate}
}\label{Fig: Exp}
\end{figure}
\end{example}

%----------------------------------------------------------------------------------------------------------------------
%											Acknowledgement
%----------------------------------------------------------------------------------------------------------------------
\section*{Acknowledgement}
The authors were supported by the EXPRO grant
number 20-17749X of the Czech Science Foundation (GA\v{C}R).
%----------------------------------------------------------------------------------------------------------------------
%											References
%----------------------------------------------------------------------------------------------------------------------
\bibliographystyle{abbrv}
\bibliography{Ref22}
\end{document}